\newcommand{\topcaption}{%
 \setlength{\abovecaptionskip}{0.1cm}%
 \setlength{\belowcaptionskip}{0.1cm}%
 \caption}
\setlist[itemize]{leftmargin=0.4cm,itemsep=0.1cm,topsep=0.1cm}
\setlist{noitemsep}
\newtheorem{definition}{Definition}
\newtheorem{lemma}{Lemma}
\newtheorem{theorem}{Theorem}
\newtheorem{example}{Example}
\newtheorem{property}{Property}
\newcommand{\kw}[1]{{\ensuremath {\mathsf{#1}}}\xspace}
\newcommand{\kwnospace}[1]{{\ensuremath {\mathsf{#1}}}}
\newcommand{\stitle}[1]{\vspace{1ex} \noindent{\bf #1}}
\newcommand{\sstitle}[1]{\vspace{1ex} \noindent{\it #1}}
\newcommand{\rtitle}[1]{\vspace{1ex} \noindent{\textit{\textbf{#1}}}}
\newcommand{\reffig}[1]{Fig.~\ref{fig:#1}}
\newcommand{\refsec}[1]{Section~\ref{sec:#1}}
\newcommand{\refsubsec}[1]{Subsection~\ref{subsec:#1}}
\newcommand{\reftab}[1]{Table~\ref{tab:#1}}
\newcommand{\refalg}[1]{Algorithm~\ref{alg:#1}}
\newcommand{\refeq}[1]{Eq.~\ref{eq:#1}}
\newcommand{\refdef}[1]{Definiton~\ref{def:#1}}
\newcommand{\refthm}[1]{Theorem~\ref{thm:#1}}
\newcommand{\reflem}[1]{Lemma~\ref{lem:#1}}
\newcommand{\vc}[1]{\kappa(#1)\xspace}
\newcommand{\ec}[1]{\kappa'(#1)\xspace}
\newcommand{\mindegree}[1]{\delta(#1)\xspace}
\newcommand{\vcc}{{VCC}\xspace}
\newcommand{\vccs}{{VCCs}\xspace}
\newcommand{\ecc}{{ECC}\xspace}
\newcommand{\eccs}{{ECCs}\xspace}
\newcommand{\ccs}{{cores}\xspace}
\newcommand{\opartition}{\kw{OVERLAP\textrm{-}PARTITION}}
\newcommand{\cut}{{\cal S}\xspace}
\newcommand{\gdep}{g\textrm{-}deposit}
\newcommand{\prufunc}{\kw{SWEEP}}
\newcommand{\frk}{\kw{KVCC\textrm{-}ENUM}}
\newcommand{\fcbase}{\kw{GLOBAL\textrm{-}CUT}}
\newcommand{\fcstar}{\kw{GLOBAL\textrm{-}CUT^*}}
\newcommand{\lcut}{\kw{LOC\textrm{-}CUT}}
\newcommand{\bsa}{\kw{VCCE}}
\newcommand{\nsa}{\kw{VCCE\textrm{-}N}}
\newcommand{\gsa}{\kw{VCCE\textrm{-}G}}
\newcommand{\mra}{\kw{VCCE^*}}
\newcommand{\sfd}{\kwnospace{Stanford}\xspace}
\newcommand{\dblp}{\kwnospace{DBLP}\xspace}
\newcommand{\nd}{\kwnospace{ND}\xspace}
\newcommand{\google}{\kwnospace{Google}\xspace}
\newcommand{\cit}{\kwnospace{Cit}\xspace}
\newcommand{\cnr}{\kwnospace{Cnr}\xspace}
\newcommand{\ytb}{\kwnospace{Youtube}\xspace}
\title{Enumerating k-Vertex Connected Components in Large Graphs}
\author{
	Dong Wen$^{\natural}$, Lu Qin$^{\natural}$, Xuemin Lin$^{\ddag}$, Ying Zhang$^{\natural}$,  and Lijun Chang$^{\ddag}$ \vspace{1mm}\\
	\affaddr{$^{\natural}$CAI, University of Technology, Sydney, Australia}\\
	\affaddr{$^{\ddag}$The University of New South Wales, Australia}\\
	\ttfamily
	$^{\natural}$dong.wen@student.uts.edu.au; \{lu.qin, ying.zhang\}@uts.edu.au;\\
	\ttfamily
	$^{\ddag}$\{lxue, ljchang\}@cse.unsw.edu.au;
}
\begin{document}
\toappear{}
\maketitle


\begin{abstract}
Cohesive subgraph detection is an important graph problem that is widely applied in many application domains, such as social community detection, network visualization, and network topology analysis. Most of existing cohesive subgraph metrics can guarantee good structural properties but may cause the free-rider effect. Here, by free-rider effect, we mean that some irrelevant subgraphs are combined as one subgraph if they only share a small number of vertices and edges. In this paper, we study $k$-vertex connected component ($k$-\vcc) which can effectively eliminate the free-rider effect but less studied in the literature. A $k$-\vcc is a connected subgraph in which the removal of any $k-1$ vertices will not disconnect the subgraph. In addition to eliminating the free-rider effect, $k$-\vcc also has other advantages such as bounded diameter, high cohesiveness, bounded graph overlapping, and bounded subgraph number. We propose a polynomial time algorithm to enumerate all $k$-\vccs of a graph by recursively partitioning the graph into overlapped subgraphs. We find that the key to improving the algorithm is reducing the number of local connectivity testings. Therefore, we propose two effective optimization strategies, namely neighbor sweep and group sweep, to largely reduce the number of local connectivity testings. We conduct extensive performance studies using seven large real datasets to demonstrate the effectiveness of this model as well as the efficiency of our proposed algorithms.
\end{abstract}



\section{Introduction}
\label{sec:introduction}
Graphs have been widely used to represent the relationships of entities in the real world. With the proliferation of graph applications, research efforts have been devoted to many fundamental problems in mining and analyzing graph data. Recently, cohesive subgraph detection has drawn intense research interest \cite{PattilloYB13}. Such problem can be widely adopted in many real-world applications, such as community detection \cite{Cui2014, Huang2014}, network clustering \cite{verma2012}, graph visualization \cite{kalvarez2005, Zhang2012}, protein-protein network analysis \cite{bader2003}, and system analysis \cite{zhang2010}.

In the literature, a large number of cohesive subgraph models have been proposed. Among them, a clique, in which every pair of vertices are connected, guarantees perfect familiarity and reachability among vertices. Since the definition of the clique is too strict, clique-relaxation models are proposed in the literature including $s$-clique \cite{Mokken1979}, $s$-club \cite{Mokken1979}, $\gamma$-quasi-clique \cite{Zeng2006} and $k$-plex \cite{Berlowitz2015, Stephen1978}. Nevertheless, these models require exponential computation time and may lack guaranteed cohesiveness. To conquer this problem, other models are proposed such as $k$-core \cite{Batagelj2003}, $k$-truss \cite{cohen2008trusses, Wang2012, Shao2014}, $k$-mutual-friend subgraph \cite{zhao2012large} and $k$-\ecc ($k$-edge connected component) \cite{Zhou2012, Chang2013}, which require polynomial computation time and guarantee decent cohesiveness. For example, a $k$-core guarantees that every vertex has a degree at least $k$ in the subgraph, and a $k$-\ecc guarantees that the subgraph cannot be disconnected after removing any $k-1$ edges.

\begin{figure}[t]
\begin{center}
\includegraphics[width=0.99\hsize]{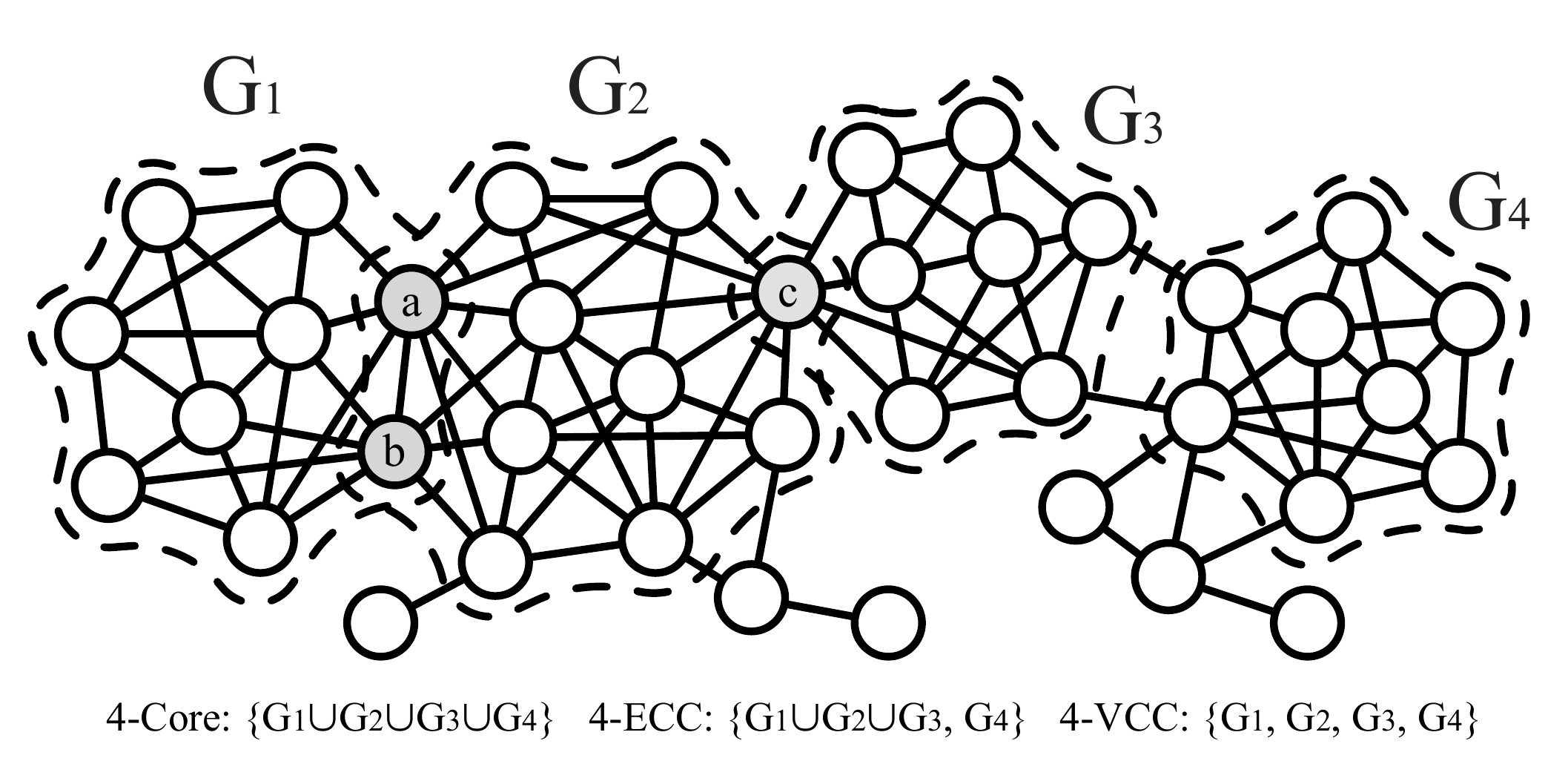}
\topcaption{Cohesive subgraphs in graph $G$.}
\label{fig:demo}
\end{center}
\end{figure}

\stitle{Motivation.} Despite the good structural guarantees in existing cohesive subgraph models,  we find that most of these models cannot effectively eliminate the free-rider effect. Here, by free-rider effect, we mean that some irrelevant subgraphs are combined as one result subgraph if they only share a small number of vertices and edges. To illustrate the free rider effect, we consider a graph $G$ shown in \reffig{demo}, which includes four subgraphs $G_1$, $G_2$, $G_3$, and $G_4$. The four subgraphs are loosely connected because: $G_1$ and $G_2$ share a single edge $(a,b)$; $G_2$ and $G_3$ share a single vertex $c$; and $G_3$ and $G_4$ do not share any edge or vertex. Let $k=4$. Based on the $k$-core model, there is only one $k$-core, which is the union of the four subgraphs $G_1$, $G_2$, $G_3$, and $G_4$, along with the two edges connecting $G_3$ and $G_4$. Based on the $k$-\ecc model, there are two $k$-\eccs, which are $G_4$ and the union of three subgraphs $G_1$, $G_3$, and $G_3$. Motivated by this, we aim to detect cohesive subgraphs and effectively eliminate the free-rider effect, i.e., to accurately detect $G_1$, $G_2$, $G_3$ and $G_4$ as result cohesive subgraphs in \reffig{demo}.

In the literature, a recent work \cite{Wu2015} aims to eliminate the free-rider effect in local community search. Given a query vertex, the algorithm in \cite{Wu2015} tries to eliminate the free-rider effect by weighting each vertex in the graph by its proximity to the query vertex. Based on the vertex weights, a query-biased subgraph is returned by considering both the density and the proximity to the query vertex. Unfortunately, such a query-biased local community model cannot be used in cohesive subgraph detection. 

\stitle{$k$-Vertex Connected Component.} Vertex connectivity, which is also named structural cohesion \cite{Moody00structuralcohesion}, is \textit{the minimum number of vertices that need to be removed to disconnect the graph}. It has been proved as an outstanding metric to evaluate the cohesiveness of a social group \cite{Moody00structuralcohesion, white2001}. We find this sociological conception can be used to detect cohesive subgraphs and effectively eliminate the free-rider effect. Given an integer $k$, a $k$-vertex connected component ($k$-\vcc) is \textit{a maximal connected subgraph in which the removal of any $k-1$ vertices cannot disconnect the subgraph}. Given a graph $G$ and a parameter $k$, we aim to detect all $k$-\vccs in $G$. In \reffig{demo} and $k=4$, there are four $k$-\vccs $G_1$, $G_2$, $G_3$, and $G_4$ in $G$. The subgraph formed by the union of $G_1$ and $G_2$ is not a $k$-\vcc because it will be disconnected by removing two vertices $a$ and $b$.

\stitle{Effectiveness.} $k$-\vcc effectively eliminates the free-rider effect by ensuring that each $k$-\vcc cannot be disconnected by removing any $k-1$ vertices. In addition, $k$-\vcc also have the following four good structural properties. 
\begin{itemize}
\item \textit{Bounded Diameter.} The diameter of a $k$-\vcc $G'(V', E')$ is bounded by $\lfloor \frac{|V'|-2}{\vc{G'}} \rfloor+1$ where $\vc{G'}$ is the vertex connectivity of $G'$. For example, we consider the $4$-\vcc $G_1$ with $9$ vertices in \reffig{demo}. The diameter of $G_1$ is bounded by $2$. 

\item \textit{High Cohesiveness.} We can guarantee that a $k$-\vcc is nested in a $k$-\ecc and a $k$-core. Therefore, a $k$-\vcc is generally more cohesive and inherits all the structural properties of a $k$-core and a $k$-\ecc. For example, each of the four $4$-\vccs in \reffig{demo} is also a $4$-core and a $4$-\ecc.

\item \textit{Subgraph Overlapping.} Unlike $k$-core and $k$-\ecc, $k$-\vcc model allows overlapping between $k$-\vccs, and we can guarantee that the number of overlapped vertices for any pair of $k$-\vccs is smaller than $k$. For example, the two $4$-\vccs $G_1$ and $G_2$ in \reffig{demo} overlap two vertices and an edge. 

\item \textit{Bounded Subgraph Number.} Even with overlapping, we can bound the number of $k$-\vccs to be linear to the number of vertices in the graph. This indicates that redundancies in the $k$-\vccs are limited. For example, the graph shown in \reffig{demo} contains four $4$-\vccs with three vertices $a$, $b$, and $c$ duplicated.  
\end{itemize}

\noindent The details of the four properties can be found in \refsec{pre:why}.

\stitle{Efficiency.} In this paper, we propose an algorithm to enumerate all $k$-\vccs in a given graph $G$ via overlapped graph partition. Briefly speaking, we aim to find a vertex cut with fewer than $k$ vertices in $G$. Here, a vertex cut of $G$ is a set of vertices the removal of which disconnects the graph. With the vertex cut, we can partition $G$ into overlapped subgraphs each of which contains all the vertices in the cut along with their induced edges. We recursively partition each of the subgraphs until no such cut exists. In this way, we compute all $k$-\vccs.  For example, suppose the graph $G$ is the union of $G_1$ and $G_2$ in \reffig{demo}, $k=4$, we can find a vertex cut with two vertices $a$ and $b$. Thus we partition the graph into two subgraphs $G_1$ and $G_2$ that overlap two vertices $a$, $b$ and an edge $(a,b)$. Since neither $G_1$ nor $G_2$ has any vertex cut with fewer then $k$ vertices, we return $G_1$ and $G_2$ as the final $k$-\vccs.  We theoretically analyze our algorithm and prove that \textit{the set of $k$-\vccs can be enumerated in polynomial time}. More details can be found in \refsec{base:complexity}.

Nevertheless, the above algorithm has a large improvement space. The most crucial operation in the algorithm is called local connectivity testing, which given two vertices $u$ and $v$, tests whether $u$ and $v$ can be disconnected in two components by removing at most $k-1$ vertices from $G$. To find a vertex cut with fewer than $k$ vertices, we need to conduct local connectivity testing between a source vertex $s$ and each of other vertices $v$ in $G$ in the worst case. Therefore, \textit{the key to improving algorithmic efficiency is to reduce the number of local connectivity testings in a graph.} Given a source vertex $s$, if we can avoid testing the local connectivity between $s$ and a certain vertex $v$, we call it as we can \textit{sweep vertex $v$}. We propose two strategies to sweep vertices.

\begin{itemize}
\item \textit{Neighbor Sweep.} If a vertex has certain properties, all its neighbors can be swept. Therefore, we call this strategy neighbor sweep. Moreover, we maintain a deposit value for each vertex, and once we finish testing or sweep a vertex, we increase the deposit values for its neighbors. If the deposit value of a vertex satisfies certain condition, such vertex can also be swept. 

\item \textit{Group Sweep.} We introduce a method to divide vertices in a graph into disjoint groups. If a vertex in a group has certain properties, vertices in the whole group can be swept. We call this strategy group sweep. Moreover, we maintain a group deposit value for each group. Once we test or sweep a vertex in the group, we increase the corresponding group deposit value. If the group deposit value satisfies certain conditions, vertices in such whole group can also be swept. 
\end{itemize}

\noindent Even though these two strategies are studied independently, they can be used together and boost the effectiveness of each other. With these two vertex sweep strategies, we can significantly reduce the number of local connectivity testings in the algorithm. Experimental results show the excellent performance of our sweep strategies. More details can be found in \refsec{opt} and \refsec{exp}.

\stitle{Contributions.} We make the following contributions in this paper.

\sstitle{(1) Theoretical analysis for the effectiveness of $k$-\vcc.} We present several properties to show the excellent quality of $k$-vertex connected component. Although the concept of vertex connectivity has been studied in the literature to evaluate the cohesiveness of a social group, this is the first work that aims to enumerate all $k$-\vccs and considers free-rider effect elimination in cohesive subgraph detection to the best of our knowledge.

\sstitle{(2) A polynomial time algorithm based on overlapped graph partition.} We propose an algorithm to compute all $k$-\vccs in a graph $G$. The algorithm recursively divides the graph into overlapped subgraphs until each subgraph cannot be further divided. We prove that our algorithm terminates in polynomial time. 

\sstitle{(3) Two effective pruning strategies.} We design two pruning strategies, namely neighbor sweep and group sweep, to largely reduce the number of local connectivity testings and thus significantly speed up the algorithm. 

\sstitle{(4) Extensive performance studies.} We conduct extensive performance studies on 7 real large graphs to demonstrate the effectiveness of $k$-\vcc and the efficiency of our proposed algorithms. 

\stitle{Outline.} The rest of this paper is organized as follows. \refsec{pre} formally defines the problem and presents its rationale. \refsec{kvcc} gives a framework to compute all $k$-\vccs in a given graph. \refsec{base} gives a basic implementation of the framework and analyzes the time complexity of the algorithm. \refsec{opt} introduces several strategies to speed up the algorithm. \refsec{exp} evaluates the model and algorithms using extensive experiments. \refsec{relatedwork} reviews related works and \refsec{conclusion} concludes the paper.


\section{Preliminary}
\label{sec:pre}
\subsection{Problem Statement}
\label{sec:pre:ps}

In this paper, we consider an undirected and unweighted graph $G(V,E)$, where $V$ is the set of vertices and $E$ is the set of edges.  We also use $V(G)$ and $E(G)$ to denote the set of vertices and edges of graph $G$ respectively. The number of vertices and the number of edges are denoted by $n = |V|$ and $m = |E|$ respectively. For simplicity and without loss of generality, we assume that $G$ is a connected graph. We denote neighbor set of a vertex $u$ by $N(u)$, i.e., $N(u) = \{ u \in V| (u,v) \in E \}$, and degree of $u$ by $d(u) = |N(u)|$. Given two graphs $g$ and $g'$, we use $g\subseteq g'$ to denote that $g$ is a subgraph of $g'$. Given a set of vertices $V_s$, the induced subgraph $G[V_s]$ is a subgraph of $G$ such that $G[V_s] = (V_s,\{(u,v) \in E | u,v \in V_s\})$. For any two subgraphs $g$ and $g'$ of $G$, we use $g\cup g'$ to denote the union of $g$ and $g'$, i.e., $g\cup g'=(V(g)\cup V(g'), E(g)\cup E(g'))$. Before stating the problem, we firstly give some basic definitions.

\begin{definition}
\textsc{(Vertex Connectivity)} The vertex connectivity of a graph $G$, denoted by $\vc{G}$, is defined as the minimum number of vertices whose removal results in either a disconnected graph or a trivial graph (a single-vertex graph).
\end{definition}

\begin{definition}
\label{def:kcon}
\textsc{(k-Vertex Connected)} A graph $G$ is $k$-vertex connected if: 1) $|V(G)|>k$; and 2) the remaining graph is still connected after removing any ($k-1$) vertices. That is, $\vc{G} \ge k$.
\end{definition}

We use the term $k$-\textit{connected} for short when the context is clear. It is easy to see that any nontrivial connected graph is at least $1$-connected. Based on \refdef{kcon}, we define the $k$-\textit{Vertex Connected Component} ($k$-\vcc) as follows.

\begin{definition}
\label{def:kvcc}
\textsc{(k-Vertex Connected Component)} Given a graph $G$, a subgraph $g$ is a $k$-vertex connected component ($k$-\vcc) of $G$ if: 1) $g$ is $k$-vertex connected; and 2) $g$ is maximal. That is, $\nexists g'\subseteq G$, such that $\vc{g'} \ge k$, $g \subseteq g'$.  
\end{definition}

\stitle{Problem Definition.} Given a graph $G$ and an integer $k$, we denote the set of all $k$-\vccs of $G$ as $\vcc_k(G)$. In this paper, we study the problem of efficiently enumerating all $k$-\vccs of $G$, i.e, to compute $\vcc_k(G)$ . 

\begin{example}
For the graph $G$ in \reffig{demo}, given parameter $k=4$, there are four $4$-\vccs: $\vcc_4(G)=\{$$G_1$, $G_2$, $G_3$, $G_4\}$. We cannot disconnect each of them by removing any $3$ or fewer vertices. Subgraph $G_1\cup G_2$ is not a $4$-\vcc because it will be disconnected after removing two vertices $a$ and $b$.
\end{example}

\subsection{Why k-Vertex Connected Component?}
\label{sec:pre:why}

$k$-\vcc model effectively reduces the free-rider effect by ensuring that each $k$-\vcc cannot be disconnected by removing any $k-1$ vertices. In this subsection, we show other good structural properties of $k$-\vcc in terms of bounded diameter, high cohesiveness, bounded overlapping and bounded component number. None of other cohesive graph models, such as $k$-core and $k$-Edge Connected Component ($k$-\ecc) can achieve these four goals simultaneously.

\stitle{Diameter.} Before discussing the diameter of a $k$-\vcc, we first quote Global Menger's Theorem as follows.

\begin{theorem}
\label{thm:menger}
A graph is $k$-connected if and only if any pair of vertices $u$,$v$ is joined by at least $k$ vertex-independent $u$-$v$ paths. \cite{menger1927}
\end{theorem}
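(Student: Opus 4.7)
The plan is to derive the stated (global) version of Menger's theorem from the local vertex-version, which asserts that for any two non-adjacent vertices $u,v$ of a graph $H$, the maximum number of internally vertex-disjoint $u$-$v$ paths equals the minimum size of a vertex set $S\subseteq V(H)\setminus\{u,v\}$ whose removal separates $u$ from $v$. I would take the local version as the core tool (it is usually proved by an augmenting-path / max-flow argument on the vertex-split network, which is the real combinatorial content) and spend the write-up reducing the global statement to it.

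The easy direction ($\Leftarrow$) is almost immediate. Suppose every pair of vertices is joined by $k$ internally vertex-disjoint paths; then $|V(G)|\ge k+1$, and for any set $W\subseteq V(G)$ with $|W|\le k-1$ and any two surviving vertices $u,v\notin W$, the set $W$ can intersect at most $k-1$ of the $k$ disjoint $u$-$v$ paths, so at least one path survives in $G-W$, meaning $u$ and $v$ are still connected. Hence $G-W$ is connected, so $\vc{G}\ge k$.

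For the harder direction ($\Rightarrow$), assume $G$ is $k$-connected and fix two vertices $u,v$. If $u$ and $v$ are non-adjacent, then by the definition of $k$-connectivity every $u$-$v$ separator has size at least $k$, and the local Menger theorem yields $k$ internally vertex-disjoint $u$-$v$ paths, as required. The subtle case is when $u$ and $v$ are adjacent; this is the step I expect to be the main obstacle, because removing the edge $uv$ need not preserve $k$-connectivity, so local Menger cannot be applied naively to $G-uv$.

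To handle the adjacent case I would use the edge $uv$ as one of the $k$ paths and then show that $G-uv$ contains $k-1$ further internally disjoint $u$-$v$ paths. Equivalently, I would prove that no $S\subseteq V\setminus\{u,v\}$ with $|S|\le k-2$ separates $u$ from $v$ in $G-uv$, and then invoke local Menger on $G-uv$. Suppose for contradiction such an $S$ exists, and let $C_u,C_v$ be the components of $u,v$ in $G-uv-S$. Since $G$ is $k$-connected we have $d(u),d(v)\ge k$, so in $G-uv-S$ both $u$ and $v$ still have at least one neighbour; pick $w_u\in C_u\setminus\{u\}$ and $w_v\in C_v\setminus\{v\}$. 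Because $|S\cup\{u\}|\le k-1$, the graph $G-(S\cup\{u\})$ is connected, so it contains a $v$-$w_u$ path $P$; the path $P$ avoids $u$ and therefore cannot use the edge $uv$, which means $P$ lies in $G-uv-S$ and joins $v\in C_v$ to $w_u\in C_u$, contradicting $C_u\neq C_v$. Hence no such $S$ exists, local Menger supplies the $k-1$ disjoint paths in $G-uv$, and combining with the edge $uv$ completes the proof.
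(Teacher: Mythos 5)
The paper does not prove this statement at all: it is quoted as the classical Global Menger Theorem with a citation to Menger (1927), so there is no in-paper argument to match yours against. Your proposal is a correct, standard derivation of the global form from the local (non-adjacent-pair) form. The easy direction is fine, including the observation that $|V(G)|\ge k+1$ and that a set of at most $k-1$ vertices can destroy at most $k-1$ of the $k$ internally disjoint paths. Your treatment of the adjacent case is the right one and is carried out correctly: you use $\vc{G}\le\mindegree{G}$ (the paper's Theorem~\ref{thm:relation}) to guarantee $u$ keeps a neighbour $w_u$ in $C_u$ after deleting $S$ and the edge $uv$, and the $v$--$w_u$ path in $G-(S\cup\{u\})$ avoids $u$, hence avoids the edge $uv$, and so contradicts the separation; the vertex $w_v$ you introduce is never actually needed. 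The only content you leave unproved is the local Menger theorem itself, which you explicitly take as the black-box core tool; that is a reasonable division of labour, and it is in fact more than the paper offers, but a fully self-contained proof would still owe the augmenting-path or induction argument for the local version.
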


This theorem shows the equivalence of vertex connectivity and the number of vertex-independent paths, both of which are considered as important properties for graph cohesion \cite{white2001}. Based on this theorem, we can bound the diameter of a $k$-\vcc, where the diameter of a graph $G$, denoted by $diam(G)$, is the longest shortest path between any pair of vertices in $G$:

\vspace*{-0.1cm}
\begin{equation}
\label{eq:diameter}
\small
diam(G) = max_{u,v \in V(G)}dist(u,v,G)
\end{equation}

\noindent Here, $dist(u,v,G)$ is the shortest distance of the pair of vertices $u$ and $v$ in $G$. Small diameter is considered as an important feature for a good community in \cite{EdacherySB99}. We give the diameter upper bound for a $k$-\vcc as follows. 

\begin{theorem}
Given any $k$-\vcc $G_i$ of $G$, we have: 

\begin{equation}
\small
diam(G_i) \leq \lfloor\frac{|V(G_i)|-2}{\vc{G_i}}\rfloor+1.
\end{equation}
\end{theorem}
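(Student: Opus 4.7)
The plan is to exploit a BFS layering from one endpoint of a diameter-realizing pair together with \refthm{menger} (Menger's theorem). Fix a $k$-\vcc $G_i$ and pick vertices $u, v \in V(G_i)$ with $dist(u,v,G_i) = diam(G_i) =: d$. Run BFS from $u$ inside $G_i$ and partition $V(G_i)$ into layers $L_0, L_1, \dots, L_d$ where $L_j = \{w \in V(G_i) : dist(u,w,G_i) = j\}$. Note $L_0 = \{u\}$ and $v \in L_d$, so $|L_0| \ge 1$ and $|L_d| \ge 1$. The goal is then to lower-bound each intermediate layer size, sum everything, and rearrange.

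Next I would argue that for every $1 \le j \le d-1$ the layer $L_j$ is a $u$--$v$ vertex separator in $G_i$. The key fact is that every edge of $G_i$ joins vertices whose BFS distances from $u$ differ by at most one. Therefore, along any $u$--$v$ walk, the layer index changes by at most one per step; since it starts at $0$ and ends at $d$, the walk must visit every intermediate index. Removing $L_j$ (which contains neither $u$ nor $v$) thus destroys every $u$--$v$ path, so $L_j$ is indeed a separator.

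Now I invoke \refthm{menger}: because $G_i$ is $\kappa(G_i)$-vertex-connected, there exist $\kappa(G_i)$ internally vertex-disjoint $u$--$v$ paths in $G_i$. Each such path must use at least one vertex of $L_j$ (by the separator property), and these vertices must be distinct across the $\kappa(G_i)$ paths (by internal disjointness). Hence $|L_j| \ge \kappa(G_i)$ for every $1 \le j \le d-1$. Summing,
\[
|V(G_i)| \;=\; \sum_{j=0}^{d} |L_j| \;\ge\; 1 + (d-1)\,\kappa(G_i) + 1 \;=\; 2 + (d-1)\,\kappa(G_i).
\]
Solving for $d$ gives $d - 1 \le (|V(G_i)| - 2)/\kappa(G_i)$, and since $d-1$ is an integer this yields $d \le \lfloor (|V(G_i)|-2)/\kappa(G_i) \rfloor + 1$, which is the desired bound.

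The main obstacle is the separator argument in the second step: one must be careful that BFS edges cannot skip a layer and that the endpoints $u$ and $v$ survive the removal of $L_j$. Everything else is a short counting computation, and the use of \refthm{menger} is exactly the place where the hypothesis that $G_i$ is a $k$-\vcc (hence $\kappa(G_i)$-connected) enters.
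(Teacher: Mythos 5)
Your proof is correct and rests on the same core argument as the paper's: \refthm{menger} supplies $\vc{G_i}$ internally vertex-disjoint $u$--$v$ paths for a diameter-realizing pair, and a count of their internal vertices forces $|V(G_i)| \ge 2 + \vc{G_i}\,(diam(G_i)-1)$, which rearranges to the stated bound. The only difference is organizational: the paper counts at least $diam(G_i)-1$ internal vertices along each disjoint path directly, whereas you count $\vc{G_i}$ vertices in each of the $diam(G_i)-1$ intermediate BFS layers via the separator property --- your layered version is the more carefully justified of the two, since it makes explicit why each disjoint path must contribute a distinct vertex at every distance level.
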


\begin{proof}
Consider any two vertices $u$ and $v$ in $G_i$, we have $d(u,$ $v,$ $G_i)$ $\le$ $diam(G_i)$. \refthm{menger} indicates that there exist at least $\vc{G_i}$ vertex-disjoint paths between $u$ and $v$ in $G_i$, and in each path, we have at most $diam(G_i)-1$ internal vertices since $dist$$(u,$ $v,$ $G_i)$ $\le diam(G_i)$. Thus we have at most $\vc{G_i} \times (diam(G_i)-1)$ internal vertices between them. With two endpoints $u$ and $v$, we have $2+\vc{G_i}\times(diam(G_i)-1) \le |V(G_i)|$. Thus the upper bound of $diam(G_i)$ is $\lfloor \frac{|V(G_i)|-2}{\vc{G_i}}\rfloor+1$.
\end{proof}

\stitle{Cohesiveness.} To further investigate the quality of $k$-\vcc, we introduce the Whitney Theorem \cite{hassler1932}. Given a graph $g$, it analyzes the inclusion relation between vertex connectivity $\vc{g}$, edge connectivity $\ec{g}$ and minimum degree $\mindegree{g}$. The theorem is presented as follows.

\begin{theorem}
\label{thm:relation}
For any graph $g$, $\vc{g} \le \ec{g} \le \mindegree{g}$.
\end{theorem}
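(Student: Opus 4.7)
The plan is to prove the two inequalities separately, starting with the easier right-hand one. For $\kappa'(g) \le \delta(g)$, I would pick a vertex $v$ of minimum degree and observe that the $\delta(g)$ edges incident to $v$ form an edge cut: removing them isolates $v$ from the rest of the graph (or leaves only the trivial graph when $g$ has a single vertex). Hence the minimum edge cut has size at most $\delta(g)$.

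The substantive step is $\kappa(g) \le \kappa'(g)$. I would first dispose of two degenerate cases: if $g$ is a complete graph on $n$ vertices, all three quantities equal $n-1$; and if $n \le \kappa'(g)+1$, then $\kappa(g) \le n-1 \le \kappa'(g)$ holds trivially. In the remaining case, let $\lambda = \kappa'(g)$ and let $F$ be a minimum edge cut, so $g-F$ splits into components $A$ and $B$ with every edge of $F$ crossing between them. Pick any edge $e = (u,v) \in F$ with $u \in A$, $v \in B$, and build a candidate vertex cut $S$ as follows: for each $e' \in F$, if $e'$ is incident to $u$, put the $B$-endpoint of $e'$ into $S$; otherwise put the $A$-endpoint of $e'$ into $S$. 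Then $|S| \le |F| = \lambda$ and by construction $u \notin S$.

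The argument is that $S$ separates $u$ from $B \setminus S$ in $g - S$. Every crossing edge from $A$ to $B$ lies in $F$; every $F$-edge incident to $u$ has had its $B$-endpoint removed into $S$; and every $F$-edge not incident to $u$ has had its $A$-endpoint removed into $S$. So no path from $u$ to any vertex of $B \setminus S$ survives in $g - S$, which proves $\kappa(g) \le \lambda$ provided $B \setminus S$ is nonempty.

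The main obstacle is the degenerate subcase $B \subseteq S$: since $S \cap B$ consists only of the $B$-endpoints of $F$-edges incident to $u$, this forces $B \subseteq N(u)$ and $|B| \le \lambda$. I would handle it by running the symmetric construction from the $B$-side (with an edge $(u',v') \in F$, $v' \in B$), which by the same reasoning either yields the required cut or forces $A \subseteq N(v')$ and $|A| \le \lambda$. In that doubly degenerate situation both sides are small and every vertex of $A$ is adjacent to $v'$ while every vertex of $B$ is adjacent to $u$, so $n = |A|+|B| \le 2\lambda$ and a direct count of $d(u)$ gives $\delta(g) \le \lambda$, whence $\kappa(g) \le n-1 \le \lambda$ once the numeric bounds are tightened, closing the argument.
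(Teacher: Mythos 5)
The paper does not prove this statement at all: it is quoted as Whitney's theorem with a citation to the 1932 paper, so any self-contained argument is necessarily ``different from the paper's proof.'' What you give is the classical textbook proof, and its core is sound. The bound $\kappa'(g)\le\delta(g)$ via the edge star of a minimum-degree vertex is correct, and your main construction for $\kappa(g)\le\kappa'(g)$ works: with $F$ a minimum edge cut separating sides $A$ and $B$, $u\in A$ an endpoint of some cut edge, and $S$ built by taking the $B$-endpoint of each $F$-edge incident to $u$ and the $A$-endpoint of every other $F$-edge, one checks exactly as you say that every $A$--$B$ edge of any $u$-to-$(B\setminus S)$ path is blocked, so $S$ is a vertex cut of size at most $|F|$ whenever $B\setminus S\neq\emptyset$.

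The one place the argument as written does not close is the doubly degenerate case. There you have $B\subseteq N(u)$, $A\subseteq N(v')$, and you try to finish from $n=|A|+|B|\le 2\lambda$ together with $\delta(g)\le\lambda$; but $n\le 2\lambda$ only yields $\kappa(g)\le n-1\le 2\lambda-1$, and $\delta(g)\le\lambda$ gives no bound on $\kappa(g)$ in the needed direction (you already know $\lambda\le\delta(g)$, so this just says $\lambda=\delta(g)$). The correct finish is a disjoint-edge count inside $F$ itself: the $|B|$ edges from $u$ to $B$ (all incident to $u$) and the $|A|-1$ edges from $A\setminus\{u\}$ to $v'$ (none incident to $u$) are pairwise distinct edges of $F$, so $\lambda=|F|\ge |B|+|A|-1=n-1\ge\kappa(g)$. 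With that substitution the proof is complete; note also that this count handles the case directly, so the symmetric $B$-side construction is only needed to establish $A\subseteq N(v')$, as you intended.
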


From this theorem, we know that for a graph $G$, every $k$-\vcc of $G$ is nested in a $k$-\ecc in $G$, and every $k$-\ecc of $G$ is nested in a $k$-core in $G$. Therefore, $k$-\vcc is generally more cohesive than $k$-\ecc and $k$-core.

\stitle{Overlapping.} The $k$-\vcc model also supports vertex overlap between different $k$-\vccs, which is especially important in social networks. We can easily deduce the following property from the definition of $k$-\vcc to bound the overlapping size.

\begin{property}
\label{pr:overlap}
Given two $k$-\vccs $G_i$ and $G_j$ in graph $G$, the number of overlapped vertices of $G_i$ and $G_j$ is less than $k$. That is, $|V(G_i) \cap V(G_j)| < k$.  
\end{property}

\begin{example}
In \reffig{demo}, we find two vertices $a$ and $b$ that are contained in two $4$-\vccs $G_1$ and $G_2$. For $k$-\ecc and $k$-core, two components will be combined if they have one vertex in common. For example, there is only one $4$-core, which is the union of $G_1$, $G_2$, $G_3$ and $G_4$.
\end{example}

\stitle{Component Number.} Once we allow overlapping between different components, the number of components can hardly be bounded. For example, the number of maximal cliques achieves $3^\frac{n}{3}$ for a graph with $n$ vertices \cite{Moon1965}. Nevertheless, we find that the number of $k$-\vccs in a graph $G$ can be bounded by a function that is linear to the number of vertices in $G$. That is:

\begin{equation}
\label{eq:cnum}
|\vcc_k(G)|\leq \frac{|V(G)|}{2}
\end{equation}

\noindent Detailed discussions of \refeq{cnum} can be found in \refsec{base}. It is worth noting that the linear number of $k$-\vccs allows us to design a polynomial time algorithm to enumerate all $k$-\vccs. We will also discuss this in detail in \refsec{base}.

\section{Algorithm Framework}
\label{sec:kvcc}
\subsection{The Cut-Based Framework}
\label{sec:kvcc:frk}

To compute all $k$-\vccs in a graph, we introduce a cut-based framework \frk in this section. We define \textit{vertex cut}. 

\begin{definition}
\label{def:vertexcut}
\textsc{(Vertex Cut)} Given a connected graph $G$, a vertex subset $\cut \subset V$ is a vertex cut if the removal of $\cut$ from $G$ results in a disconnected graph.
\end{definition}

From \refdef{vertexcut}, we know that the vertex cut may not be unique for a given graph $G$, and the vertex connectivity is the size of the minimum vertex cut. For a complete graph, there is no vertex cut since any two vertices are adjacent. The size of a vertex cut is the number of vertices in the cut. In the rest of paper, we use cut to represent vertex cut when the context is clear.

\begin{algorithm}[t]
\caption{$\frk(G,k)$}
\label{alg:frk}
\small
\begin{algorithmic}[1]
\REQUIRE a graph $G$ and an integer $k$;
\ENSURE all $k$-vertex connected components;
\vspace{2ex}
\STATE $\vcc_k(G) \leftarrow \emptyset$;
\STATE \textbf{while} $\exists u: d(u)<k$ \textbf{do} remove $u$ and incident edges;
\STATE identify connected components ${\cal G} = \{G_1, G_2,...,G_t\}$ in $G$;
\FORALL{connected component $G_i\in {\cal G}$}
	\STATE $\cut \leftarrow \fcbase(G_i,k)$;
	\IF{$\cut = \emptyset$}
		\STATE $\vcc_k(G)\leftarrow \vcc_k(G) \cup \{G_i\}$;
	\ELSE
		\STATE ${\cal G}_i\leftarrow \opartition(G_i,\cut)$;
		\FORALL{$G^j_i\in {\cal G}_i$}
		 		\STATE $\vcc_k(G)\leftarrow \vcc_k(G)\cup \frk(G^j_i,k)$;
		\ENDFOR
	\ENDIF
\ENDFOR
\STATE \textbf{return} $\vcc_k(G)$;

\vspace*{0.2cm}
\STATE \textbf{Procedure} $\opartition($Graph $G'$, Vertex Cut $\cut)$
\STATE ${\cal G}\leftarrow \emptyset$;
\STATE remove vertices in $\cut$ and their adjacent edges from $G'$;
\FORALL{connected component $G'_i$ of $G'$}
	\STATE ${\cal G}\leftarrow {\cal G} \cup \{G'[V(G'_i)\cup \cut]\}$; 
\ENDFOR
\STATE \textbf{return} ${\cal G}$;
\end{algorithmic}
\end{algorithm}

\stitle{The Algorithm.} Given a graph $G$, the general idea of our cut-based framework is given as follows. If $G$ is $k$-connected, $G$ itself is a $k$-\vcc. Otherwise, there must exist a qualified cut $\cut$ whose size is less than $k$. In this case, we find such cut and partition $G$ into overlapped subgraphs using the cut. We repeat the partition procedure until each remaining subgraph is a $k$-\vcc. From \refthm{relation}, we know that a $k$-\vcc must be a $k$-core (a graph with minimum degree no smaller than $k$). Thus we can compute all $k$-cores in advance to reduce the size of the graph. 

The pseudocode of our framework is presented in \refalg{frk}. In line 2, the algorithm computes the $k$-core by iteratively removing the vertices whose degree is less than $k$ and terminates once no such vertex exists. Then we identify connected components of the input graph $G$. 
For each connected component $G_i$ (line~4), we first find a cut of $G_i$ by invoking the subroutine $\fcbase$ (line~5). Here, we only need to find a cut with fewer than $k$ vertices instead of a minimum cut. The detailed implementation of $\fcbase$ will be introduced later. If there is no such cut, it means $G_i$ is $k$-connected and we add it to the result list $\vcc_k(G)$ (line~6-7). Otherwise, we partition the graph into overlapped subgraphs using the cut $\cut$ by invoking $\opartition$ (line~9). 
We recursively cut each of other subgraphs using the same procedure $\frk$ (line~11) until all remaining subgraphs are $k$-\vccs. Next, we introduce the subroutine $\kw{OVERLAP\_PARTITION}$, which partitions the graph into overlapped subgraphs by cut $\cut$.

\begin{figure}[t]
\begin{center}
\includegraphics[width=0.99\hsize]{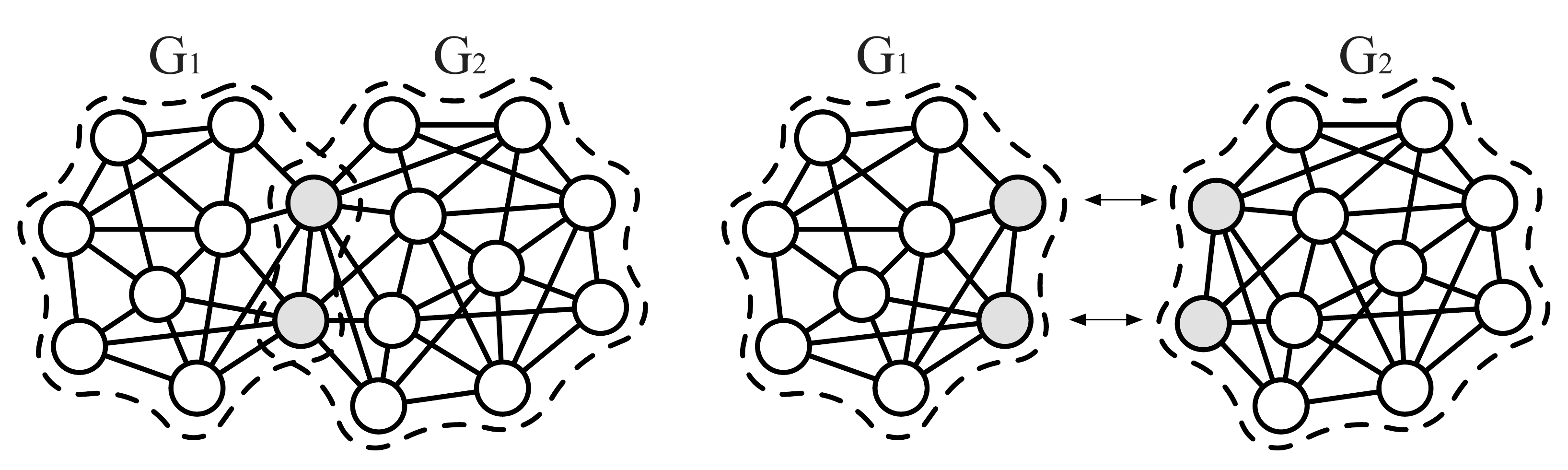}
\topcaption{An example of overlapped graph partition.}
\label{fig:partition}
\end{center}
\end{figure}

\stitle{Overlapped Graph Partition.} To partition a graph $G$ into overlapped subgraphs using a cut $\cut$, we cannot simply remove all vertices in $\cut$, since such vertices may be the overlapped vertices of two or more $k$-\vccs. Subroutine $\opartition$ is shown in line~13-18 of \refalg{frk}. We first remove the vertices in $\cut$ along with their adjacent edges from $G'$. $G'$ will become disconnected after removing $\cut$, since $\cut$ is a vertex cut of $G'$. We can simply add the cut $\cut$ into each connected component $G'_i$ of $G'$ and return induced subgraph $G'[V(G'_i)\cup \cut]$ as the partitioned subgraph (line~17-18). Partitioned subgraphs overlap each other since the cut $\cut$ is duplicated in these subgraphs. Below, we use an example to illustrate the partition process.

\begin{example}
We consider a graph $G$ on the left of \reffig{partition}. given the input parameter $k = 3$, we can find a vertex cut in which all vertices are marked by gray. These vertices belong to both $3$-\vccs, $G_1$ and $G_2$. Thus, given a cut $\cut$ of graph $G$, we partition the graph by duplicating the induced subgraph of cut $\cut$. As shown on the right of \reffig{partition}, we obtain two $3$-\vccs, $G_1$ and $G_2$, by duplicating the two cut vertices and their inner edges.
\end{example}


\subsection{Algorithm Correctness}
\label{sec:base:algcrt}

In this section, we prove the correctness of \frk using the following lemmas.

\begin{lemma} 
\label{lem:kcon}
Each of the subgraphs returned by \frk is $k$-vertex connected.
\end{lemma}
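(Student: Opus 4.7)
The plan is to reduce Lemma~1 to a single local invariant: every subgraph that \frk\ inserts into $\vcc_k(G)$ at line~7 of \refalg{frk} is $k$-vertex connected. Since each subgraph returned by \frk\ is either inserted at line~7 of some (possibly nested) recursive call or propagated unchanged through line~11, a straightforward induction on the depth of the recursion reduces the entire claim to this single invariant. So I would focus on the one invocation that actually performs the insertion.

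To establish the invariant, I would verify the two conditions of \refdef{kcon} separately. For the size condition $|V(G_i)|>k$, I would use the degree-based peeling at line~2: once that loop terminates, every remaining vertex has degree at least $k$, and this property is inherited by every connected component $G_i$ identified at line~3 because all neighbours of a vertex of $G_i$ lie in $G_i$. Hence $\mindegree{G_i}\ge k$, which in particular forces $|V(G_i)|\ge k+1>k$.

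For the connectivity condition $\vc{G_i}\ge k$, I would invoke the specification of \fcbase, namely that $\fcbase(G_i,k)$ returns a vertex cut of size smaller than $k$ whenever one exists and $\emptyset$ otherwise. Since $G_i$ is added to the output only when \fcbase\ returns $\emptyset$, $G_i$ admits no vertex cut of size $<k$. If $G_i$ is not complete, this yields $\vc{G_i}\ge k$ directly, because every separating set of non-adjacent vertex pairs has size at least $k$. If $G_i$ is complete, then by the convention for trivial graphs used in \refdef{kcon}, $\vc{G_i}=|V(G_i)|-1\ge k$ by the size bound just proved. Either way, condition~2 of \refdef{kcon} holds, and the inductive step is complete.

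The main obstacle is not in Lemma~1 itself, which is essentially a definitional unpacking, but in ensuring that the cited behaviour of \fcbase\ is actually delivered by its implementation in the later sections. The lemma therefore rests on a correctness property of \fcbase\ that must be discharged separately; in the write-up I would make this forward reference explicit so that the proof cleanly factors into two parts: peeling at line~2 preserves the minimum-degree lower bound of $k$, and \fcbase\ faithfully detects the existence of any vertex cut of size $<k$.
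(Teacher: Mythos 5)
Your proof is correct and rests on the same key fact as the paper's: a subgraph is returned only when $\fcbase$ fails to find a cut of size less than $k$, so it must be $k$-connected. The paper argues this by a one-line contradiction and does not separately verify the $|V(G_i)|>k$ condition or the complete-graph case, both of which you handle explicitly via the degree peeling at line~2; this extra care is welcome but does not change the substance of the argument.
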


\begin{proof}
We prove it by contradiction. Assume one of the result subgraphs $G_i$ is not $k$-connected. $\fcbase$ in line 5 will find a vertex cut. $G_i$ will be partitioned in line 9 and cannot be returned, which contradicts that $G_i$ is in the result list.
\end{proof}

\begin{lemma} 
\label{lem:completeness}
(Completeness) The result returned by \frk contains all $k$-\vccs of the input graph $G$.
\end{lemma}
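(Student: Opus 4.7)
My plan is to induct on $|V(G)|$ and show that an arbitrary $k$-\vcc $G^*$ of $G$ always appears in $\frk(G,k)$. The first step is to verify that $G^*$ survives the degree-based peeling in line~2: by \refthm{relation}, $\mindegree{G^*}\ge \vc{G^*} \ge k$, so every vertex of $G^*$ has at least $k$ neighbors inside $V(G^*)$, and hence its degree in the current graph never drops below $k$ during peeling. Since $G^*$ is connected, it is contained in exactly one of the connected components $G_i$ enumerated in line~3.

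I would then split on the outcome of $\fcbase(G_i,k)$. If no cut of size less than $k$ exists, then $G_i$ is itself $k$-vertex connected; combined with $G^*\subseteq G_i \subseteq G$ and the maximality of $G^*$ in $G$, this forces $G^*=G_i$, so $G_i$ (and therefore $G^*$) is added to the output in line~7. Otherwise a cut $\cut$ with $|\cut|<k$ is produced and $\opartition$ builds the overlapped pieces $G^j_i=G_i[V(G'_j)\cup \cut]$. The crux of the argument, and the step I expect to be the most delicate, is to show that $V(G^*)\setminus \cut$ lies in a single connected component of $G_i-\cut$. Suppose otherwise: then removing the at most $|\cut\cap V(G^*)| \le |\cut| < k$ vertices of $\cut\cap V(G^*)$ from $G^*$ would leave $G^*[V(G^*)\setminus \cut]$ disconnected, contradicting $\vc{G^*}\ge k$. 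The nonemptiness of $V(G^*)\setminus \cut$ follows from \refdef{kcon}, which forces $|V(G^*)|>k>|\cut|$. Hence $V(G^*)\subseteq V(G'_j)\cup \cut$ for some $j$, and because $G^j_i$ is induced on that vertex set and $G^*\subseteq G$, every edge of $G^*$ is retained, giving $G^*\subseteq G^j_i$.

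Finally, I would apply the induction hypothesis to the recursive call $\frk(G^j_i,k)$. Because $\cut$ is a vertex cut, there are at least two components $G'_j$, so $|V(G^j_i)|=|V(G'_j)|+|\cut|<|V(G_i)|\le |V(G)|$, yielding a strictly smaller instance. Moreover, $G^*$ remains a $k$-\vcc of $G^j_i$: any $k$-connected $H$ with $G^*\subsetneq H \subseteq G^j_i \subseteq G$ would contradict maximality of $G^*$ in $G$. The induction hypothesis then guarantees that $G^*$ is returned by the recursive call, and hence $G^*\in \frk(G,k)$, completing the argument.
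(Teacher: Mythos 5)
Your proposal is correct and rests on the same key insight as the paper's proof: a $k$-\vcc cannot be separated by a vertex cut of size less than $k$, so after each call to $\opartition$ it lands entirely inside a single overlapped subgraph $G'[V(G'_i)\cup\cut]$, which the paper likewise establishes by contradiction. You package this as an explicit induction on $|V(G)|$ and fill in details the paper leaves implicit (survival of the degree-$k$ peeling via \refthm{relation}, preservation of maximality in the partitioned subgraph, and the strict decrease in instance size), but the argument is essentially the paper's own.
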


\begin{proof} Suppose graph $G$ is partitioned into overlapped subgraphs ${\cal G}'=\{$ $G'_1$, $G'_2$, $\ldots\}$ using a vertex cut $\cut$. We first prove that each $k$-\vcc $G_i$ of $G$ is contained in at least one  subgraph in ${\cal G}'$. We prove this by contradiction. We suppose that $G_i$ is not contained in any subgraph in ${\cal G}'$. Consider the computation of ${\cal G}'$, after we remove the vertices in $\cut$ and their adjacent edges from $G_i$, the remaining vertices in $G_i$ are contained in at least two graphs in ${\cal G}'$. This indicates that $\cut$ is a vertex cut of $G_i$. Since $|\cut|<k$, $G_i$ cannot be a $k$-\vcc, which contradicts that $G_i$ is a $k$-\vcc.  Therefore, we prove that we will not lose any $k$-\vcc. From \reflem{kcon}, we know that each of the returned subgraphs of \frk is $k$-connected. Therefore, all maximal subgraphs that are $k$-connected will be returned by \frk. In other words, all $k$-\vccs will be returned by \frk.
\end{proof}

\begin{lemma}
\label{lem:redundancyfree}
(Redundancy-Free) There does not exist two subgraphs $G_i$ and $G_j$ returned by \frk such that $G_i\subseteq G_j$. 
\end{lemma}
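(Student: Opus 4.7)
The plan is a contradiction argument traced along the recursion tree of \frk. Each leaf of this tree produces exactly one output subgraph; assume for contradiction that two distinct leaves produce subgraphs $G_i$ and $G_j$ with $G_i \subseteq G_j$. By \reflem{kcon}, $G_i$ is $k$-vertex connected, so \refdef{kcon} gives $|V(G_i)| > k$. The goal is to derive the opposite bound $|V(G_i)| < k$ from the containment, yielding the desired contradiction.

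First I would establish a simple monotonicity invariant by induction on recursion depth: if $\frk(H,k)$ outputs a subgraph $R$, then $V(R) \subseteq V(H)$. The base case (no cut found, $H$ returned as-is) is trivial. The inductive step is immediate because the degree peeling on line~2 only deletes vertices, each connected component identified on line~3 has vertex set inside $V(H)$, and $\opartition$ returns only induced subgraphs of its argument, so vertex sets can only shrink as the recursion descends.

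Next, walk up from the two output leaves producing $G_i$ and $G_j$ to their lowest common ancestor node in the recursion tree, whose associated graph I call $H$. Because the two leaves are distinct, this node cannot itself be a leaf, so \fcbase returned a cut $\cut$ with $|\cut| < k$ and $H$ was split into overlapped pieces $H'_1,\ldots,H'_t$ by $\opartition(H, \cut)$; $G_i$ descends from some $H'_a$ and $G_j$ from some $H'_b$ with $a \neq b$. The invariant then gives $V(G_i) \subseteq V(H'_a)$ and $V(G_j) \subseteq V(H'_b)$.

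The final ingredient is the overlap structure of $\opartition$: distinct pieces share exactly $\cut$, since each $H'_i$ is formed by adding $\cut$ to the vertex set of one connected component of the graph obtained from $H$ by removing $\cut$, and distinct components are vertex-disjoint. Combining this with $G_i \subseteq G_j$ yields
$$V(G_i) \;\subseteq\; V(G_j) \cap V(H'_a) \;\subseteq\; V(H'_b) \cap V(H'_a) \;=\; \cut,$$
so $|V(G_i)| \le |\cut| < k$, contradicting $|V(G_i)| > k$. The main obstacle I anticipate is stating the ``invariant plus LCA'' bookkeeping cleanly — in particular, folding the $k$-core peeling and connected-component splitting performed at the top of every recursive call into the same framework so that the descendant/ancestor vertex-containment statement really does hold at every level; once that is in place, the cut-size-versus-$k$-connectivity mismatch closes the proof in a single line.
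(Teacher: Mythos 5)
Your proposal is correct and follows essentially the same argument as the paper: both locate the partition step at which $G_i$ and $G_j$ are first separated into different overlapped pieces, observe that distinct pieces intersect only in the cut $\cut$ with $|\cut|<k$, and contradict this with the fact that $G_i\subseteq G_j$ forces an intersection of at least $|V(G_i)|\ge k$ vertices. Your recursion-tree/LCA bookkeeping and the monotonicity invariant just make explicit what the paper leaves implicit (and the only case you gloss over --- the two leaves diverging at a connected-component split rather than at an \opartition call --- gives vertex-disjoint pieces and an even more immediate contradiction).
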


\begin{proof}
We prove it by contradiction. Suppose there are two subgraphs $G_i$ and $G_j$ returned by \frk such that $G_i\subseteq G_j$. On the one hand, we have $|V(G_i)\cap V(G_j)|=|V(G_i)|\geq k$. On the other hand, there must exist a partition ${\cal G}'=\{$ $G'_1$, $G'_2$, $\ldots\}$ of $G$ by a certain cut $\cut$ such that $G_i$ and $G_j$ are contained in two different graphs in ${\cal G}'$. From the partition procedure, we know that $G_i$ and $G_j$ have at most $k-1$ common vertices. This contradicts $|V(G_i)\cap V(G_j)|\geq k$. Therefore, the lemma holds. 
\end{proof}

\begin{theorem}
\label{thm:correctness}
\frk correctly computes all $k$-\vccs of $G$.
\end{theorem}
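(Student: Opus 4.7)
The plan is to obtain \refthm{correctness} as a direct synthesis of the three preceding lemmas: \reflem{kcon} (each returned subgraph is $k$-connected), \reflem{completeness} (each $k$-\vcc of $G$ appears as a subgraph of some returned subgraph), and \reflem{redundancyfree} (no returned subgraph is a subgraph of another). Correctness means exactly two things: every subgraph output by \frk is a $k$-\vcc of $G$ (soundness), and every $k$-\vcc of $G$ is output by \frk (completeness together with uniqueness).

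For soundness, I would fix an arbitrary returned subgraph $G_i$. By \reflem{kcon}, $G_i$ is $k$-vertex connected, so it remains only to argue maximality. I would proceed by contradiction: suppose $G_i$ is properly contained in some $k$-connected subgraph $H$ of $G$. Extending $H$ if necessary, I may assume $H$ is maximal, so $H$ is a $k$-\vcc of $G$. By \reflem{completeness}, $H$ is contained in some returned subgraph $G_j$, giving $G_i \subsetneq H \subseteq G_j$. Either $G_j \neq G_i$, which directly contradicts \reflem{redundancyfree}, or $G_j = G_i$, but then $H \subseteq G_i \subsetneq H$, also a contradiction. Hence $G_i$ is maximal and therefore a genuine $k$-\vcc.

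For completeness and uniqueness, pick any $k$-\vcc $G^{*}$ of $G$. By \reflem{completeness} there exists a returned subgraph $G_j$ with $G^{*} \subseteq G_j$. By soundness $G_j$ is itself a $k$-\vcc, so by maximality of $G^{*}$ we must have $G^{*} = G_j$. Thus $G^{*}$ is indeed output, and combining with \reflem{redundancyfree} it is output exactly once. Putting soundness and completeness together, the set of subgraphs returned by \frk coincides with $\vcc_k(G)$, which establishes the theorem.

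The main obstacle I foresee is the maximality step in the soundness argument, because \reflem{kcon} only delivers $k$-connectivity and not maximality on its own; the trick is recognising that maximality falls out only by pairing completeness with the redundancy-free property, so the three lemmas really have to be invoked jointly rather than sequentially. Once that interplay is set up cleanly, the remainder is bookkeeping.
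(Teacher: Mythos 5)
Your proof is correct and follows essentially the same route as the paper: the theorem is obtained by combining \reflem{kcon}, \reflem{completeness}, and \reflem{redundancyfree}. In fact your treatment of the maximality step is more careful than the paper's, which simply asserts that redundancy-freeness yields maximality, whereas you correctly observe that maximality only follows by playing completeness and redundancy-freeness off against each other.
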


\begin{proof}
From \reflem{kcon}, we know that all subgraphs returned by \frk are $k$-connected.  From \reflem{completeness}, we know that all $k$-\vccs are returned by \frk. From \reflem{redundancyfree}, we know that all $k$-connected subgraphs returned by \frk are maximal, and no redundant subgraph will be produced. Therefore, \frk (\refalg{frk}) correctly computes all $k$-\vccs of $G$.
\end{proof}

Next, we show how to efficiently compute all $k$-\vccs following the framework in \refalg{frk}. From \refalg{frk}, we know that the key to improving algorithmic efficiency is to efficiently compute the vertex cut of a graph $G$. Below, we first introduce a basic algorithm in \refsec{base} to compute the vertex cut of a graph in polynomial time, and then we explore optimization strategies to accelerate the computation of the vertex cut in \refsec{opt}.

\section{Basic Solution}
\label{sec:base}
In the previous section, we propose a cut-based framework named $\frk$ to compute all $k$-\vccs. A key step in \refalg{frk} is $\fcbase$. Before giving the detailed implementation of $\fcbase$, we discuss techniques to find the edge-cut, which is highly related to the vertex-cut. Here, an edge-cut is a set of edges the removal of which will make the graph disconnected. We will show that these methods cannot be directly used to find the vertex-cut.

\stitle{Maximum Flow.} A basic solution to find edge cut is the maximum flow algorithm. With a given maximum flow, we can easily compute a minimum edge cut based on the \textit{Max-Flow Min-Cut Theorem}. However, the flow algorithm only considers capacity of each edge and does not have any limitation on that of vertex, which is obviously not suitable for finding the vertex cut.

\stitle{Min Edge-Cut.} Stoer and Wagner \cite{Stoer1997} proposed an algorithm to find global minimum edge cut in an undirected graph. The general idea is iteratively finding an edge-cut and merging a pair of vertices. It returns the edge-cut with the smallest value after $n-1$ merge operations. Given an upper bound $k$, the algorithm terminates once  an edge-cut with fewer then $k$ edges is found. However, this algorithm is not suitable for finding the vertex-cut since we do not know whether a vertex is included in the cut or not. Therefore, we cannot simply merge any two vertices in the whole procedure.

\subsection{Find Vertex Cut}
\label{sec:base:findcut}

We give some necessary definitions before introducing the idea to implement $\fcbase$.

\begin{definition}
\textsc{(Minimum $u$-$v$ Cut)} A vertex cut $\cut$ is a $u$-$v$ cut if $u$ and $v$ are in disjoint subsets after removing $\cut$, and it is a minimum $u$-$v$ cut if its size is no larger than that of other $u$-$v$ cuts.
\end{definition}

\begin{definition}
\label{def:lc}
\textsc{(Local Connectivity)} Given a graph $G$, the local connectivity of two vertices $u$ and $v$, denoted by $\vc{u,v,G}$, is defined as the size of the minimum $u$-$v$ cut. $\vc{u,v,G}=+\infty$ if no such cut exists. 
\end{definition}

Based on \refdef{lc}, we define two local $k$ connectivity relations as follows:

\begin{itemize}
\item  $u \equiv_G^k v$:  The local connectivity between $u$ and $v$ is not less than $k$ in graph $G$, i.e., $\vc{u,v,G} \ge k$.

\item $u \not\equiv_G^k v$:  The local connectivity between $u$ and $v$ is less than $k$ in graph $G$, i.e., $\vc{u,v,G} < k$.
\end{itemize}

We omit the suffix $G$, and use $u \equiv^k v$ and $u \not\equiv^k v$ to denote $u \equiv_G^k v$ and $u \not\equiv_G^k v$ respectively when the context is clear. Once $u \equiv^k v$, we say $u$ and $v$ is $k$-local connected. Obviously, $u \equiv^k v$ and $v \equiv^k u$ are equivalent.

\begin{algorithm}[t]
\caption{$\fcbase(G,k)$}
\label{alg:findcutbase}
\small
\begin{algorithmic}[1]
\REQUIRE a graph $G$ and an integer $k$;
\ENSURE a vertex cut with fewer than $k$ vertices;
\vspace{2ex}
\STATE compute a sparse certification ${\cal SC}$ of $G$;
\STATE select a source vertex $u$ with minimum degree;
\STATE construct the directed flow graph $\overline{\cal SC}$ of ${\cal SC}$;
\FORALL{$v \in V$}
	\STATE $\cut \leftarrow \lcut(u,v,\overline{\cal SC},{\cal SC})$;
	\STATE \textbf{if} $\cut \neq \emptyset$ \textbf{then} \textbf{return} $\cut$;
\ENDFOR
\FORALL{$v_a \in N(u)$}
	\FORALL{$v_b \in N(u)$}
		\STATE $\cut \leftarrow \lcut(v_a,v_b,\overline{\cal SC},{\cal SC})$;
		\STATE \textbf{if} $\cut \neq \emptyset$ \textbf{then} \textbf{return} $\cut$;
	\ENDFOR
\ENDFOR
\STATE return $\emptyset$;

\vspace{2ex}
\STATE \textbf{Procedure} $\lcut(u,v,\overline{G},G)$
\vspace{1ex}
\STATE \textbf{if} $v \in N(u)$ or $v = u$ \textbf{then} \textbf{return} $\emptyset$;
\STATE $\lambda \leftarrow $ calculate the maximum flow from $u$ to $v$ in $\overline{G}$;
\STATE \textbf{if} $ \lambda \ge k $ \textbf{then} \textbf{return} $\emptyset$;
\STATE compute the minimum edge cut in $\overline{G}$;
\STATE \textbf{return} the corresponding vertex cut in $G$;
\end{algorithmic}
\end{algorithm}

\stitle{The $\fcbase$ Algorithm.} We follow \cite{esfahanian1984} to implement $\fcbase$. Given a graph $G$, we assume that $G$ contains a vertex cut $\cut$ such that $|\cut| < k$. We consider an arbitrary source vertex $u$. There are only two cases: $(i)$ $u \not\in \cut$ and $(ii)$ $u \in \cut$. The general idea of algorithm $\fcbase$ considers two cases. In the first phase, we select a vertex $u$ and test the local connectivity between $u$ and all other vertices $v$ in $G$. We have either (a) $u\in \cut$ or (b) $G$ is $k$-connected if each local connectivity is not less than $k$. In the second phase, we consider the case $u\in \cut$ and test the local connectivity between any two neighbors of $u$ based on \reflem{cutnc}. More details can be found in \cite{esfahanian1984}. 

\begin{lemma}
\label{lem:cutnc}
Given a non-$k$-vertex connected graph $G$ and a vertex $u \in \cut$ where $\cut$ is a vertex cut and $|\cut|<k$, there exist $v, v' \in N(u)$ such that $v \not\equiv^k v'$.
\end{lemma}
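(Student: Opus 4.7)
The plan is to exploit the defining structural property of a minimum vertex cut: every vertex of a minimum cut has a neighbor in every component of the graph-minus-cut. Combined with $u\in\cut$, this delivers a neighbor of $u$ on each side of the cut, so that $\cut$ itself witnesses the low local connectivity of these two neighbors.

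I would start by reducing to the case where $\cut$ is a minimum vertex cut of $G$ containing $u$. In the algorithmic setting in which this lemma is applied, the Phase-1 calls $\lcut(u,v,\cdot,\cdot)$ have already returned $\emptyset$ for every $v\neq u$, so $\vc{u,v,G}\ge k$ for all such $v$; this forces $u$ to belong to every minimum vertex cut of $G$, for otherwise picking a minimum cut $S^*$ and some $v$ in a component of $G\setminus S^*$ different from the one containing $u$ would yield $\vc{u,v,G}\le|S^*|<k$, contradicting the Phase-1 invariant. I may therefore assume $\cut$ is itself a minimum vertex cut of $G$ with $u\in\cut$ and $|\cut|=\kappa(G)<k$.

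Next, let $C_1,\ldots,C_t$ (with $t\ge 2$) be the components of $G\setminus\cut$. I would establish the structural claim that every $w\in\cut$ has a neighbor in every $C_i$. Arguing by contradiction, suppose some $w\in\cut$ has no neighbor in a particular $C_i$. Then in $G\setminus(\cut\setminus\{w\})$, any hypothetical path out of $C_i$ would have to leave through $w$, which carries no edge into $C_i$; hence $C_i$ remains a separate component and $\cut\setminus\{w\}$ is still a vertex cut, contradicting the minimality of $\cut$.

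Applying the claim to $u$ produces $v\in N(u)\cap C_1$ and $v'\in N(u)\cap C_2$, distinct vertices of $V\setminus\cut$ lying in different components of $G\setminus\cut$ (and in particular non-adjacent, since there are no edges between distinct components). Consequently, $\cut$ is itself a $v$--$v'$ vertex cut, and $\vc{v,v',G}\le|\cut|<k$, i.e., $v\not\equiv^k v'$. The main obstacle is justifying the initial reduction: the bare hypothesis ``$u$ lies in some cut of size less than $k$'' does not by itself imply $u$ lies in a minimum cut of $G$, since a padded non-minimum cut can easily contain $u$ while every true minimum cut of $G$ avoids it; anchoring the reduction in the Phase-1 invariant of $\fcbase$ is what keeps the minimum-cut structural argument available.
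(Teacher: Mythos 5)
Your argument is correct and is, in essence, the classical Even/Esfahanian--Hakimi argument; the paper itself offers no proof of this lemma and simply defers to the cited reference, so there is nothing internal to compare against. Two points are worth recording. First, the structural core of your proof --- every vertex of an inclusion-minimal vertex cut has a neighbour in every component of the graph minus the cut, so $u$ acquires neighbours $v\in C_1$ and $v'\in C_2$ that $\cut$ itself separates, whence $\vc{v,v',G}\le|\cut|<k$ --- is exactly the intended content, and your minimality step (if $w\in\cut$ had no neighbour in $C_i$ then $\cut\setminus\{w\}$ would still be a cut) is sound. Second, your caveat about the hypothesis is not pedantry but a genuine repair of the statement: as literally written the lemma is false for non-minimal cuts. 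Take $k=3$ and let $G$ be two copies of $K_{10}$ sharing a single vertex $w$; then $\cut=\{w,u\}$ for any $u\neq w$ in the first copy is a vertex cut of size $2<k$ containing $u$, yet all neighbours of $u$ are pairwise adjacent and hence pairwise $k$-locally connected by \reflem{nbrsave}, so no qualifying pair $v,v'$ exists. Your fix --- invoking the Phase-1 invariant of $\fcbase$ (namely $u\equiv^k v$ for all $v$) to conclude that $u$ lies in every \emph{minimum} cut, and then running the structural argument on a minimum cut --- is precisely the context in which the lemma is applied in \refalg{findcutbase}, and it is the right way to make the statement true; equivalently, one could simply restate the lemma with ``$\cut$ is a minimum (or inclusion-minimal) vertex cut.''
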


The pseudocode is given in \refalg{findcutbase}. An optimization here is computing a sparse certificate of the original graph in line~1. Given a graph $G(V,E)$, a sparse certificate is a subset of edges $E' \in E$, such that the subgraph $G'(V,E')$ is $k$-connected if and only if $G$ is $k$-connected. Undoubtedly, the same algorithm is more efficient in a sparser graph. We will introduce the details of sparse certification in the next subsection. 

The first phase is shown in line~4-6. Once finding such cut $\cut$, we return it as the result. Similarly, the second phase is shown in line~7-10. Here, the procedure \lcut tests the local connectivity between $u$ and $v$ and returns the vertex cut if $u \not\equiv^k v$ (line 5 and line 9). To invoke \lcut, we need to transform the original graph $G$ into a directed flow graph $\overline{G}$. The details on how to construct the directed flow graph are introduced as follows.

\begin{figure}[t]
\begin{center}
\includegraphics[width=0.8\hsize]{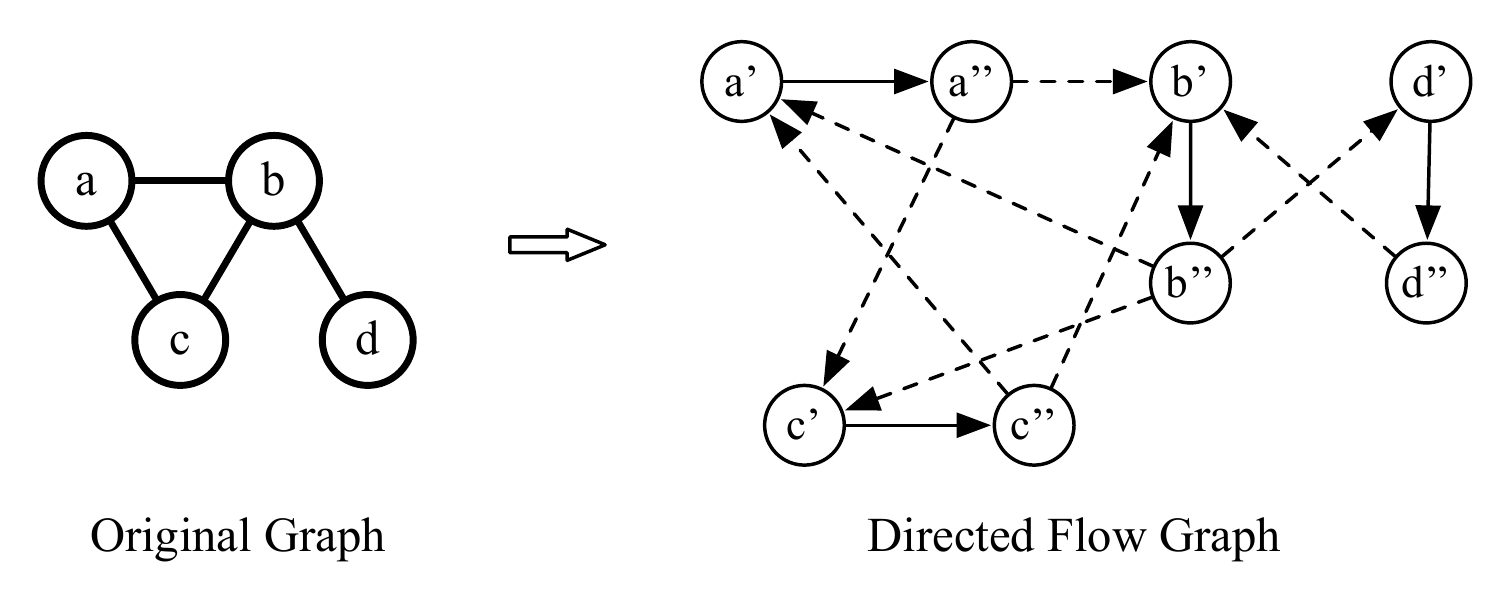}
\topcaption{An example of directed flow graph construction.}
\label{fig:dfgraph}
\end{center}
\end{figure}

\stitle{Directed Flow Graph.} The directed flow graph $\overline{G}$ of graph $G$ is an auxiliary directed graph which is used to calculate the local connectivity between two vertices. Given a graph $G$, we can construct the directed flow graph as follows. Each vertex $u$ in $G$ is represented by an directed edge $e_u$ in the directed flow graph $\overline{G}$. Let $u'$ and $u''$ denote the starting vertex and ending vertex of $e_u$. For each edge $(u,v)$ in $G$, we construct two directed edges: One is from $u''$ to $v'$, and the other is from $v''$ to $u'$. Consequently, we obtain $\overline{G}$ with $2n$ vertices and $n+2m$ edges and the capacity of every edge is $1$. 

\begin{example}
\reffig{dfgraph} gives an example of the directed flow graph construction. The solid lines in the directed flow graph represent vertices in the original graph, and the dashed lines in the directed flow graph represent edges in the original graph. The original graph contains 4 vertices and 4 edges, and the directed flow graph contains 8 vertices and 12 edges.
\end{example}

\stitle{The \lcut Procedure.} By using the directed flow graph, we convert vertex connectivity problem into edge connectivity problem. To calculate the local connectivity of two vertices $u$ and $v$, we perform the maximum flow algorithm on the directed flow graph. The value of the maximum flow is the local connectivity between $u$ and $v$.

The pseudocode of \lcut is given form line~12 to line~17 in \refalg{findcutbase}. It first checks whether $v$ is a neighbor of $u$ in line 13. If $u\in N(v)$, we always have $u \equiv^k v$ because of \reflem{nbrsave}.

\begin{lemma}
\label{lem:nbrsave}
$u \equiv^k v$ if $(u,v) \in E$.
\end{lemma}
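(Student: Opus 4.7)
The plan is to derive this directly from the definition of the local connectivity $\kappa(u,v,G)$ (Definition~5), which is set to $+\infty$ precisely when no $u$-$v$ vertex cut exists. So the whole task reduces to showing that when $(u,v)\in E$, there is no $u$-$v$ vertex cut at all, from which $\kappa(u,v,G)=+\infty\ge k$ follows for any $k$, giving $u\equiv^k v$.

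First, I would unpack what a $u$-$v$ cut $\cut$ requires, namely that after deleting $\cut$ from $G$, vertices $u$ and $v$ end up in disjoint connected subsets of the residual graph. For this statement to be meaningful, $u$ and $v$ must themselves survive the deletion, so implicitly $u,v\notin \cut$, i.e., $\cut \subseteq V\setminus\{u,v\}$. I would state this as the first observation in the proof.

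Next, I would observe that since $(u,v)\in E$ and the deletion only removes vertices from $V\setminus\{u,v\}$, the edge $(u,v)$ is untouched and remains in the residual graph $G-\cut$. Therefore $u$ and $v$ still lie in the same connected component of $G-\cut$, regardless of which $\cut \subseteq V\setminus\{u,v\}$ is chosen. This contradicts the defining property of a $u$-$v$ cut, so no such cut exists.

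Applying Definition~5, the absence of any $u$-$v$ cut gives $\kappa(u,v,G)=+\infty$, and in particular $\kappa(u,v,G)\ge k$, which is exactly the relation $u\equiv^k v$. There is no real obstacle here; the lemma is essentially a definitional observation, and its role is to justify the early-termination check ``if $v\in N(u)$ then return $\emptyset$'' in line~13 of procedure \lcut, so that the expensive max-flow computation is skipped for adjacent pairs.
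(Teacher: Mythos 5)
Your proof is correct: the paper states this lemma without any proof, treating it as an immediate consequence of the definitions, and your argument --- that a $u$-$v$ cut must be a subset of $V\setminus\{u,v\}$, hence cannot remove the edge $(u,v)$, so no $u$-$v$ cut exists and $\kappa(u,v,G)=+\infty\ge k$ --- is exactly the definitional observation the paper implicitly relies on. Nothing is missing.
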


Then the procedure computes the maximum flow $\lambda$ from $u$ to $v$ in $\overline{G}$ in line 14. If $\lambda \ge k$, we have $u \equiv^k v$ and the procedure returns $\emptyset$ in line 15. Otherwise, we compute the edge cut in $\overline{G}$ in line~16. Then we locate the corresponding vertices in the original graph $G$ for each edge in the edge cut and return them as the vertex cut of $G$ (line~16-17).

\subsection{Sparse Certificate}
\label{sec:base:cert}
We introduce the details of sparse certificate \cite{CheriyanKT93} in this section. In \refsec{opt}, we will show that the sparse certificate can not only be used to reduce the graph size, but also used to further reduce the local connectivity testings. 

\begin{definition}
\label{def:cert}
\textsc{(Certificate)} A certificate for the $k$-vertex connectivity of $G$ is a subset $E'$ of $E$ such that the subgraph $(V,E')$ is $k$-vertex connected if and only if $G$ is $k$-vertex connected.
\end{definition}

\begin{definition}
\label{def:scert}
\textsc{(Sparse Certificate)} A certificate for $k$-vertex connectivity of $G$ is called sparse if it has $O(k\cdot n)$ edges.
\end{definition}

From the definitions, we can see that a sparse certificate is equivalent to the original graph w.r.t $k$-vertex connectivity. It can also bound the edge size. We compute the sparse certificate (line 1 of \refalg{findcutbase}) according to the following theorem.

\begin{theorem}
\label{thm:maincert}
Let $G(V,E)$ be an undirected graph and let $n$ denote the number of vertices. Let $k$ be a positive integer. For $i = 1,2,...,k$, let $E_i$ be the edge set of a scan first search forest $F_i$ in the graph $G_{i-1} = (V,E-(E_1 \cup E_2 \cup ... \cup E_{i-1}))$. Then $E_1 \cup E_2 \cup ... \cup E_k$ is a certificate for the $k$-vertex connectivity of $G$, and this certificate has at most $k\times(n-1)$ edges \cite{CheriyanKT93}.
\end{theorem}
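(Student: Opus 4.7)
The plan is to establish the two claims of the theorem separately: first the edge-size bound, then the fact that $E' := E_1 \cup \cdots \cup E_k$ is a certificate in the sense of \refdef{cert}.

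For the edge bound, each $F_i$ is a spanning forest on $n$ vertices, so $|E_i| \le n-1$. Since the $E_i$ are pairwise disjoint by construction (the edges of $F_i$ are removed before $F_{i+1}$ is computed), we obtain $|E'| = \sum_{i=1}^{k} |E_i| \le k(n-1)$.

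For the certificate property, set $H = (V, E')$. One direction is immediate: if $H$ is $k$-vertex connected, then $G \supseteq H$ is also $k$-vertex connected, since adding edges cannot reduce the vertex connectivity. The substantive direction is the converse: if $G$ is $k$-vertex connected, then so is $H$. I would prove this by contradiction. Assume $G$ is $k$-vertex connected while $H$ admits a vertex cut $S$ with $|S| \le k-1$, and let $A$ and $B$ be two distinct components of $H - S$. Since $S$ is not a vertex cut of $G$, there is some edge $(a,b) \in E(G)$ with $a \in A$ and $b \in B$. Because $(a,b) \notin E'$, at every iteration $i \in \{1,\ldots,k\}$ the edge survived into $G_{i-1}$ yet was not chosen for $F_i$. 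Hence $a$ and $b$ lie in the same tree of $F_i$, which yields a unique $a$-to-$b$ tree path $P_i \subseteq F_i \subseteq H$; and since $H - S$ separates $a$ from $b$, each $P_i$ must traverse some vertex of $S$.

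The main obstacle is that the $k$ paths $P_1, \ldots, P_k$, although edge-disjoint, could a priori share vertices inside $S$, so naive pigeonholing does not immediately contradict $|S| \le k-1$. This is exactly where the scan-first property enters. Following the argument of Cheriyan, Kao, and Thurimella, I would leverage the scan ordering used to build each $F_i$ to show that the first vertex of $S$ encountered along $P_i$ starting from $a$ must be distinct across different $i$'s: intuitively, the edge of $F_i$ incident to $a$ on $P_i$ is forced by the scan-first ordering to route through a gateway vertex of $S$ that cannot be reused at a later iteration without violating the order in which vertices were originally scanned. Establishing this distinctness lemma produces $k$ different vertices in $S$, contradicting $|S|\le k-1$ and completing the proof. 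I expect this SFS-based distinctness step to be the hard part of the argument; the remainder is bookkeeping around the forest construction and the inclusion $P_i \subseteq H$.
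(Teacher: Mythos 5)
First, a point of reference: the paper does not prove this theorem at all --- it is imported verbatim from \cite{CheriyanKT93} with a citation --- so there is no in-paper proof to compare against. Your edge-count argument and the easy direction (a $k$-connected spanning subgraph forces $G$ to be $k$-connected) are correct and complete, and your setup of the hard direction is sound: if $G$ is $k$-connected but $H=(V,E')$ has a cut $S$ with $|S|\le k-1$, then since $G-S$ is connected while $H-S$ is not, some edge $(a,b)\in E\setminus E'$ joins two distinct components of $H-S$ (they need not be the particular $A$ and $B$ you fixed, but that is a harmless renaming); such an edge survives into every $G_{i-1}$, so $a$ and $b$ lie in a common tree of each $F_i$, yielding $k$ pairwise edge-disjoint $a$--$b$ paths in $H$, each forced to meet $S$. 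You also correctly identify the obstruction: edge-disjoint paths may reuse cut vertices, so naive pigeonholing does not finish the argument.

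The genuine gap is that the single step carrying all of the content --- your ``distinctness lemma'' --- is not proved but only asserted, with an appeal to ``the argument of Cheriyan, Kao, and Thurimella'' and an intuition about gateway vertices. Nothing in your text actually invokes the defining property of scan-first search (a scanned vertex marks \emph{all} of its currently unmarked neighbors at once), yet that property is precisely what separates this construction from an arbitrary choice of edge-disjoint spanning forests, for which the claimed certificate property is false for vertex connectivity (it holds only for edge connectivity); any complete proof must therefore use the scan order in a concrete way. Moreover, it is not clear that ``the first vertex of $S$ encountered along $P_i$ is distinct across $i$'' is even the right invariant to aim for: the argument in \cite{CheriyanKT93} proceeds by induction on $k$ through a strengthened (``strong certificate'') statement, with the workhorse lemma being that deleting the edges of a scan-first search forest from a $k$-connected graph leaves a $(k-1)$-connected graph, rather than by tracking first crossings of a fixed cut by the $k$ tree paths. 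So your proposal is an accurate road map up to the hard part, but the hard part itself is missing, and the specific claim you propose to establish in its place is itself unverified.
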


Based on \refthm{maincert}, we can simply generate the sparse certificate of $G$ using scan first search $k$ times, each of which creates a scan first search forest $F_i$. Below, we introduce how to perform a scan first search.

\stitle{Scan First Search.} In a scan first search of given graph $G$, for each connected component, we start from scanning a root vertex by marking all its neighbors. We scan an arbitrary marked but unscanned vertex each time and mark all its unvisited neighbors. This step is performed until all vertices are scanned. The resulting search forest forms the scan first forest of $G$. Obviously, a breath first search is a special case of scan first search.

\begin{figure}[t]
\begin{center}
\includegraphics[width=0.99\hsize]{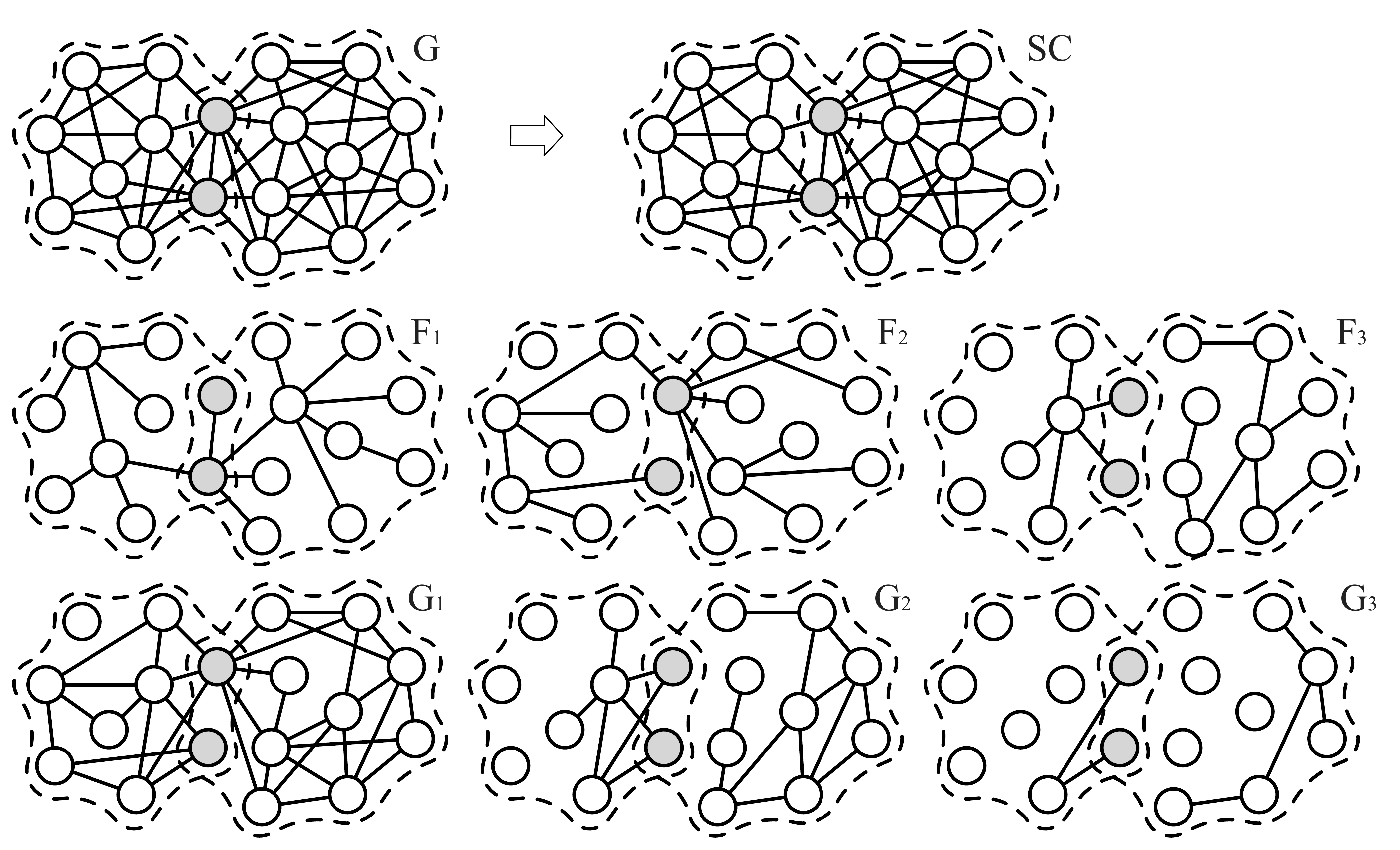}
\topcaption{The sparse certificate of given graph $G$ with $k = 3$}
\label{fig:sc}
\end{center}
\end{figure}

\begin{example}
\label{ex:sccn}
\reffig{sc} presents construction of a sparse certificate for the graph $G$. Let $k = 3$. For $i\in\{1,2,3\}$, $F_i$ denotes the scan first search forest obtained from $G_{i-1}$. $G_i$ is obtained by removing the edges in $F_i$ from $G_{i-1}$. $G_0$ is the input graph $G$. The obtained sparse certificate $SC$ is shown on the right side of $G$ with $SC = F_1 \cup F_2 \cup F_3$. All removed edges are shown in $G_3$.
\end{example}

\subsection{Algorithm Analysis}
\label{sec:base:complexity}

We analyze the basic algorithm in this section. In the directed flow graph, all edge capacities are equal to 1 and every vertex either has a single edge emanating from it or has a single edge entering it. For this kind of graph, the time complexity for computing the maximum flow is $O(n^{1/2}m)$ \cite{even1975network}. Note that we do not need to calculate the exact flow value in the algorithm. Once the flow value reaches $k$, we know that local connectivity between any two given vertices is at least $k$ and we can terminate the maximum flow algorithm. The time complexity for the flow computation is $O(\min(n^{1/2},k)\cdot m)$. Given a flow value and corresponding residual network, we can perform a depth first search to find the cut. It costs $O(m+n)$ time. As a result, we have the following lemma:

\begin{lemma}
\label{lem:tlcut}
The time complexity of $\lcut$ is $O(\min\\(n^{1/2},k)\cdot m)$.
\end{lemma}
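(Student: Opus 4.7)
The plan is to trace through $\lcut$ line by line and bound the cost of each step, so that summing the dominant contributions yields $O(\min(n^{1/2}, k) \cdot m)$. First I would record the structural facts about the directed flow graph $\overline{G}$ that make the bound work: by the vertex-splitting construction, $\overline{G}$ has $2n$ vertices and $n + 2m$ edges, every edge has unit capacity, and for each original vertex $u \in V(G)$ the corresponding split vertex $u'$ has a single outgoing edge (to $u''$) while $u''$ has a single incoming edge (from $u'$). Hence every non-source, non-sink vertex of $\overline{G}$ has in-degree or out-degree equal to $1$, which places $\overline{G}$ in the class of ``unit networks'' for which the Even--Tarjan max-flow bound applies.

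The neighbor/identity test in line~13 takes $O(d(u)) = O(m)$ time and is dominated by the flow step. The main cost lies in the maximum flow computation in line~14, which I would bound in two different ways and then take the minimum. On one hand, applying the Even--Tarjan algorithm to the unit network $\overline{G}$ runs in $O(|V(\overline{G})|^{1/2} \cdot |E(\overline{G})|) = O(n^{1/2} \cdot m)$ time. On the other hand, since the algorithm aborts as soon as the flow value reaches $k$ (line~15 returns $\emptyset$ in that case), at most $k$ augmenting paths are ever computed, and each augmenting path can be found by a single BFS/DFS on the residual graph in $O(m + n) = O(m)$ time, yielding an $O(k \cdot m)$ bound. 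Combining these gives a cost of $O(\min(n^{1/2}, k) \cdot m)$ for line~14.

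It remains to charge the post-flow work in lines~16--17. Once a flow of value strictly less than $k$ has been fixed, extracting the minimum edge cut in $\overline{G}$ reduces to a single depth-first traversal of the residual graph from $u$ to identify the set $R$ of vertices reachable from $u$ and then to enumerate the saturated edges leaving $R$; this is $O(m + n) = O(m)$. Mapping each cut edge $e_w$ back to the corresponding original vertex $w$ of $G$ costs $O(k)$ since the cut has fewer than $k$ edges. Both terms are absorbed by the flow bound, so the total running time of $\lcut$ is $O(\min(n^{1/2}, k) \cdot m)$.

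The main obstacle is verifying that $\overline{G}$ actually meets the structural hypotheses of the Even--Tarjan $O(n^{1/2} m)$ bound, rather than just invoking a generic max-flow bound. Once the unit-capacity and unit-degree property of the split graph is explicitly pointed out, the rest is straightforward bookkeeping across the two competing bounds.
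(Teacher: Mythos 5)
Your argument is correct and follows the same route as the paper: it observes that the directed flow graph is a unit network in which every split vertex has a single outgoing or single incoming edge, invokes the Even--Tarjan $O(n^{1/2}m)$ bound, combines it with the $O(k\cdot m)$ bound obtained from terminating after at most $k$ unit augmentations, and charges the final residual-graph DFS for cut extraction at $O(m+n)$. The only difference is presentational --- you make the verification of the unit-network hypothesis and the cost of mapping cut edges back to vertices explicit, whereas the paper states these facts more tersely in the paragraph preceding the lemma.
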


Next we discuss the time complexity of $\fcbase$. The construction of both sparse certificate and directed flow graph costs $O(m+n)$ CPU time. Let $\delta$ denote the minimum degree in the input graph. We can easily get following lemma.

\begin{lemma}
\label{lem:lccnt}
$\fcbase$ invokes $\lcut$\\ $O(n+\delta^2)$ times in the worst case.
\end{lemma}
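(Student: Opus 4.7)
The plan is to bound the number of $\lcut$ invocations contributed by each of the two phases of $\fcbase$ separately and then sum the two bounds; the argument is essentially a direct counting over the loops in \refalg{findcutbase}.

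First, I would handle the first phase (lines~4--6 of \refalg{findcutbase}), which is a single loop invoking $\lcut$ once for each $v \in V$. Since $|V| = n$, this phase contributes at most $n$ invocations, giving the $O(n)$ term. Note that we are counting calls to $\lcut$, not the work done inside each call, so it does not matter whether a particular call returns immediately (e.g.\ when $v = u$ or $v \in N(u)$) or proceeds to a full maximum-flow computation.

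Second, I would bound the calls from the second phase (lines~7--10), which is a doubly nested loop ranging over $N(u) \times N(u)$. The number of $\lcut$ invocations here is exactly $|N(u)|^2$. The key ingredient is that $u$ is chosen as a vertex of minimum degree in line~2, so by the definition of $\delta$ we have $|N(u)| \le \delta$. This yields at most $\delta^2$ invocations in the second phase. Adding the two contributions gives $n + \delta^2 = O(n + \delta^2)$, as claimed.

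The only subtle point worth spelling out is the bound $|N(u)| \le \delta$, since the adjacency in line~7 could be read relative to either $G$ or the sparse certificate ${\cal SC}$. Because ${\cal SC}$ is a spanning subgraph of $G$, the degree of any vertex in ${\cal SC}$ is at most its degree in $G$, so the bound holds under either reading and the minimum-degree choice in line~2 remains valid. Beyond this small bookkeeping observation, the proof is a straightforward accounting argument and I do not anticipate any genuine obstacle.
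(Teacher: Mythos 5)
Your proof is correct and is exactly the straightforward counting argument the paper intends (the paper states this lemma without proof, remarking only that it is easy to obtain): $n$ calls from the loop over $V$ in the first phase plus $|N(u)|^2 \le \delta^2$ calls from the nested loop over $N(u)\times N(u)$ in the second phase, using that $u$ is chosen with minimum degree. Your extra remark about adjacency in the sparse certificate versus $G$ is a harmless and valid bit of bookkeeping.
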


Next we discuss the CPU time complexity of the entire algorithm $\frk$. $\frk$ iteratively removes vertices with degree less than $k$ in line~2. This costs $O(m+n)$ time. Identifying all connected components can be performed by adopting a depth first search (line 3). This also need $O(m+n)$ time. To study the total time complexity spent by invoking $\fcbase$, we first give the following lemma.

\begin{lemma}
\label{lem:dpcnt}
For each subgraph created by the overlapped partition, at most $k-1$ vertices and $\frac{(k-1)(k-2)}{2}$ edges are increased after the partition. 
\end{lemma}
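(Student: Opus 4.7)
\medskip
\noindent\textbf{Proof plan.} My plan is to interpret the word ``increased'' as the overlap per subgraph that the $\opartition$ routine introduces, and to bound both quantities by reading the bound off the structure of the procedure itself.

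First, I would observe that $\opartition$ is only invoked on a cut $\cut$ returned by $\fcbase$, and $\fcbase$ only returns a nonempty cut when $\lcut$ has produced a max-flow value strictly below $k$ (line~15 of \refalg{findcutbase}). Hence the hypothesis $|\cut|\le k-1$ is in force throughout the step whose overhead we want to bound.

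Second, I would unpack the subroutine. For each connected component $G'_i$ of $G'\setminus \cut$, the routine outputs $g_i := G'[V(G'_i)\cup \cut]$. For any two distinct output subgraphs $g_i$ and $g_j$, the sets $V(G'_i)$ and $V(G'_j)$ are disjoint because they belong to different connected components of $G'\setminus\cut$, so
\[
V(g_i)\cap V(g_j)\;=\;\cut.
\]
Any edge lying in both $g_i$ and $g_j$ must have both endpoints in this intersection, so
\[
E(g_i)\cap E(g_j)\;\subseteq\; E(G'[\cut]).
\]
These two identifications pin down exactly which vertices and edges of each $g_i$ are shared with some other output subgraph, i.e.\ duplicated by the partition.

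Third, I would read off the quantitative bounds. The duplicated vertices of $g_i$ are exactly the $|\cut|\le k-1$ vertices of the cut, giving the first bound. The duplicated edges of $g_i$ are contained in $E(G'[\cut])$, and a simple graph on at most $k-1$ vertices has at most $\binom{k-1}{2}=\tfrac{(k-1)(k-2)}{2}$ edges, giving the second bound.

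The main obstacle is not really mathematical but interpretive: the statement as written is terse, so the only delicate step is nailing down that ``increased'' refers to the shared/duplicated portion of each subgraph (rather than, say, a comparison with the pre-partition connected component, under which the cut-to-interior edges would not admit such a clean bound). Once the intended quantity is identified, everything reduces to $|\cut|<k$ and the trivial edge count of a $(k-1)$-vertex graph, so no further technical work is needed.
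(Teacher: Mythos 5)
Your proof is correct and follows essentially the same route as the paper's: both arguments observe that the only material duplicated across the output subgraphs is the cut $\cut$ itself, bound its size by $|\cut|\le k-1$, and bound the duplicated edges by the edge count of a simple graph on $k-1$ vertices, namely $\frac{(k-1)(k-2)}{2}$. Your version is merely more explicit than the paper's two-line proof in pinning down that $V(g_i)\cap V(g_j)=\cut$ and $E(g_i)\cap E(g_j)\subseteq E(G'[\cut])$.
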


\begin{proof}
The vertex cut $\cut$ contains not more than $k-1$ vertices, and only these vertices exist in the overlapped part. Therefore, at most $\frac{(k-1)(k-2)}{2}$ incident edges are duplicated. 
\end{proof}

\begin{lemma}
\label{lem:cccnt}
Given a graph $G$ and an integer $k$, for each connected component $C$ obtained by overlapped partition in \refalg{frk}, $|V(C)| \ge k+1$.
\end{lemma}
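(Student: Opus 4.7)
The plan is to exploit the two invariants that \refalg{frk} maintains right before every call to $\opartition$. First, line~2 iteratively deletes all vertices of degree less than $k$, so the graph $G_i$ on which $\fcbase$ and $\opartition$ operate has minimum degree at least $k$ (within $G_i$ itself, since $G_i$ is a whole connected component of the reduced graph). Second, $\fcbase$ only returns a cut $\cut$ with $|\cut|<k$, and $\opartition$ is invoked only in that branch. These two facts are the only ingredients needed.

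Next I would take any overlapped subgraph $C = G_i[V(G'_j)\cup \cut]$ produced by $\opartition$, where $G'_j$ is one of the connected components of $G_i\setminus \cut$, and bound $|V(G'_j)|$ from below. Pick an arbitrary vertex $v\in V(G'_j)$. In $G_i$, $v$ has at least $k$ neighbours, and since $\cut$ separates $G'_j$ from every other component of $G_i\setminus \cut$, every neighbour of $v$ lies in $\cut\cup(V(G'_j)\setminus\{v\})$. Counting gives
\[
|\cut| + |V(G'_j)| - 1 \;\geq\; k,
\qquad\text{hence}\qquad
|V(G'_j)| \;\geq\; k+1-|\cut|.
\]
Because $V(G'_j)$ and $\cut$ are disjoint by construction of $\opartition$,
\[
|V(C)| \;=\; |V(G'_j)| + |\cut| \;\geq\; (k+1-|\cut|) + |\cut| \;=\; k+1,
\]
which is the desired inequality.

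There is essentially no obstacle once the degree guarantee is identified; the argument is a one-line counting step. The only thing I would double-check for presentation is that the minimum-degree invariant is re-established at every recursive level: each recursive call in line~11 restarts with line~2, so the same bound $\deg_{G_i}(v)\geq k$ is available every time $\opartition$ fires. Consequently the conclusion $|V(C)|\ge k+1$ applies uniformly to every piece produced anywhere in the recursion, which is exactly the form needed by the subsequent counting arguments (in particular the bound on $|\vcc_k(G)|$ promised in \refsec{pre:why}).
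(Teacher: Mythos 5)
Your proof is correct and follows essentially the same route as the paper's: both arguments rest on the minimum-degree invariant from the $k$-core pruning in line~2 and on the observation that every neighbour of a non-cut vertex of $C$ must lie in $\cut\cup V(G'_j)\subseteq V(C)$. The paper phrases the final step as ``$u$ plus its $k$ neighbours are all in $C$'' rather than splitting the count between $\cut$ and $V(G'_j)$, but the content is identical, and your remark that the invariant is re-established at each recursive call is a valid (if implicit in the paper) point.
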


\begin{proof}
Let $\cut$ denote a vertex cut in an overlapped partition. $C$ is one of the connected components obtained in this partition. Let $H$ denote the vertex set of all vertices in $V(C)$ but not in $\cut$, i.e., $H = \{u | u \in V(C), u\not\in \cut \}$. We have $H\neq \emptyset$.  Note that each vertex in the graph has a degree at least $k$ in $G$ (line 5 in \refalg{frk}). There exist at least $k$ neighbors for each vertex $u$ in $H$ and therefore for each neighbor $v$ of $u$ we have $v \in C$ according to \reflem{nbrsave}. Thus, we have $|V(C)| \ge k+1$.
\end{proof}

\begin{lemma}
\label{lem:cutcnt}
Given a graph $G$ and an integer $k$, the total number of overlapped partitions during the algorithm $\frk$ is no larger than $\frac{n-k-1}{2}$.
\end{lemma}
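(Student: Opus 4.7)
The plan is to bound the number of partitions by a global conservation argument, tracking two running quantities across the whole execution of $\frk(G,k)$: the number of pieces currently maintained in the recursion, and the sum of vertex counts over those pieces. Lemmas \ref{lem:dpcnt} and \ref{lem:cccnt} control how these quantities evolve and are compatible, respectively.

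Concretely, let $P$ denote the total number of calls to $\opartition$ and $R$ the total number of pieces ultimately returned, i.e.\ $|\vcc_k(G)|$. First I would observe that each partition replaces a single piece by $c_i \ge 2$ overlapping subgraphs, so the net increase in piece count is $\sum_i (c_i - 1)$. Starting from the single connected graph $G$ and ending with $R$ pieces, this yields the identity $R = 1 + \sum_i (c_i - 1) \ge 1 + P$, hence $P \le R - 1$.

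Next I would bound $R$ using a vertex-count accounting. By \opartition, a partition of a piece $C$ with cut $\cut$, $|\cut| \le k-1$, produces $c_i$ pieces whose vertex counts sum to $|V(C)| + (c_i - 1)|\cut| \le |V(C)| + (c_i - 1)(k-1)$ by Lemma \ref{lem:dpcnt}. Telescoping over all partitions, and noting the initial total vertex count is at most $n$, gives
\begin{equation*}
\sum_{j=1}^{R} |V(G_j)| \;\le\; n + (k-1)\sum_i (c_i - 1) \;=\; n + (k-1)(R-1).
\end{equation*}
On the other hand, Lemma \ref{lem:cccnt} guarantees every piece has at least $k+1$ vertices, so the same sum is at least $(k+1)R$. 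Combining the two bounds gives $(k+1)R \le n + (k-1)(R-1)$, which rearranges to $2R \le n - k + 1$, so $R \le (n-k+1)/2$ and therefore $P \le R - 1 \le (n-k-1)/2$.

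The main obstacle is getting the telescoping right when a single partition creates more than two pieces: both the component count and the vertex total pick up factors of $c_i - 1$, and these must cancel cleanly when combined. A minor technical detail is that line~2 of \refalg{frk} may fragment $G$ into several components before any partition occurs; applying the same counting inside each such initial component and summing the resulting bounds only strengthens the inequality, so the global bound $(n-k-1)/2$ is preserved.
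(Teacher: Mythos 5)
Your proof is correct and follows essentially the same double-counting argument as the paper: lower-bound the total vertex count by $(k+1)$ per piece via Lemma~\ref{lem:cccnt}, upper-bound it by $n$ plus the duplicated cut vertices via Lemma~\ref{lem:dpcnt}, and compare. Your version is in fact slightly more careful than the paper's, which tacitly treats each partition as adding at most $k-1$ vertices in total; your explicit telescoping over $c_i$-way partitions shows why the two $(c_i-1)$ factors cancel and the bound survives.
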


\begin{proof}
Suppose that $\lambda$ is the total number of overlapped partitions during the whole algorithm $\frk$. This  generates at least $\lambda+1$ connected components. We know from \reflem{cccnt} that each connected component contains at least $k+1$ vertices. Thus, we have at least $(\lambda+1)(k+1)$ vertices in total. 

On the other hand, we increase at most $k-1$ vertices in each subgraph obtained by an overlapped partition according to \reflem{dpcnt}. Thus, at most $\lambda(k-1)$ vertices are added. We obtain the following formula.

\begin{center}
$(\lambda+1)(k+1) \le n+\lambda(k-1)$
\end{center}

Rearranging the formula, we have $\lambda \le \frac{n-k-1}{2}$.
\end{proof}

Next, we prove the upper bound for number of $k$-\vccs.

\begin{theorem}
\label{thm:numkvcc}
Given a graph $G$ and an integer $k$, there are at most $\frac{n}{2}$ $k$-\vccs, i.e., $|\vcc_k(G)| < \frac{|V(G)|}{2}$.
\end{theorem}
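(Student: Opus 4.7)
The plan is to bound $N = |\vcc_k(G)|$ by double-counting vertices: relate the total vertex-count summed over all returned $k$-\vccs both from above (via how many vertices get duplicated through overlapped partitions) and from below (via the minimum size of each $k$-\vcc). The previous lemmas essentially hand us both bounds; the only real work is combining them correctly, being careful that a single overlapped partition may split a subgraph into more than two components.

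First I would fix notation: let $\lambda$ be the total number of overlapped partitions executed by \frk, and for the $i$-th partition let $c_i \ge 2$ be the number of resulting subgraphs, so that the recursion tree has $N = 1 + \sum_{i=1}^{\lambda}(c_i - 1)$ leaves, one per $k$-\vcc. Next I would refine the accounting behind \reflem{dpcnt}: when a subgraph is split by a cut $\cut$ with $|\cut| \le k-1$ into $c_i$ overlapping pieces, the cut vertices are present once before the partition and $c_i$ times after, so the total vertex count (summed over current subgraphs) grows by exactly $(c_i-1)|\cut| \le (c_i-1)(k-1)$. Summing over all $\lambda$ partitions, the overall growth is at most $(k-1)\sum_{i}(c_i - 1) = (k-1)(N-1)$, so the total number of vertex occurrences across the $N$ returned $k$-\vccs is at most $n + (k-1)(N-1)$.

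For the lower bound I would invoke \reflem{cccnt}, which guarantees that every component produced by an overlapped partition contains at least $k+1$ vertices; together with the fact that \frk only returns subgraphs that are themselves $k$-vertex connected (hence have more than $k$ vertices by \refdef{kcon}), this gives $|V(G_j)| \ge k+1$ for every $G_j \in \vcc_k(G)$, and therefore the total vertex count is at least $N(k+1)$. Chaining the two inequalities yields
\begin{equation*}
N(k+1) \;\le\; n + (k-1)(N-1),
\end{equation*}
which rearranges to $2N \le n - k + 1$, i.e.\ $N \le (n-k+1)/2 < |V(G)|/2$ for any $k \ge 2$, establishing \refeq{cnum}.

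The only step that requires any care is the bookkeeping in the paragraph above: \reflem{dpcnt} is stated per newly-created subgraph, but when a single cut produces $c_i > 2$ components one must avoid double-counting the cut and remember that only $c_i - 1$ of the new subgraphs constitute a net increase. Once that is phrased cleanly, the rest is just arithmetic. No flow or connectivity machinery is needed beyond what is already established.
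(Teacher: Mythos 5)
Your overall strategy is the same as the paper's proof of this theorem: double-count vertices across the partition recursion, bounding the total from above by $n$ plus the duplication introduced by the cuts (at most $k-1$ new vertex occurrences per extra piece, which is your sharpened, per-partition form of \reflem{dpcnt}) and from below by the minimum size of the pieces via \reflem{cccnt}. The accounting ``a cut $\cut$ split into $c_i$ pieces adds exactly $(c_i-1)|\cut|$ vertex occurrences'' is correct and is in fact a cleaner statement than \reflem{dpcnt} itself.

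The gap is the identity $N = 1 + \sum_i (c_i-1)$, i.e.\ the claim that every leaf of the partition recursion is a returned $k$-\vcc. That is not true: a piece $G'[V(G'_i)\cup\cut]$ produced by \opartition is handed back to \frk, whose first step deletes vertices of degree less than $k$; the duplicated cut vertices may well have degree below $k$ inside their own piece, the deletion can cascade, and the piece can shrink to something containing no $k$-\vcc (or to nothing) without ever being partitioned again. Such a piece is a leaf of the recursion but contributes nothing to $N$, so the number of leaves $L = 1+\sum_i(c_i-1)$ can strictly exceed $N$. Your displayed inequality $N(k+1) \le n + (k-1)(N-1)$ then does not follow, because the duplication term on the right-hand side is really $(k-1)(L-1)$, which may be larger than $(k-1)(N-1)$. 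This is precisely the case the paper's proof guards against by splitting the terminal components into $x$ components that are $k$-\vccs and $y$ components containing no $k$-\vcc, with $x+y=\sigma$ and the strict inequality $\lambda < \sigma$ supplying the slack that your version obtains by (incorrectly) setting $L=N$. The repair stays entirely within your framework: apply both of your bounds to $L$ rather than to $N$ --- by \reflem{cccnt} every piece has at least $k+1$ vertices at the moment it is created, so $L(k+1) \le n + (k-1)(L-1)$, hence $L < n/2$, and the $k$-\vccs are among (at most one per) these leaves --- or adopt the paper's $x$/$y$ bookkeeping directly. With that correction the rest of your arithmetic goes through unchanged.
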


\begin{proof}
Similar to the proof of \reflem{cutcnt}, let $\lambda$ be the times of overlapped partitions in the whole algorithm $\frk$. At most $\lambda(k-1)$ vertices are increased. Let $\sigma$ be the number of connected components obtained in all partitions. We have $\sigma > \lambda$. Each connected component contains at least $k+1$ vertices according to \reflem{cccnt}. Note that each connected component is either a $k$-\vcc or a graph that does not contain any $k$-\vcc. Otherwise, the connected component will be further partitioned. Let $x$ be the number of $k$-\vccs and $y$ be the number of connected components that do not contain any $k$-\vcc, i.e., $x+y = \sigma$. We know that a $k$-\vcc contains at least $k+1$ vertices. Thus there are at least $x(k+1)+y(k+1)$ vertices after finishing all partitions. We have following formula.

\begin{center}
$x(k+1)+y(k+1) \le n+\lambda(k-1)$
\end{center}

Since $\lambda < \sigma$ and $\sigma = x+y$, we rearrange the formula as follows.

\begin{center}
$x(k+1)+y(k+1) < n+x(k-1)+y(k-1)$

$x(k+1) < n+x(k-1)$
\end{center}

Therefore, we have $x < \frac{n}{2}$.
\end{proof}

\begin{theorem}
\label{thm:time}
The total time complexity of $\frk$ is $O(\min(n^{1/2},$ $k)\cdot m  \cdot (n+\delta^2) \cdot n)$.
\end{theorem}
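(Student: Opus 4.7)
The plan is to combine the per-call cost of $\fcbase$ with a bound on how many times $\fcbase$ is invoked across the entire recursion tree of $\frk$, and then verify that the remaining bookkeeping (computing $k$-cores, identifying connected components, constructing the sparse certificate and the directed flow graph, and performing the overlapped partition) does not exceed this budget.

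First I would isolate the cost of a single invocation of $\fcbase$. By \reflem{lccnt}, one call issues at most $O(n+\delta^2)$ calls to $\lcut$, and by \reflem{tlcut} each such call costs $O(\min(n^{1/2},k)\cdot m)$. Multiplying, one invocation of $\fcbase$ runs in $O(\min(n^{1/2},k)\cdot m \cdot (n+\delta^2))$ time. The auxiliary work performed before the $\lcut$ loop---constructing the sparse certificate via $k$ scan-first searches (which, by \refthm{maincert}, takes $O(k(n-1))$ edges and $O(m+n)$ total time) and building the directed flow graph---is absorbed into this bound.

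Next I would bound the total number of $\fcbase$ invocations over the whole execution. Each call either returns an empty cut (signaling that the current subgraph is a $k$-\vcc and terminating that branch) or returns a nonempty cut (triggering one overlapped partition). \reflem{cutcnt} bounds the number of partitions by $(n-k-1)/2 = O(n)$, and \refthm{numkvcc} bounds the number of produced $k$-\vccs by $n/2$. Since the number of $\fcbase$ calls equals the number of partitions plus the number of returned $k$-\vccs, the total count is $O(n)$. Multiplying the per-call cost by $O(n)$ gives exactly $O(\min(n^{1/2},k)\cdot m \cdot (n+\delta^2)\cdot n)$.

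The subtle step, which I expect to be the main obstacle, is arguing that the parameters $n$, $m$, and $\delta$ used in the per-call bound can safely be replaced by the original input parameters even though recursive subproblems operate on subgraphs whose total size slightly exceeds $n$ due to vertex duplication across overlaps. Here I would invoke \reflem{dpcnt}: each partition duplicates at most $k-1$ vertices and $(k-1)(k-2)/2$ edges, so after $O(n)$ partitions the sum of sizes of all subgraphs fed to $\fcbase$ remains $O(n+nk)=O(nk)$ in vertices and $O(m+nk^2)$ in edges, both dominated by the original $n$ and $m$ for the asymptotic bound (and in any case the size of any individual subgraph never exceeds that of $G$, so using $n$ and $m$ as global upper bounds is valid). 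Finally, the outer loop work in \refalg{frk} (iterative degree pruning in line 2 and connected-component identification in line 3) contributes only $O(m+n)$ per recursive call and is dominated. Combining these pieces yields the claimed $O(\min(n^{1/2},k)\cdot m\cdot(n+\delta^2)\cdot n)$ bound.
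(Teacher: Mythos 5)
Your proposal is correct and follows essentially the same route as the paper: it bounds the per-invocation cost of $\fcbase$ via \reflem{tlcut} and \reflem{lccnt}, and bounds the number of invocations by $p = p_1 + p_2 < n$ using \reflem{cutcnt} for the partitions and \refthm{numkvcc} for the returned $k$-\vccs. Your additional care about duplicated vertices across overlapped partitions and the dominated bookkeeping costs is sound but not needed beyond what the paper's argument already establishes, since each subgraph is an induced subgraph of its parent and hence bounded by the original $n$ and $m$.
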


\begin{proof}
The total time complexity of $\frk$ is dependent on the number of times $\fcbase$ is invoked. Suppose $\fcbase$ is invoked $p$ times during the whole $\frk$ algorithm, the number of overlapped partitions during the whole $\frk$ algorithm is $p_1$ and the total number of $k$-\vccs is $p_2$. It is easy to see that $p=p_1+p_2$. From \reflem{cutcnt}, we know that $p_1\leq \frac{n-k-1}{2} < \frac{n}{2}$. From \refthm{numkvcc}, we know that $p_2 < \frac{n}{2}$. Therefore, we have $p=p_1+p_2 < n$. According to \reflem{tlcut} and \reflem{lccnt}, the total time complexity of $\frk$ is $O(\min(n^{1/2},$ $k)\cdot m\cdot (n+\delta^2) \cdot n )$.
\end{proof}

\stitle{Discussion.} \refthm{time} shows that all $k$-\vccs can be enumerated in polynomial time. Although the time complexity is still high, it performs much better in practice. Note that the time complexity is the product of three parts:
\begin{itemize} 
\item The first part $O(\min(n^{1/2},$ $k)\cdot m)$ is the time complexity for $\lcut$ to test whether there exists a vertex cut of size smaller than $k$. In practice, the graph to be tested is much smaller than the original graph $G$ since (1) The graph to be tested has been pruned using the $k$-core technique and sparse certification technique. (2) Due to the graph partition scheme, the input graph is partitioned into many smaller graphs. 
\item The second part $O(n+\delta^2)$ is the number of times such that $\lcut$ (local connectivity testing) is invoked by the algorithm $\fcbase$. We will discuss how to significantly reduce the number of local connectivity testings in \refsec{opt}.
\item The third part $O(n)$ is the number of times $\fcbase$ is invoked. In practice, the number can be significantly reduced since the number of $k$-\vccs is usually much smaller than $\frac{n}{2}$.
\end{itemize}

In the next section, we will explore several search reduction techniques to speed up the algorithm.

\section{Search Reduction}
\label{sec:opt}
In the previous section, we introduce our basic algorithm. Recall that in the worst case, we need to test local connectivity between the source vertex $u$ and all other vertices in $G$ using $\lcut$ in$ \fcbase$, and we also need to test local connectivity for every pair of neighbors of $u$. For each pair of vertices, we need to compute the maximum flow in the directed flow graph. Therefore, the key to improving the algorithm is to reduce the number of local connectivity testings ($\lcut$). In this section, we propose several techniques to avoid unnecessary testings. We can avoid testing local connectivity of a vertex pair $(u,v)$ if we can guarantee that $u \equiv^k v$. We call such operation a sweep operation. Below, we introduce two ways to efficiently prune unnecessary testings, namely neighbor sweep and group sweep, in \refsec{opt:ns} and \refsec{opt:gs} respectively.

\subsection{Neighbor Sweep}
\label{sec:opt:ns}

In this section, we propose a neighbor sweep strategy to prune unnecessary local connectivity testings ($\lcut$) in the first phase of $\fcbase$. Generally speaking, given a source vertex $u$, for any vertex $v$, we aim to skip testing the local connectivity of $(u,v)$ according to the information of the neighbors of $v$. Below, we explore two neighbor sweep strategies, namely neighbor sweep using side-vertex and neighbor sweep using vertex deposit.

\subsubsection{Neighbor Sweep using Side-Vertex}
\label{sec:opt:ns:sv}

We first define \textit{side}-\textit{vertex} as follows. 

\begin{definition}
\label{def:sd}
\textsc{(Side-Vertex)} Given a graph $G$ and an integer $k$, a vertex $u$ is called a side-vertex if there does not exist a vertex cut $\cut$ such that $|\cut| < k$ and $u \in \cut$.
\end{definition}

Based on \refdef{sd}, we give the following lemma to show the transitive property regarding the local $k$ connectivity relation $\equiv^k$. 

\begin{lemma}
\label{lem:transitive}
Given a graph $G$ and an integer $k$,  suppose $a \equiv^k b$ and $b \equiv^k c$, we have $a \equiv^k c$ if $b$ is a side-vertex.
\end{lemma}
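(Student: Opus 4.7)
The plan is to argue by contradiction, using the definition of a side-vertex to force $b$ into a component of $G \setminus \cut$ and then doing a short case analysis on which component contains $b$.

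First, I would assume for contradiction that $a \not\equiv^k c$, so by Definition~\ref{def:lc} there exists a vertex subset $\cut$ with $|\cut| < k$ whose removal places $a$ and $c$ in different connected components of $G \setminus \cut$. The key observation is then that $\cut$ itself is a vertex cut of $G$ of size strictly less than $k$, so by the definition of side-vertex ($b$ belongs to no such cut) we must have $b \notin \cut$. Consequently $b$ survives the removal and lies in some connected component of $G \setminus \cut$.

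Next I would split into cases based on the component containing $b$. If $b$ is in the same component as $a$, then $a$ and $c$ being separated by $\cut$ means $b$ and $c$ are separated by $\cut$ as well, so $\cut$ witnesses $\kappa(b,c,G) < k$, contradicting $b \equiv^k c$. Symmetrically, if $b$ lies in the same component as $c$, then $\cut$ separates $a$ from $b$, contradicting $a \equiv^k b$. The remaining case is that $b$ sits in a third component distinct from both of $a$'s and $c$'s; here $\cut$ still separates $a$ from $b$ (and $b$ from $c$), again contradicting $a \equiv^k b$. In every case we reach a contradiction, so $a \equiv^k c$.

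The step that requires the most care is recognizing that a $u$-$v$ cut of size less than $k$ is, in particular, a vertex cut of $G$ of size less than $k$ in the sense of Definition~\ref{def:vertexcut}, so that Definition~\ref{def:sd} applies and forces $b \notin \cut$. Once that observation is in place the case analysis is routine, and the third (``isolated'') case is the only one that might be overlooked, since it is slightly less symmetric than the other two but is handled by exactly the same reasoning.
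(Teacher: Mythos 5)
Your proof is correct and follows essentially the same route as the paper's: assume $a \not\equiv^k c$, use the side-vertex property to conclude $b$ is not in the separating cut, and derive that the cut must then separate $b$ from $a$ or from $c$, contradicting one of the hypotheses. The paper compresses your three-way case analysis into the single sentence ``then we have either $b \not\equiv^k a$ or $b \not\equiv^k c$,'' but the underlying argument is identical.
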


\begin{proof}
We prove it by contradiction. Assume that $b$ is a side-vertex and $a \not\equiv^k c$. There exists a vertex cut with $k-1$ or fewer vertices between $a$ and $c$. $b$ is not in any such cut since it is a side-vertex. Then we have either $b \not\equiv^k a$ or $b \not\equiv^k c$. This contradicts the precondition that $a \equiv^k b$ and $b \equiv^k c$.
\end{proof}

A wise way to use the transitive property of the local connectivity relation in \reflem{transitive} can largely reduce the number of unnecessary testings.  Consider a selected source vertex $u$ in algorithm $\fcbase$. We assume that $\lcut$ (line~5) returns $\emptyset$ for a vertex $v$, i.e., $u \equiv^k v$. We know from \reflem{transitive} that the vertex pair $(u,w)$ can be skipped for local connectivity testing if $(i)$ $v \equiv^k w$ and $(ii)$ $v$ is a side-vertex. For condition $(i)$, we can use a simple necessary condition according to \reflem{nbrsave}, that is,  for any vertices $v$ and $w$, $v \equiv^k w$ if $(v,w) \in E$. In the following, we focus on condition $(ii)$ and look for necessary conditions to efficiently check whether a vertex is a side-vertex. 



\stitle{Side-Vertex Detection.} To check whether a vertex is a side-vertex, we can easily obtain the following lemma based on \refdef{sd}.

\begin{lemma}
\label{lem:issv}
Given a graph $G$, a vertex $u$ is a side-vertex if and only if $\forall v, v' \in N(u)$, $v \equiv^k v'$.
\end{lemma}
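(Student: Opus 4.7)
The plan is to prove both directions of this biconditional by contradiction, leveraging the earlier definitions together with \reflem{nbrsave}.

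For the forward direction ($\Rightarrow$), I will assume $u$ is a side-vertex and suppose for contradiction that there exist $v, v' \in N(u)$ with $v \not\equiv^k v'$. Then some vertex cut $\cut$ of size strictly less than $k$ separates $v$ from $v'$; by the definition of a $v$-$v'$ cut we have $v, v' \notin \cut$. I then case-split on whether $u \in \cut$. If $u \in \cut$, this directly contradicts the side-vertex hypothesis, since $\cut$ is a cut of size less than $k$ containing $u$. Otherwise $u \notin \cut$, and since the edges $(u,v)$ and $(u,v')$ are preserved in $G \setminus \cut$, the three vertices $u, v, v'$ all lie in the same connected component of $G \setminus \cut$, contradicting that $\cut$ separates $v$ from $v'$.

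For the backward direction ($\Leftarrow$), I will prove the contrapositive: assuming $u$ is not a side-vertex, I produce a pair $v, v' \in N(u)$ with $v \not\equiv^k v'$. The non-side hypothesis furnishes a cut $\cut$ with $u \in \cut$ and $|\cut| < k$, so $G$ is not $k$-vertex connected, and the pair is directly supplied by \reflem{cutnc}. If a self-contained argument is preferred, I would instead choose $\cut$ to be minimal by set inclusion among all cuts of size less than $k$ that contain $u$. Minimality forces $G \setminus (\cut \setminus \{u\})$ to remain connected, so adding $u$ back to $G \setminus \cut$ must re-join otherwise disconnected pieces, meaning $u$ has neighbors in at least two distinct components of $G \setminus \cut$. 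Picking one neighbor from each such component yields $v, v' \in N(u)$ with $v, v' \notin \cut$ that $\cut$ separates; hence $\kappa(v, v', G) \le |\cut| < k$ and $v \not\equiv^k v'$, as required.

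The main obstacle is the backward direction. The subtlety is that $u$ can sit in a ``redundant'' cut whose disconnecting action does not actually require $u$; in that case all of $u$'s neighbors may lie on a single side of the cut, and no separating pair of neighbors is exhibited by that particular $\cut$. Choosing a minimal cut containing $u$ (or equivalently, relying on \reflem{cutnc}, which bakes in the same reduction) is exactly the step that eliminates this degeneracy and forces $u$'s neighborhood to straddle the cut.
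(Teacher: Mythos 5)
The paper offers no proof of this lemma (it is asserted to follow ``easily'' from \refdef{sd}), so your argument can only be judged on its own merits. Your forward direction is correct and is the natural argument: a cut of size less than $k$ separating two neighbours $v,v'$ of $u$ cannot contain $u$ (that would contradict $u$ being a side-vertex), and if it omits $u$ then the edges $(u,v)$ and $(u,v')$ keep $u$, $v$, $v'$ in one component of the remaining graph, a contradiction.

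The genuine gap is in your self-contained backward direction, and the repair you propose does not close it. You take $\cut$ inclusion-minimal among cuts of size less than $k$ that \emph{contain} $u$ and claim minimality forces $G\setminus(\cut\setminus\{u\})$ to be connected. It does not: $\cut\setminus\{u\}$ no longer contains $u$, so it lies outside the family you minimised over, and nothing prevents it from being a cut by itself. Concretely, take $k=3$ and $G$ on $\{u,a,b,w,c,d\}$ with edges $(a,b)$, $(a,w)$, $(b,w)$, $(a,u)$, $(b,u)$, $(c,d)$, $(c,w)$, $(d,w)$. Then $\{u,w\}$ is a vertex cut of size $2<k$ containing $u$ and is minimal in your family (since $\{u\}$ alone is not a cut), yet $\{w\}$ alone already disconnects $G$; both neighbours $a,b$ of $u$ lie on one side, and $a\equiv^{3}b$ because they are adjacent. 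Note that this example actually falsifies the backward implication of the lemma itself, and of \reflem{cutnc}, under the paper's literal \refdef{vertexcut}, which imposes no minimality on cuts; so your fallback of simply citing \reflem{cutnc} only relocates the difficulty to another statement the paper leaves unproved. Both lemmas become correct once ``vertex cut'' in \refdef{sd} is read as an \emph{inclusion-minimal} vertex cut (the reading implicit in \cite{esfahanian1984}): for a minimal cut $\cut$, every vertex of $\cut$ has neighbours in at least two components of $G-\cut$, which is exactly the straddling property your argument needs, and then picking one neighbour of $u$ from each of two such components yields $v\not\equiv^{k}v'$. You correctly sensed the degeneracy of ``redundant'' cuts; the fix is to restrict the definition to minimal cuts, not to minimise over the cuts containing $u$.
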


Recall that two vertices are $k$-local connected if they are neighbors of each other. For the $k$-local connectivity of non-connected vertices, we give another necessary condition below.

\begin{lemma}
\label{lem:commonnbr}
Given two vertices $u$ and $v$, $u \equiv^k v$ if $|N(u) \cap N(v)| \ge k$.
\end{lemma}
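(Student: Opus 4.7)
The plan is to prove this by exhibiting $k$ internally vertex-disjoint paths between $u$ and $v$ and then invoking Menger's theorem in its local form. Concretely, if $|N(u) \cap N(v)| \ge k$, I can pick distinct common neighbors $w_1, w_2, \ldots, w_k \in N(u) \cap N(v)$ and form the length-2 paths $P_i : u \to w_i \to v$. Each $P_i$ has exactly one internal vertex $w_i$, and since the $w_i$ are distinct, the paths $P_1, \ldots, P_k$ are pairwise internally vertex-disjoint. This gives $k$ internally vertex-disjoint $u$-$v$ paths in $G$.

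Next I would invoke the local version of Menger's theorem (the $u$-$v$ analogue of \refthm{menger} that the paper already relies on conceptually): the maximum number of internally vertex-disjoint $u$-$v$ paths equals the minimum size of a $u$-$v$ vertex separator, i.e., $\vc{u,v,G}$. Since we have exhibited at least $k$ such paths, the minimum $u$-$v$ cut must contain at least $k$ vertices; hence $\vc{u,v,G} \ge k$, which is exactly the statement $u \equiv^k v$.

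The only subtlety, and the main thing to handle cleanly, is the edge case in which $u$ and $v$ are themselves adjacent. In that case a standard $u$-$v$ vertex cut in $G$ does not exist (removing non-endpoint vertices cannot separate two adjacent vertices), and \reflem{nbrsave} already gives $u \equiv^k v$ for any $k$, so the conclusion is immediate and the common-neighbor argument is not needed. Thus the proof splits into a trivial case $(u,v) \in E$ handled by \reflem{nbrsave}, and the nontrivial case $(u,v) \notin E$ handled by the Menger argument above. I expect the bulk of the write-up to be one or two sentences: verifying the disjointness of the length-2 paths and citing Menger. No further combinatorics is required.
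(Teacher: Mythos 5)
Your proposal is correct, and its core combinatorial content is the same as the paper's: each of the $k$ distinct common neighbors supplies an independent $u$\nobreakdash-$v$ connection, so no set of fewer than $k$ vertices can separate $u$ from $v$. The difference is in how that observation is converted into $\vc{u,v,G}\ge k$. The paper argues directly from the definition of a $u$-$v$ cut: removing any $k-1$ vertices leaves at least one common neighbor intact, so $u$ and $v$ remain connected and no qualifying cut exists. You instead package the common neighbors as $k$ internally vertex-disjoint length-2 paths and invoke the local form of Menger's theorem. That works, but it is heavier machinery than needed: to get the lower bound on the cut you only require the easy direction of the path/cut duality (every $u$-$v$ separator must contain an internal vertex of each of the $k$ disjoint paths, hence has size at least $k$), not the full min-max equality. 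Phrasing it that way would make your write-up self-contained without appealing to \refthm{menger}. Your explicit treatment of the adjacent case via \reflem{nbrsave} is a reasonable piece of extra care that the paper elides; under the paper's convention that $\vc{u,v,G}=+\infty$ when no $u$-$v$ cut exists, that case is consistent with the claim either way.
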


\begin{proof}
$u$ and $v$ cannot be disjoint after removing any $k-1$ vertices since they have at least $k$ common neighbors. Thus $u$ and $v$ must be $k$-local connected.
\end{proof}

Combining \reflem{issv} and \reflem{commonnbr}, we derive the following necessary condition to check whether a vertex is a  side-vertex.

\begin{theorem}
\label{thm:necessaryside}
A vertex $u$ is a side-vertex if $\forall v, v' \in N(u)$, either $(v, v') \in E$ or $|N(v) \cap N(v')| \ge k$. 
\end{theorem}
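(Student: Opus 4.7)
The plan is to recognize this theorem as an immediate consequence of the three preceding facts: Lemma~\ref{lem:issv} (the characterization of a side-vertex in terms of pairwise $k$-local connectivity of its neighbors), Lemma~\ref{lem:nbrsave} (adjacent vertices are $k$-local connected), and Lemma~\ref{lem:commonnbr} (two vertices sharing at least $k$ common neighbors are $k$-local connected). The hypothesis of the theorem is precisely a disjunction matching the hypotheses of these last two lemmas, so the proof should be a short case analysis followed by an application of Lemma~\ref{lem:issv}.

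First I would fix an arbitrary pair $v, v' \in N(u)$ and split into two cases according to the hypothesis. In the first case, $(v,v') \in E$, and Lemma~\ref{lem:nbrsave} immediately gives $v \equiv^k v'$. In the second case, $|N(v) \cap N(v')| \ge k$, and Lemma~\ref{lem:commonnbr} gives $v \equiv^k v'$. Either way, the conclusion $v \equiv^k v'$ holds for the chosen pair. Since $v$ and $v'$ were arbitrary elements of $N(u)$, this shows $v \equiv^k v'$ for every pair of neighbors of $u$.

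Finally I would invoke Lemma~\ref{lem:issv}, whose sufficient direction converts the pairwise $k$-local connectivity of all neighbors of $u$ into the conclusion that $u$ is a side-vertex. There is no genuine obstacle in the argument, since all the work has been done in the preceding lemmas; the theorem is essentially a packaging result that combines a structural characterization of side-vertices with two easily checkable sufficient conditions for $k$-local connectivity. The only small subtlety to note is that the pair $(v, v')$ may coincide (if $v = v'$), in which case $v \equiv^k v'$ is trivial, so the statement is vacuously handled and no separate case is needed.
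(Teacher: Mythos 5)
Your proposal is correct and matches the paper's proof, which simply states that the theorem "can be easily verified using" Lemmas~\ref{lem:nbrsave}, \ref{lem:issv}, and \ref{lem:commonnbr}; you have merely spelled out the case analysis that the paper leaves implicit. No discrepancy to report.
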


\begin{proof}
The theorem can be easily verified using \reflem{nbrsave}, \reflem{issv} and \reflem{commonnbr}.
\end{proof}

\begin{definition}
\label{def:ssv}
\textsc{(Strong Side-Vertex)} A vertex $u$ is called a strong side-vertex if it satisfies the conditions in \refthm{necessaryside}.
\end{definition}

Using strong side-vertex, we can define our first rule for neighbor sweep as follows.

\rtitle{(Neighbor Sweep Rule 1)} \textit{Given a graph $G$ and an integer $k$, let $u$ be a selected source vertex in algorithm $\fcbase$ and $v$ be a strong side-vertex in the graph. We can skip the local connectivity testings of all pairs of $(u,w)$ if we have $u \equiv^k v$ and $w \in N(v)$.}

We give an example to demonstrate \textit{neighbor sweep rule 1} below.

\begin{figure}[t]
\begin{center}
\begin{tabular}[t]{c}
\subfigure[A strong side-vertex $s$]{
	\includegraphics[width=0.49\columnwidth]{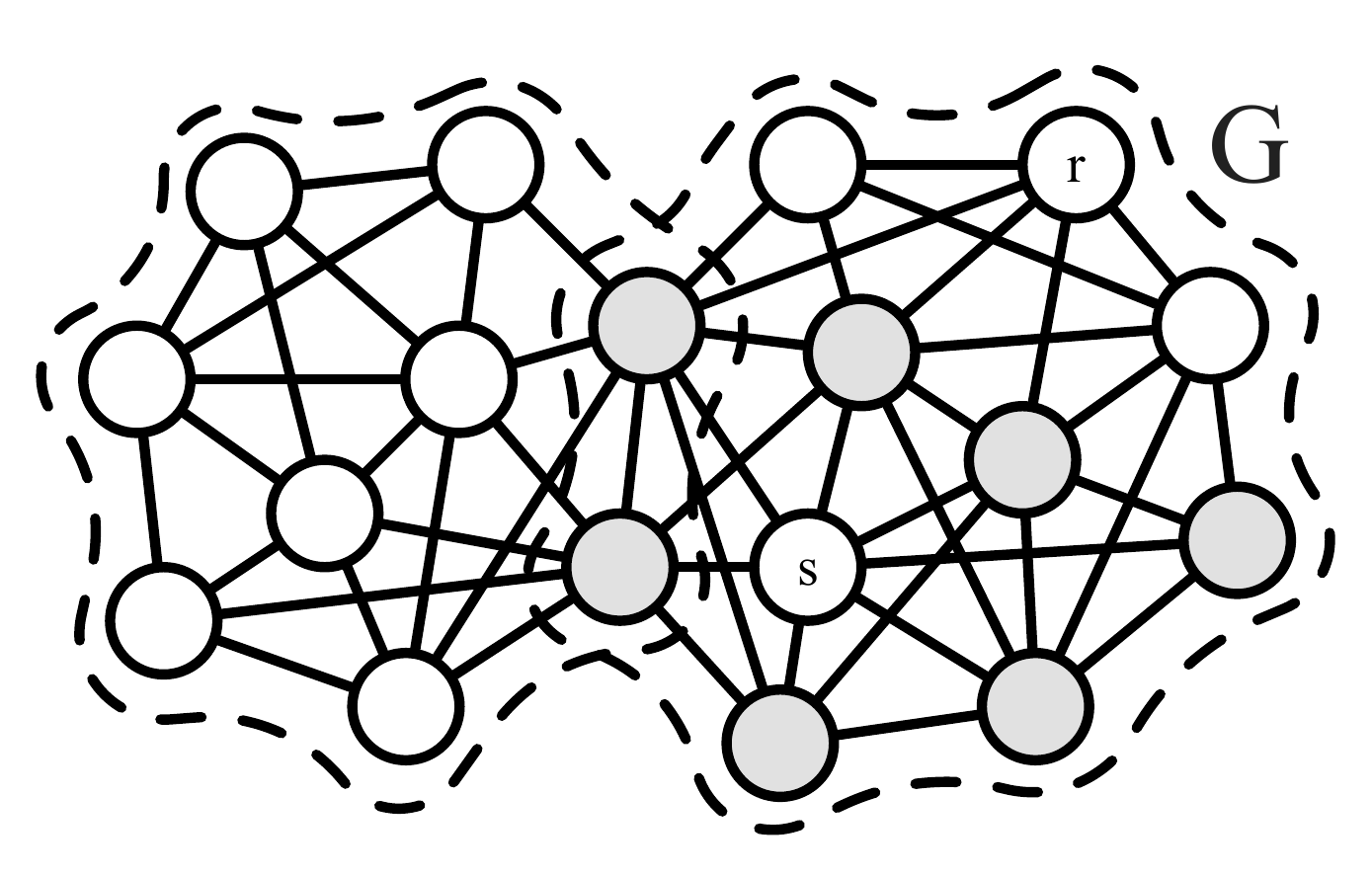}
}
\subfigure[Vertex deposit]{
	\includegraphics[width=0.49\columnwidth]{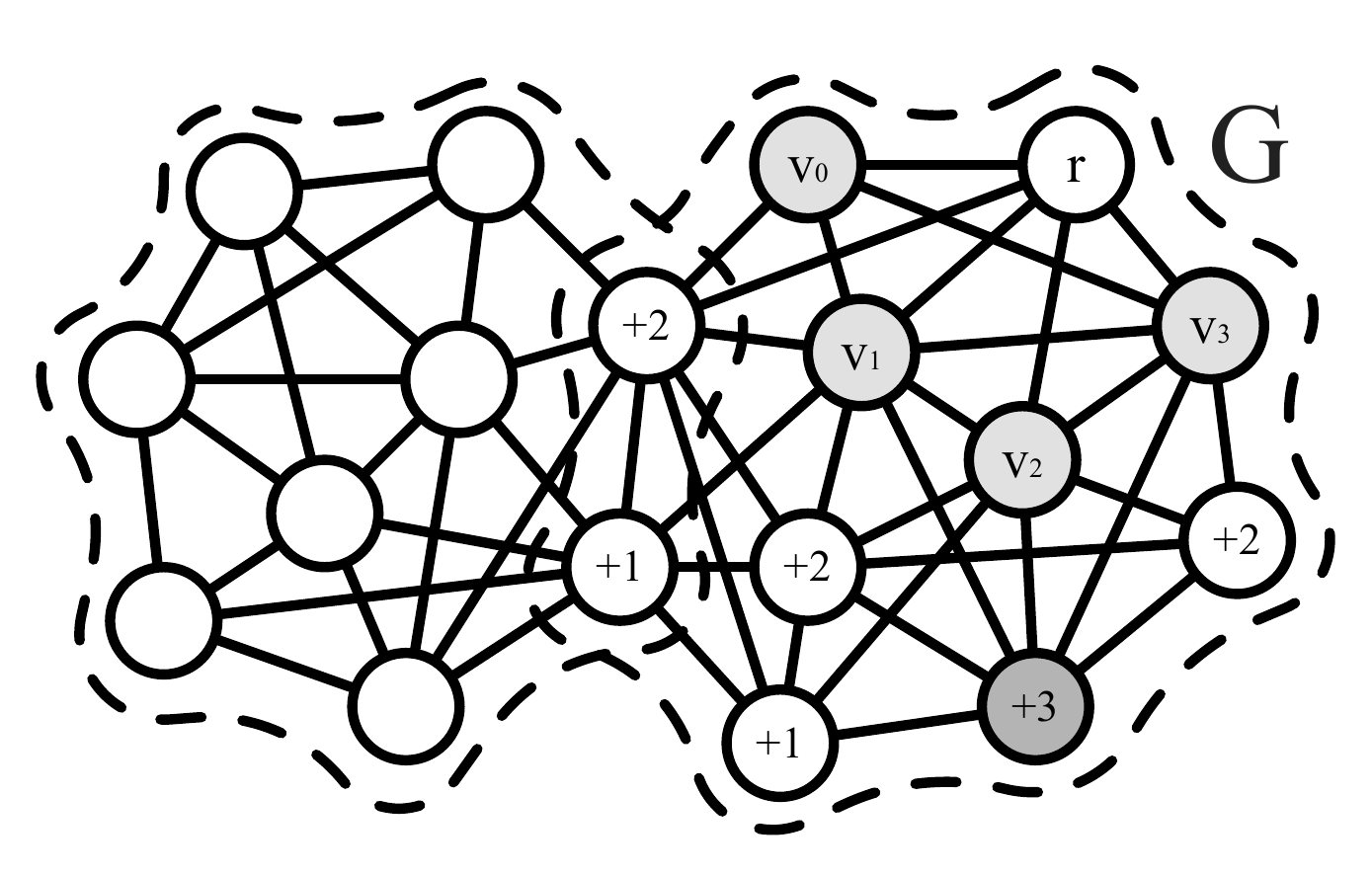}
}
\end{tabular}
\topcaption{Strong side-vertex and vertex deposit when $k=3$}
\label{fig:ssv}
\end{center}
\end{figure}

\begin{example}
\reffig{ssv} (a) presents a strong side-vertex $s$ in graph $G$ while parameter $k = 3$. Assume that $r$ is the source vertex. Any two neighbors of $s$ are either connected by an edge or have at least $3$ common neighbors. If first test the local connectivity between $r$ and $s$ and $r \equiv^k s$, we can safely sweep all neighbors of $s$, which are marked by the gray color in \reffig{ssv} (a).
\end{example}

Below, we discuss how to efficiently detect the strong side-vertices and maintain strong side-vertices while the graph is partitioned in the whole algorithm.

\stitle{Strong Side-Vertex Computation.} Following \refthm{necessaryside}, we can compute all strong side-vertices $v$ in advance and skip all neighbors of $v$ once $v$ is $k$ connected with the source vertex (line~5 in $\fcbase$). We can derive the following lemma.

\begin{lemma}
The time complexity of computing all strong side-vertices in graph $G$ is $O(\sum_{w\in V(G)}d(w)^2)$.
\end{lemma}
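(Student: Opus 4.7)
The plan is to verify the $O(\sum_{w \in V(G)} d(w)^2)$ bound by a two-phase computation that first accumulates pairwise common-neighbor counts and then directly checks the condition of \refthm{necessaryside} for each candidate vertex. In the first phase I would precompute, for every unordered pair $(v,v')$ that shares at least one common neighbor in $G$, the exact value $|N(v) \cap N(v')|$. The standard way to do this is to iterate over every vertex $w \in V(G)$ and, for each ordered pair of distinct neighbors $(v,v') \in N(w) \times N(w)$, increment a counter $c(v,v')$ stored in a hash map indexed by unordered pairs. Vertex $w$ contributes $d(w)(d(w)-1)$ increments, and each increment takes $O(1)$ expected time, so the first phase runs in $O(\sum_{w \in V(G)} d(w)^2)$ time.

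In the second phase, for each vertex $u \in V(G)$ I would check the necessary condition of \refthm{necessaryside} explicitly: for every unordered pair of neighbors $(v,v')$ of $u$, test whether $(v,v') \in E$ (an $O(1)$ adjacency lookup using a hashed incidence structure) or whether $c(v,v') \ge k$ (an $O(1)$ hash-map lookup). If at least one of the two disjuncts holds for every pair, $u$ is declared a strong side-vertex; otherwise the scan can exit early. Vertex $u$ contributes at most $\binom{d(u)}{2}$ constant-time checks, so the second phase also costs $O(\sum_{u \in V(G)} d(u)^2)$ in total.

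The key soundness observation, and the only non-routine point, is that every pair $(v,v')$ queried in the second phase consists of two neighbors of the common vertex $u$; hence $u \in N(v) \cap N(v')$, so the pair has at least one common neighbor and was necessarily registered in the hash map during the first phase. Therefore each lookup $c(v,v')$ returns the true value of $|N(v)\cap N(v')|$ and no query is missed. Combining the two phases yields the claimed time complexity. The main obstacle I would be careful about is not conceptual but implementational: ensuring that the hash table of unordered pairs supports $O(1)$ expected insertion and query, which is standard and preserves the asymptotic bound.
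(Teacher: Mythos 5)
Your proposal is correct and follows essentially the same two-phase approach as the paper: first accumulate common-neighbor counts by iterating over each vertex $w$ and all pairs of its neighbors (costing $O(\sum_{w\in V(G)}d(w)^2)$), then check the condition of \refthm{necessaryside} for each vertex by examining all pairs of its neighbors. Your explicit soundness remark---that every queried pair has $u$ as a common neighbor and is therefore registered in phase one---is a nice clarification the paper leaves implicit, but the argument is the same.
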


\begin{proof}
To compute all strong side-vertices in a graph $G$, we first check all $2$-hop neighbors $v$ for each vertex $u$. Since $v$ and $u$ share a common vertex of $1$-hop neighbor, we can easily obtain all vertices which have $k$ common neighbors with $u$. Any vertex $w$ is considered as $1$-hop neighbor of other vertices $u$ $d(w)$ times. We use $d(w)$ steps to obtain $2$-hop neighbors of $u$ which share a common vertex $w$ with $u$. This phase costs $O(\sum_{w\in V(G)}d(w)^2)$ time.

Now for each given vertex $u$, we have all vertices $v$ sharing $k$ common neighbors with it. For each vertex $w$, we check whether any two neighbors of $w$ have $k$ common neighbors. This phase also costs $O(\sum_{w\in V(G)}d(w)^2)$ time. Consequently, the total time complexity is $O(\sum_{w\in V(G)}d(w)^2)$.
\end{proof}

After computing all strong side-vertices for the original graph $G$, we do not need to recompute the strong side-vertices for all vertices in the partitioned graph from scratch. Instead, we can find possible ways to reduce the number of strong side-vertex checks by making use of the already computed strong side-vertices in $G$. We can do this based on \reflem{sidevertexpart} and \reflem{sidevertexonly} which are used to efficiently detect non-strong side-vertices and strong side-vertices respectively. 


\begin{lemma}
\label{lem:sidevertexpart}
Let $G$ be a graph and $G_i$ be one of the graphs obtained by partitioning $G$ using $\opartition$ in \refalg{frk}, a vertex is a strong side-vertex in $G$ if it is a strong side-vertex in $G_i$. 
\end{lemma}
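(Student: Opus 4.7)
The plan is to prove the lemma by unfolding \refthm{necessaryside} (and thus \refdef{ssv}) in both graphs and exploiting the fact that $G_i = G[V(G'_i) \cup \cut]$ is, by construction of $\opartition$, an induced subgraph of $G$. This buys me two ``for-free'' containments that I will use repeatedly: $E(G_i) \subseteq E(G)$, and $N_{G_i}(w) \subseteq N_G(w)$ for every $w \in V(G_i)$. The monotonicity therefore runs in the right direction: any witness that certifies the strong side-vertex condition for $u$ in the smaller graph $G_i$ should remain a witness in $G$.

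The key structural step is to show that the neighborhood of $u$ is actually preserved between $G_i$ and $G$ whenever $u$ is not in the cut, i.e., whenever $u \in V(G'_i) \setminus \cut$. I would argue that any $w \in N_G(u)$ either lies in $\cut \subseteq V(G_i)$, or, if $w \notin \cut$, lies in the same connected component of $G \setminus \cut$ as $u$ (because the edge $(u,w)$ itself survives the removal of $\cut$), which is precisely $G'_i$; hence $w \in V(G_i)$, and since $G_i$ is induced, $w \in N_{G_i}(u)$. Combined with the reverse containment $N_{G_i}(u) \subseteq N_G(u)$, this yields $N_G(u) = N_{G_i}(u)$ for non-cut $u$.

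Given these two ingredients, the main argument is short. Assume $u$ is a strong side-vertex in $G_i$ and pick any pair $v, v' \in N_G(u)$. The neighborhood-equality step places both $v$ and $v'$ in $N_{G_i}(u)$, so the hypothesis applies: either $(v,v') \in E(G_i)$, and then $(v,v') \in E(G)$ by the edge containment, or $|N_{G_i}(v) \cap N_{G_i}(v')| \ge k$, and then $|N_G(v) \cap N_G(v')| \ge k$ by the neighborhood containment. In either case the pair $(v,v')$ satisfies the condition of \refthm{necessaryside} in $G$, and so $u$ is a strong side-vertex in $G$.

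The hard part, and the step I would think carefully about, is the case $u \in \cut$. Here $N_{G_i}(u)$ can be strictly smaller than $N_G(u)$: $u$ may have neighbors in other partition pieces $V(G'_j) \setminus \cut$ that do not appear in $G_i$ at all, so the strong side-vertex hypothesis in $G_i$ imposes no constraint on pairs of $G$-neighbors of $u$ that straddle different pieces. I expect the proof either to be implicitly confined to non-cut vertices (which is the setting the algorithm actually exploits when reusing side-vertex status across a partition), or to need an auxiliary argument that uses the separating role of $\cut$: any two $G$-neighbors of $u$ sitting in different pieces are disconnected in $G \setminus \cut$, so all of their common $G$-neighbors must pass through $\cut$, and combining this with $|\cut| < k$ is the route by which one would have to rescue the statement for cut vertices.
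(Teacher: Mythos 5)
Your argument for vertices outside the cut is correct and is the rigorous version of what the paper actually writes: the paper's entire proof is the one-line observation that ``$G$ contains all edges and vertices in $G_i$,'' i.e., the monotonicity of the two per-pair conditions ($(v,v')\in E$ and $|N(v)\cap N(v')|\geq k$) under enlarging the graph. Your extra step establishing $N_G(u)=N_{G_i}(u)$ for $u\in V(G_i)\setminus\cut$ is exactly what is needed to turn that one-liner into a proof, and the paper omits it.

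The cut-vertex case you flag is a genuine gap, and it is not rescuable in the direction you hope. Your own observation closes the door rather than opening it: if $u\in\cut$ has neighbors $v$ and $w$ lying in two different components of $G$ after removing $\cut$, then $(v,w)\notin E$, and every common neighbor of $v$ and $w$ must lie in $\cut$, so $|N_G(v)\cap N_G(w)|\leq|\cut|<k$; hence $u$ fails the test of \refthm{necessaryside} in $G$. Since such a $u$ can perfectly well pass the test inside a single piece $G_i$, where its neighbors in the other components are invisible, the implication ``strong side-vertex in $G_i$ implies strong side-vertex in $G$'' is false for cut vertices; the lemma really holds only under the additional hypothesis $u\notin\cut$ (equivalently $N_G(u)\subseteq V(G_i)$), which is precisely the case your proof covers. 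The paper's proof does not notice this because it never confronts the fact that the universal quantifier in \refthm{necessaryside} ranges over the larger set $N_G(u)\supseteq N_{G_i}(u)$. The damage is limited: the lemma is only used to restrict which vertices are re-tested for strong side-vertex status after a partition, so a failure for cut vertices can only cause the algorithm to miss some strong side-vertices of $G_i$ and prune less, never to prune incorrectly.
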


\begin{proof}
The strong side-vertex $u$ requires at least $k$ common neighbors between any two neighbors of $u$. The lemma is obvious since $G$ contains all edges and vertices in $G_i$.
\end{proof}

From \reflem{sidevertexpart}, we know that a vertex is not a strong side-vertex in $G_i$ if it is not a strong side-vertex in $G$. This property allows us checking limited number of vertices in $G_i$, which is the set of strong side-vertices in $G$.

\begin{lemma}
\label{lem:sidevertexonly}
Let $G$ be a graph, $G_i$ be one of the graphs obtained by partitioning $G$ using $\opartition$ in \refalg{frk}, and $\cut$ is a vertex cut of $G$, for any vertex $v \in V(G_i)$, if $v$ is a strong side-vertex in $G$ and $N(v)\cap \cut =\emptyset$, then $v$ is also a strong side-vertex in $G_i$.
\end{lemma}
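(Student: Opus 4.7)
The plan is to verify the sufficient condition for a strong side-vertex from \refthm{necessaryside} directly in $G_i$, by transferring each of its two alternatives from $G$ to $G_i$. Fix an arbitrary pair $u, u' \in N_{G_i}(v)$. Since $G_i$ is the induced subgraph $G[V(C_i) \cup \cut]$ where $C_i$ is one of the connected components of $G - \cut$, we have $N_{G_i}(v) \subseteq N_G(v)$, so because $v$ is a strong side-vertex in $G$ the pair $(u, u')$ already satisfies either $(u, u') \in E(G)$ or $|N_G(u) \cap N_G(u')| \ge k$. The task is to lift each alternative into $G_i$.

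First I would use the hypothesis $N(v) \cap \cut = \emptyset$ to locate $u$ and $u'$ inside $V(C_i)$: they lie in $N_G(v)$ and therefore cannot belong to $\cut$, yet they lie in $V(G_i) = V(C_i) \cup \cut$, forcing $u, u' \in V(C_i)$. This immediately resolves the first alternative, since $(u, u') \in E(G)$ together with $u, u' \in V(G_i)$ gives $(u, u') \in E(G_i)$ by the induced-subgraph property. For the second alternative, I would take any $w \in N_G(u) \cap N_G(u')$ and show $w \in V(G_i)$ with both edges $(u, w), (u', w)$ surviving. If $w \in \cut$, then $\opartition$ duplicates $w$ into every partitioned subgraph and in particular $w \in V(G_i)$. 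Otherwise $w \notin \cut$, and since the edge $(u, w)$ exists in $G - \cut$ with $u \in V(C_i)$, the definition of connected component forces $w \in V(C_i) \subseteq V(G_i)$. In either case the induced-subgraph property preserves $(u, w)$ and $(u', w)$, so $w \in N_{G_i}(u) \cap N_{G_i}(u')$, giving $|N_{G_i}(u) \cap N_{G_i}(u')| \ge |N_G(u) \cap N_G(u')| \ge k$.

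The step requiring the most care is pinning down $u, u' \in V(C_i)$; this is the sole place the hypothesis $N(v) \cap \cut = \emptyset$ is consumed, and it is essential. Without it, a neighbor of $v$ could lie inside $\cut$, in which case its common neighbors with another neighbor of $v$ might be distributed across several components $C_j$ with $j \ne i$ and would vanish from $V(G_i)$, breaking the common-neighbor count needed for \refthm{necessaryside}. Once this localization is in place, the rest is a routine use of the fact that $G_i$ is induced and that $\opartition$ replicates $\cut$ into every subgraph.
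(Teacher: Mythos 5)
Your proof is correct and follows essentially the same idea as the paper's: since $N(v)$ avoids $\cut$, the neighbors of $v$ land in the component $C_i$, and the two-hop structure witnessing the strong side-vertex condition survives into $G_i$ because cut vertices are duplicated and non-cut common neighbors must stay in the same component. The paper states this in two terse sentences ("the two-hop neighbors of $v$ are not affected by the partition operation"); your pair-by-pair case analysis is a careful, fully rigorous elaboration of that same argument.
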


\begin{proof}
The qualification of a strong side-vertex of vertex $v$ requires the information about two-hop neighbors of $v$. Vertices in $\cut$ are duplicated when partitioning the graph. Given a strong side-vertex $v$ in $G$, if $N(v)\cap \cut =\emptyset$, the two-hop neighbors of $v$ are not affected by the partition operation, thus the relationships between the vertices in $N(v)$ are not affected by the partition operation. Therefore, $v$ is still a strong side-vertex in $G_i$ according to \refdef{ssv}.
\end{proof}

With \reflem{sidevertexpart} and \reflem{sidevertexonly}, in a graph $G_i$ partitioned from graph $G$ by vertex cut $\cut$, we can reduce the scope of strong side-vertex checks from the vertices in the whole graph $G_i$ to the vertices $u$ satisfying following two conditions simultaneously: 

\begin{itemize}
\item $u$ is a strong side-vertex in $G$; and
\item $N(u) \cap \cut \neq \emptyset$.
\end{itemize}

\subsubsection{Neighbor Sweep using Vertex Deposit}
\label{sec:opt:ns:vd}

\stitle{Vertex Deposit.} The strong side-vertex strategy heavily relies on the number of strong side-vertices. Next, we investigate a new strategy called vertex deposit, to further sweep vertices based on neighbor information. We first give the following lemma:

\begin{lemma}
\label{lem:2hop}
Given a source vertex $u$ in graph $G$, for any vertex $v\in V(G)$, we have $u\equiv^k v$ if there exist $k$ vertices $w_1, w_2, \ldots, w_k$ such that $u\equiv^k w_i$ and $w_i\in N(v)$ for any $1\leq i \leq k$.
\end{lemma}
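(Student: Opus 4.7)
The plan is to prove the lemma by contradiction, exploiting the fact that a vertex separator of size strictly less than $k$ cannot possibly contain all $k$ of the $w_i$'s, while any $w_i$ lying outside the separator must drag $v$ across to $u$'s side. This is a very short pigeonhole argument once the definition of local connectivity (\refdef{lc}) is unpacked.

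First I would suppose for contradiction that $u \not\equiv^k v$. By \refdef{lc}, this means there exists a $u$-$v$ vertex cut $\cut \subseteq V(G)\setminus\{u,v\}$ with $|\cut| < k$ such that $u$ and $v$ lie in distinct connected components of $G - \cut$. Next, because $w_1,\dots,w_k$ are $k$ distinct vertices and $|\cut| \le k-1$, by pigeonhole there exists some index $j$ with $w_j \notin \cut$. Note also $w_j \neq v$ (since $w_j \in N(v)$ and $G$ has no self-loops), and $w_j$ survives in $G-\cut$ along with the edge $(w_j,v)$, so $w_j$ lies in the same component of $G-\cut$ as $v$.

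The third step is to derive the contradiction: since $w_j$ is on $v$'s side of the separator while $u$ is on the other side, $\cut$ is also a $u$-$w_j$ vertex cut in $G$, hence $\kappa(u,w_j,G) \le |\cut| < k$, i.e. $u \not\equiv^k w_j$. This contradicts the hypothesis $u \equiv^k w_j$, so our assumption fails and $u \equiv^k v$ must hold. The degenerate case $v \in N(u)$ (and the case $v=u$) is subsumed immediately by \reflem{nbrsave}, so no separate treatment is needed.

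The hard part is not really the combinatorics but the bookkeeping: I need to be careful that the witness $w_j$ lies in $V(G)\setminus(\cut \cup \{u,v\})$ so that the induced $u$-$w_j$ separator claim in the third step is well-defined. The exclusion $w_j \neq v$ comes from adjacency, the exclusion $w_j \notin \cut$ comes from pigeonhole, and the exclusion $w_j \neq u$ can be handled either by first disposing of the $v \in N(u)$ case or by observing that if some $w_j = u$ then trivially $w_j$ and $v$ are adjacent, putting $v$ in $u$'s component of $G - \cut$ and contradicting the existence of the cut directly. Once those checks are in place the proof is essentially a one-paragraph pigeonhole.
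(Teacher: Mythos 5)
Your proof is correct and follows essentially the same argument as the paper: both proceed by contradiction from a $u$-$v$ cut $\cut$ with $|\cut|<k$ and observe that every $w_i$ (being adjacent to $v$ and $k$-connected to $u$) must lie in $\cut$, forcing $|\cut|\ge k$. Your version merely phrases the pigeonhole in the other direction (picking a witness $w_j\notin\cut$) and is somewhat more careful about the degenerate cases, but it is the same proof.
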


\begin{proof}
We prove it by contradiction. Assume that $u \not\equiv^k v$. There exists a vertex cut $\cut$ with $k-1$ or fewer vertices between $u$ and $v$.  For any $w_i (1\leq i \leq k)$, we have $w_i\equiv^k v$ since $w_i\in N(v)$ (\reflem{nbrsave}) and we also have $w_i \equiv^k u$. Since $u \not\equiv^k v$, $w_i$ cannot satisfy both $w_i \equiv^k u$ and $w_i \equiv^k v$ unless $w_i \in \cut$. Therefore, we obtain a cut $\cut$ with at least $k$ vertices $w_1$, $w_2$, $\ldots$, $w_k$. This contradicts $|\cut| < k$.
\end{proof}


Based on \reflem{2hop}, given a source vertex $u$, once we find a vertex $v$ with at least $k$ neighbors $w_i$ with $u\equiv^k w_i$, we can obtain $u\equiv^k v$ without testing the local connectivity of $(u,v)$. To efficiently detect such vertices $v$, we define the deposit of a vertex $v$ as follows.

\begin{definition} 
\label{def:deposit}
(Vertex Deposit) Given a source vertex $u$, the deposit for each vertex $v$, denoted by $deposit(v)$, is the number of neighbors $w$ of $v$ such that the local connectivity of $w$ and $u$ has been computed with $w\equiv^k u$.
%
\end{definition}

According to \refdef{deposit}, suppose $u$ is the source vertex and for each vertex $v$, $deposit(v)$ is a dynamic value depending on the number of processed vertex pairs. To maintain the vertex deposit, we initialize the deposit to $0$ and once we know $w \equiv^k u$ for a certain vertex $w$, we can increase the deposit $deposit(v)$ for each vertex $v\in N(w)$ by $1$. We can obtain the following theorem according to \reflem{2hop}.

\begin{theorem}
\label{thm:deposit}
Given a source vertex $u$, for any vertex $v$, we have $u \equiv^k v$ if $deposit(v) \ge k$.
\end{theorem}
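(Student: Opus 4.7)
The plan is to apply \reflem{2hop} essentially directly, once we unpack the definition of $deposit(v)$. Suppose $deposit(v) \ge k$ under the given source vertex $u$. By \refdef{deposit}, this means there exist at least $k$ \emph{distinct} neighbors $w_1, w_2, \ldots, w_k \in N(v)$ for which the local connectivity between $u$ and $w_i$ has already been computed and found to satisfy $u \equiv^k w_i$. So we have witnesses $w_1,\ldots,w_k$ meeting exactly the hypotheses of \reflem{2hop}: they all lie in $N(v)$, and each is $k$-local connected to $u$.

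Next I would invoke \reflem{2hop} to conclude $u \equiv^k v$. The key intuition underlying \reflem{2hop} (which we are allowed to assume) is that any vertex cut $\cut$ separating $u$ from $v$ with $|\cut|<k$ would have to intersect every $u$-$w_i$-$v$ route; since each $w_i$ is a neighbor of $v$ and is $k$-local connected to $u$, the only way $\cut$ can break the $u$-$v$ connectivity below $k$ via $w_i$ is to contain $w_i$ itself, forcing $|\cut|\ge k$, a contradiction. Thus $u \equiv^k v$ and the theorem follows.

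There is essentially no obstacle: the statement is a direct corollary of \reflem{2hop} combined with the bookkeeping definition of the deposit counter. The only minor point worth being careful about is that $deposit(v)$ counts distinct neighbors (it is incremented once per processed neighbor $w$ that lies in $N(v)$), so $deposit(v)\ge k$ does yield $k$ distinct witnesses $w_1,\ldots,w_k$ rather than a multiset, which is what \reflem{2hop} requires. Since the algorithm tests each vertex $w$ at most once against the fixed source $u$ and updates the deposit of each $v \in N(w)$ by $1$ at that time, distinctness of the witnesses is automatic, so the hypothesis of \reflem{2hop} is met and the conclusion follows.
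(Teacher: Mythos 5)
Your proof is correct and matches the paper's own justification: the paper derives Theorem~\ref{thm:deposit} directly from Lemma~\ref{lem:2hop} together with the bookkeeping in Definition~\ref{def:deposit}, exactly as you do. Your added remark about the distinctness of the $k$ witness neighbors is a reasonable (and harmless) extra check that the paper leaves implicit.
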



Based on \refthm{deposit}, we can derive our second rule for neighbor sweep as follows.

\rtitle{(Neighbor Sweep Rule 2)} \textit{Given a selected source vertex $u$, we can skip the local connectivity testing of pair $(u,v)$ if $deposit(v) \ge k$.}

We show an example below.

\begin{example}
\reffig{ssv} (b) gives an example of our vertex deposit strategy. Given the graph $G$ and parameter $k = 3$, let vertex $r$ be the selected source vertex. We assume that $v_0, v_1, v_2$ and $v_3$ are tested vertices. All these vertices are local $k$-connected with vertex $r$, i.e., $r \equiv^k v_i, i\in{0,1,2,3}$, since $v_0, v_1, v_2$ and $v_3$ are neighbors of $r$. We deposit once for the neighbors of each tested vertex. The deposit value for all influenced vertices are given in the figure. We mark the vertices with deposit no less than $3$ by dark gray. The local connectivity testing between $r$ and such a vertex can be skipped. 
\end{example}

To increase the deposit of a vertex $v$, we only need any neighbor of $v$ is local $k$-connected with the source vertex $u$. We can also use vertex deposit strategy when processing the strong side-vertex. Given a source vertex $u$ and a strong side-vertex $v$, we sweep all $w \in N(v)$ if $u \equiv^k v$ according to the side-vertex strategy. Next we increase the deposit for each non-swept vertex $w' \in N(w)$. In other words, for a strong side-vertex, we can possibly sweep its $2$-hop neighbors by combining the two neighbor sweep strategies. An example is given below.

\begin{figure}[t]
\begin{center}
\begin{tabular}[t]{c}
\subfigure[$2$-hop deposit]{
	\includegraphics[width=0.49\columnwidth]{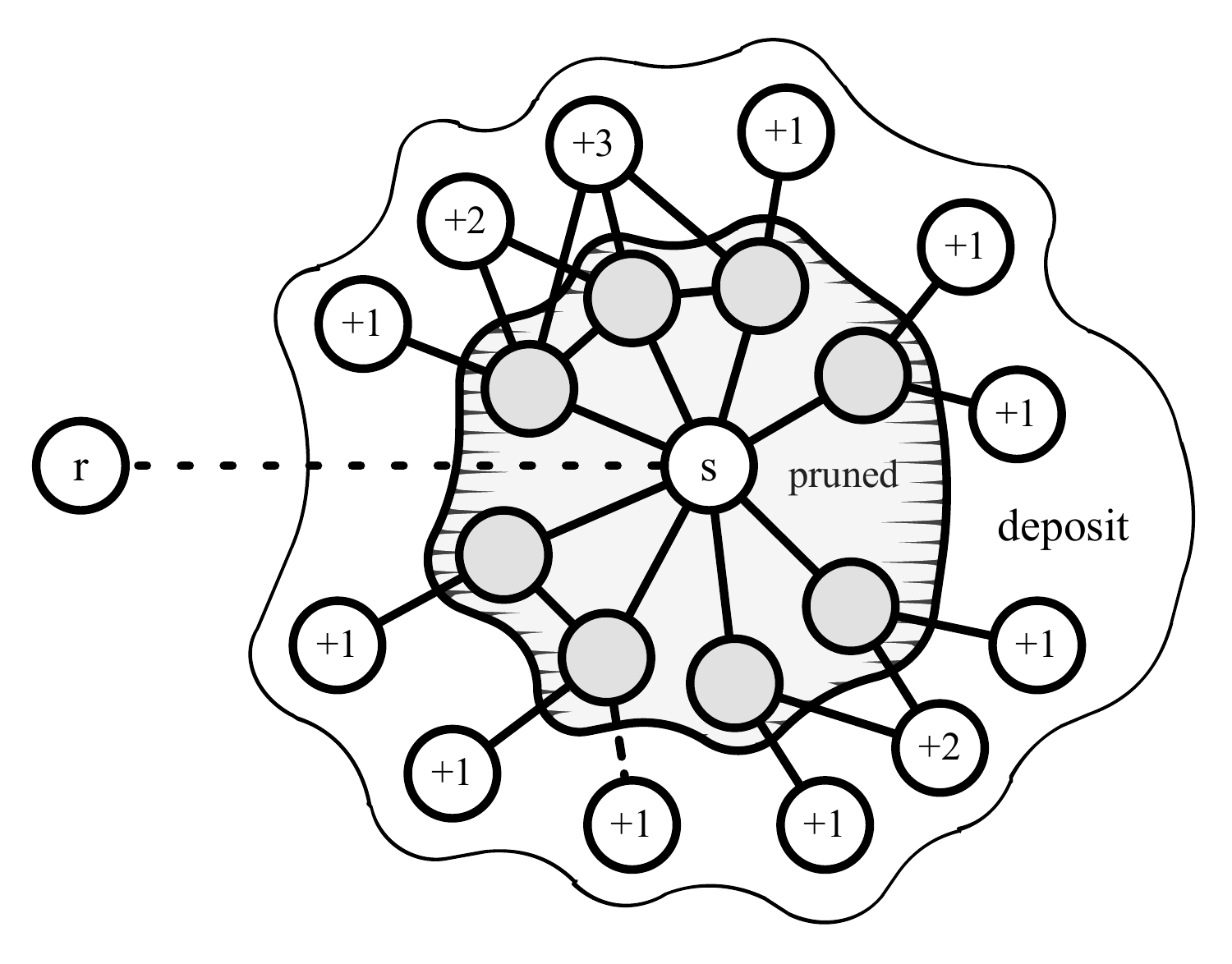}
}
\subfigure[group sweep and deposit]{
	\includegraphics[width=0.49\columnwidth]{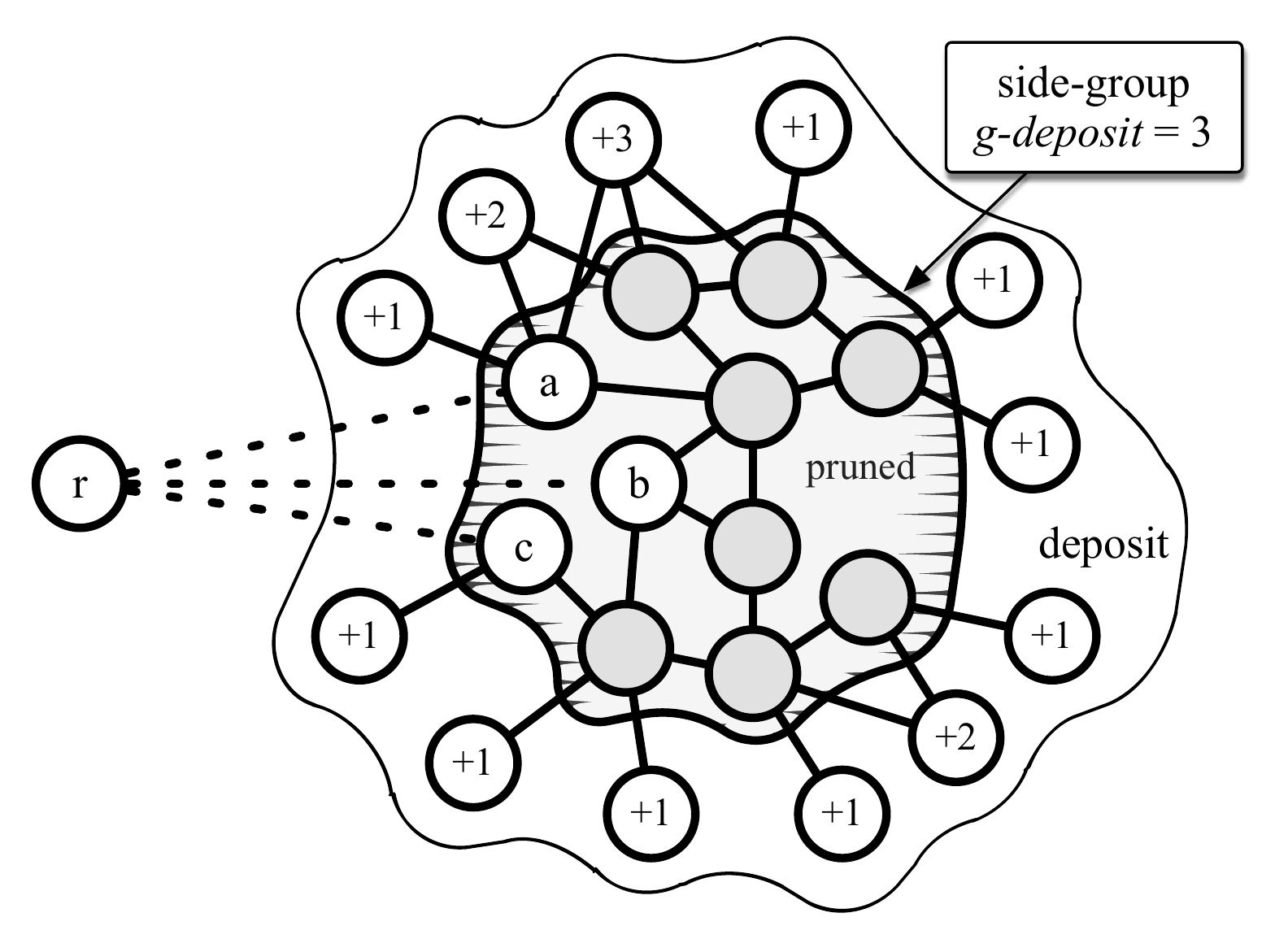}
}
\end{tabular}
\topcaption{Increasing deposit with neighbor and group sweep}
\label{fig:gsd}
\end{center}
\end{figure}

\begin{example}
\reffig{gsd} (a) shows the process for a strong side-vertex $s$. Given a source vertex $r$, assume $s$ is a strong side-vertex and $r \equiv^k s$. All neighbors of $s$ are swept and all $2$-hop neighbors of $s$ increase their deposits accordingly. The increased value of the deposit for each vertex depends on the number of connected vertices that are swept.
\end{example}

\subsection{Group Sweep}
\label{sec:opt:gs}

The neighbor sweep strategy can only prune unnecessary local connectivity testings  in the first phase of $\fcbase$ by using the neighborhood information. In this subsection, we introduce a new pruning strategy, namely group sweep, which can prune unnecessary local connectivity testings in a batch manner. In group sweep, we do not limit the skipped vertices to the neighbors of certain vertices. More specifically, we aim to partition vertices into vertex groups and sweep a whole group when it satisfies certain conditions. In addition, our group sweep strategy can also be applied to reduce the unnecessary local connectivity testings in both phases of $\fcbase$.

First, we define a new relation regarding a vertex $u$ and a set of vertices $C$ as follows. 

\sstitle{$u \equiv^k C$}: For all vertices $v \in C, u \equiv^k v$.

Given a source vertex $u$ and a side-vertex $v$, we assume $u \equiv^k v$. According to the transitive relation in \reflem{transitive}, we can skip testing the pairs of vertices $u$ and $w$ for all $w$ with $w \equiv^k v$. In our neighbor sweep strategy, we select all neighbors of $v$ as such vertices $w$, i.e., $u \equiv^k N(v)$. To sweep more vertices each time, we define the side-group.

\begin{definition}
\textsc{(Side-Group)} Given a graph $G$ and an integer $k$, a vertex set ${\cal CC}$ in $G$ is a side-group if $\forall u, v \in C, u \equiv^k v$.
\end{definition}

Note that it is possible that a side-group contains vertices in a certain vertex cut $\cut$ with $|\cut|<k$.  Next, we introduce how to construct the side-groups in graph $G$, and then discuss our group sweep rules.

\stitle{Side-Group Construction.} \refsec{base:cert} introduces sparse certificate to bound the graph size. Let $F_i$ and $G_i$  be the notations defined in \refthm{maincert}. Assume that $G$ is not $k$-connected and there exists a vertex cut $\cut$ such that $|\cut|<k$. According to \cite{CheriyanKT93}, we have the following lemma.

\begin{lemma}
\label{lem:nopath}
$F_k$ does not contain a simple tree path $P_k$ whose two end points are in different connected components of $G-\cut$. 
\end{lemma}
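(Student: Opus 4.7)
The plan is to argue by contradiction. Assume $F_k$ contains a simple tree path $P_k$ joining some $u \in C_a$ to some $v \in C_b$, where $C_a \neq C_b$ are two connected components of $G - \cut$. The goal is to derive $\vc{u, v, G} \ge k$; via the Global Menger Theorem (\refthm{menger}) this would contradict the fact that $\cut$ is a $u$-$v$ separator of $G$ of size strictly less than $k$, forcing $\vc{u, v, G} \le |\cut| < k$.

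The first step is to observe that since each $F_i$ is a spanning forest of $G_{i-1}$ and the assumed $P_k$ lies in $G_{k-1} \subseteq G_{i-1}$ for every $i \le k$, the vertices $u$ and $v$ lie in a common connected component of every $G_{i-1}$. Hence each $F_i$ contains a unique simple tree path $P_i$ from $u$ to $v$, and the resulting $k$ tree paths are pairwise edge-disjoint by the edge-disjointness of the forests. The second step pins down where each $P_i$ crosses $\cut$: by the defining property of scan first search, in any SFS of a connected subgraph of $G_{i-1}$ whose root lies outside $C_b$, the first vertex of $C_b$ to be marked must be marked while scanning a vertex of $\cut$ (because $\cut$ separates $C_b$ from the rest of $G_{i-1}$). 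Consequently, each $P_i$ contains a tree edge $(c_i, x_i) \in F_i$ with $c_i \in \cut$ and $x_i \in C_b$, and symmetric reasoning yields an analogous tree edge on the $C_a$ side.

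The main obstacle is upgrading these $k$ edge-disjoint tree paths into $k$ \emph{internally vertex-disjoint} $u$-$v$ paths, which is what the vertex form of Menger's theorem requires; edge-disjointness alone does not imply vertex-disjointness, as small examples with shared cut vertices demonstrate. To overcome this, I plan to exploit the additional structure furnished by the successive sparsifications $G_0 \supseteq G_1 \supseteq \cdots \supseteq G_{k-1}$ via an inductive book-keeping argument: for each $i$, select $c_i$ to be the first cut vertex along $P_i$ whose scan in $F_i$'s SFS initiates a previously unreached portion of $C_b$ in $G_{i-1}$. The SFS invariant, that scanning a vertex $c$ adds a tree edge to every currently unmarked neighbor of $c$, ensures that once a cut vertex is consumed in this first-entry role in some $F_i$, its remaining $C_b$-edges within $G_j$ for $j > i$ can no longer play the same role in $F_{j+1}$. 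This forces the selected $c_1, \ldots, c_k$ to be pairwise distinct, and vertex-disjoint $u$-$v$ paths can then be stitched together by concatenating the sub-paths of each $P_i$ through its private $c_i$. Applying \refthm{menger} delivers $\vc{u, v, G} \ge k$, producing the required contradiction.
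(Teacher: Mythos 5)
First, note that the paper does not prove \reflem{nopath} at all --- it is imported verbatim from \cite{CheriyanKT93} --- so you are reconstructing a cited proof rather than matching one given in the text. Your opening moves are sound: since $G_{k-1}\subseteq G_{i-1}$ for all $i\le k$ and each $F_i$ spans the connected components of $G_{i-1}$, the assumed crossing path in $F_k$ forces every $F_i$ to contain a $u$-$v$ tree path $P_i$; these paths are pairwise edge-disjoint, and each must contain at least one vertex of \cut. Observe, though, that if you could exhibit a \emph{distinct} vertex of \cut on each of the $k$ paths you would be done immediately, since that already gives $|\cut|\ge k$; the detour through Menger's theorem and internally vertex-disjoint paths is unnecessary and actually adds an obligation your construction never discharges (even with private $c_i$'s, the $C_a$- and $C_b$-portions of different $P_i$'s may share internal vertices, so the ``stitched'' paths need not be vertex-disjoint).

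The genuine gap is the distinctness claim itself, which is the entire content of the Cheriyan--Kao--Thurimella lemma. Your selection rule (``the first cut vertex along $P_i$ whose scan initiates a previously unreached portion of $C_b$'') is not well defined: a tree path is not ordered by scan time, the forest may first enter $C_b$ through vertices that do not lie on $P_i$, and ``previously unreached'' is ambiguous between within-forest and across-forest reachability. More importantly, the exclusion step fails as stated. Scan-first search attaches tree edges from a scanned vertex $c\in\cut$ only to its \emph{currently unmarked} neighbors; the edges from $c$ to $C_b$-vertices that were already marked when $c$ was scanned are not placed in $F_i$, hence survive into $G_i$, so $c$ can again become the tree parent of a $C_b$-vertex in $F_{i+1},F_{i+2},\dots$. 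Thus ``once consumed in this role, never reused'' does not follow from the SFS invariant you cite. The actual argument in \cite{CheriyanKT93} rests on a more delicate counting invariant over the sequence $F_1,\dots,F_k$ showing that each forest containing a crossing path exhausts a fresh separator vertex; without a precise invariant of that kind, your proof does not go through.
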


Based on \reflem{nopath}, we can obtain the following theorem.

\begin{theorem}
\label{thm:cc}
Let ${\cal CC}$ denote the vertex set of any connected component in $F_k$. ${\cal CC}$ is a side-group.
\end{theorem}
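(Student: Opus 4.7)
The plan is to fix an arbitrary pair of distinct vertices $u, v \in {\cal CC}$ and prove that $u \equiv^k v$, which is exactly the definition of a side-group. Since $u$ and $v$ belong to the same connected component of the scan first search forest $F_k$, there is a unique simple tree path $P$ in $F_k$ joining $u$ and $v$. I would then argue by contradiction: suppose that $u \not\equiv^k v$, so that a $u$-$v$ vertex cut $\cut$ exists with $|\cut| < k$.

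By \refdef{vertexcut} (together with the definition of $u$-$v$ cut in \refsec{base:findcut}), $u, v \notin \cut$, and the removal of $\cut$ from $G$ places $u$ and $v$ in two different connected components of $G - \cut$. Consequently, $P$ is a simple tree path in $F_k$ whose endpoints lie in distinct components of $G - \cut$, which directly contradicts \reflem{nopath}. Therefore the hypothetical cut $\cut$ cannot exist, so $u \equiv^k v$. Since $u$ and $v$ were arbitrary vertices of ${\cal CC}$, every pair in ${\cal CC}$ is $k$-local connected, and ${\cal CC}$ is a side-group by definition. The degenerate case $|{\cal CC}|=1$ is vacuous, and if $G$ is already $k$-vertex connected then no cut of size less than $k$ exists anywhere so the conclusion is immediate.

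The main obstacle is making sure \reflem{nopath} is applied to the right cut. The lemma is phrased under the assumption that some cut $\cut$ with $|\cut|<k$ exists in $G$, and one must verify that its conclusion holds for every such cut (not merely a particular fixed one), because the cut we extract from $u \not\equiv^k v$ is arbitrary and depends on the chosen pair. As long as that uniform reading of \reflem{nopath} is legitimate (which is natural given its statement), the rest of the argument is a short chain of definitions and requires no further case analysis.
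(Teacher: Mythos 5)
Your proposal is correct and follows essentially the same route as the paper: assume $u \not\equiv^k v$ for two vertices of ${\cal CC}$, extract a cut $\cut$ with $|\cut|<k$ separating them, and observe that the tree path joining them in $F_k$ then has endpoints in different components of $G-\cut$, contradicting \reflem{nopath}. Your explicit remark that \reflem{nopath} must be read as holding for every qualifying cut (not one fixed in advance) is a fair point of care, but it does not change the argument.
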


\begin{proof}
Assume that $u \not\equiv^k v$ in ${\cal CC}$. All simple paths from $u$ to $v$ will cross the vertex cut $\cut$. This contradicts \reflem{nopath}.
\end{proof}

\begin{example}
Review the construction of a sparse certificate in \reffig{sc}. Given $k=3$, two connected components with more than one vertex are obtained in $F_3$. The number of vertices in the two connected components are $6$ and $9$ respectively.  Each of them is a side-group and any two vertices in the same connected component is local $3$-connected. Note that the connected component with $6$ vertices contains two vertices in the vertex cut as marked by gray. 
\end{example}

We denote all the side-groups as ${\cal CS}=\{{\cal CC}_1, {\cal CC}_2, \ldots, {\cal CC}_t\}$.  According to \refthm{cc}, ${\cal CS}$ can be easily computed as a by-product of the sparse certificate. With ${\cal CS}$, according to the transitive relation in \reflem{transitive}, we can easily obtain the following pruning rule.

\rtitle{(Group Sweep Rule 1)} \textit{Let $u$ be the source vertex in the algorithm $\fcbase$, given a side-group ${\cal CC}$, if there exists a strong side-vertex $v\in {\cal CC}$ such that $u\equiv^k v$, we can skip the local connectivity testings of vertex pairs $(u,w)$ for all $w\in {\cal CC}-\{v\}$.}

The above group sweep rule relies on the successful detection of a strong side-vertex in a certain side-group. In the following, we further introduce a deposit based scheme to handle the scenario that no strong side-vertex exists in a certain side-group. 

\stitle{Group Deposit.} Similar with the vertex deposit strategy, the group deposit strategy aims to deposit the values in a group level. To show our group deposit scheme, we first introduce the following lemma.

\begin{lemma}
\label{lem:ccprune}
Given a source vertex $u$, an integer $k$, and a side-group ${\cal CC}$, we have $u \equiv^k {\cal CC}$ if  $|\{v|v \in {\cal C}, u \equiv^k v\}| \ge k$.
\end{lemma}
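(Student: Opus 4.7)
The plan is to argue by contradiction in the same spirit as the proof of Lemma~\ref{lem:2hop}. Pick an arbitrary $w \in \mathcal{CC}$ and suppose $u \not\equiv^k w$. Then there exists a vertex cut $\mathcal{S}$ separating $u$ and $w$ with $|\mathcal{S}| < k$. Let $v_1, v_2, \ldots, v_k$ be the $k$ vertices in $\mathcal{CC}$ that are known to satisfy $u \equiv^k v_i$. The goal is to show that each $v_i$ must lie inside $\mathcal{S}$, which immediately contradicts $|\mathcal{S}| < k$.

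To show $v_i \in \mathcal{S}$, I would fix any $i$ and assume for contradiction that $v_i \notin \mathcal{S}$. Since $u \equiv^k v_i$, the minimum $u$-$v_i$ cut has size at least $k$, so $\mathcal{S}$ cannot be a $u$-$v_i$ cut; hence $u$ and $v_i$ remain in the same connected component after removing $\mathcal{S}$. On the other hand, because $\mathcal{CC}$ is a side-group and $v_i, w \in \mathcal{CC}$, we have $v_i \equiv^k w$, so $\mathcal{S}$ is not a $v_i$-$w$ cut either, and $v_i$ and $w$ remain connected after removing $\mathcal{S}$. Chaining these two connectivity facts places $u$ and $w$ in the same component of $G - \mathcal{S}$, contradicting that $\mathcal{S}$ separates them. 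Thus $v_i \in \mathcal{S}$ for every $i$, which gives $|\mathcal{S}| \ge k$ and the desired contradiction.

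The argument is essentially a transitivity observation, analogous to Lemma~\ref{lem:transitive} and Lemma~\ref{lem:2hop}, but applied to an arbitrary target $w$ in the side-group rather than a neighbor. The only subtlety worth being explicit about is the case where some $v_i$ happens to coincide with $w$: if $v_i = w$, then $u \equiv^k v_i = w$ directly and we are done, so we may assume all $v_i \neq w$ when invoking $v_i \equiv^k w$ from the side-group property. I do not expect any real obstacle here; the main thing is just to handle the membership bookkeeping ($v_i \notin \mathcal{S}$ versus $v_i \in \mathcal{S}$) cleanly so that the pigeonhole step at the end produces $|\mathcal{S}| \ge k$ unambiguously.
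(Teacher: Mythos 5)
Your proof is correct and follows essentially the same route as the paper's: assume some $w \in {\cal CC}$ with $u \not\equiv^k w$, take the separating cut of size less than $k$, and show every one of the $k$ vertices $v_i$ with $u \equiv^k v_i$ must lie in that cut, forcing a contradiction. You merely spell out the justification for the step ``each $v_i$ must belong to the cut'' (and the $v_i = w$ edge case) that the paper leaves implicit, which is a welcome addition rather than a departure.
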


\begin{proof}
We prove it by contradiction. Assume that there exists a vertex $w$ in ${\cal CC}$ such that $u \not\equiv^k w$. A vertex cut $\cut$ exists with $|\cut|<k$. Let $v_0, v_1, ..., v_{k-1}$ be the $k$ vertices in ${\cal CC}$ such that $u \equiv^k v_i, 0 \le i \le k-1$. We have $w \equiv^k v_i$ based on the definition of a side-group. Each $v_i$ must belong to $\cut$ since $u \not\equiv^k w$. As a result, the size of $\cut$ is at least $k$. This contradicts $|\cut|<k$.
\end{proof}

Based on \reflem{ccprune}, given a source vertex $u$, once we find a side-group ${\cal CC}$ with at least $k$ vertices $v$ with $u\equiv^k v$, we can get $u\equiv^k {\cal CC}$ without testing the local connectivity from $u$ to other vertices in ${\cal CC}$. To efficiently detect such side-groups ${\cal CC}$, we define the group deposit of a side-group ${\cal CC}$ as follows.

\begin{definition}
\label{def:gdeposit}
(Group Deposit) The group deposit for each side-group ${\cal CC}$, denoted by $\gdep({\cal CC})$, is the number of vertices $v\in {\cal CC}$ such that the local connectivity of $v$ and $u$ has been computed with $v\equiv^k u$.
\end{definition}

According to \refdef{gdeposit}, suppose $u$ is the source vertex, for each side-group ${\cal CC}\in {\cal CS}$, $\gdep({\cal CC)}$ is a dynamic value depending on the already processed vertex pairs. To maintain the group deposit for each side-group ${\cal CC}$, we initialize the group deposit for ${\cal CC}$ to $0$. Once $v\equiv^k u$ for a certain vertex $v\in {\cal CC}$, we can increase $\gdep({\cal CC})$ by 1. We obtain the following theorem according to \reflem{ccprune}. 

\begin{theorem} 
\label{thm:ccprune}
Given a source vertex $u$, for any side-group ${\cal CC}\in {\cal CS}$, we have $u\equiv^k{\cal CC}$ if $\gdep({\cal CC})\geq k$.
\end{theorem}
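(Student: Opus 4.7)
The plan is to observe that Theorem~\ref{thm:ccprune} is essentially a direct translation of Lemma~\ref{lem:ccprune} into the language of the dynamically maintained quantity $\gdep({\cal CC})$, so the proof reduces to unfolding the definition of group deposit and invoking the lemma.

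First I would recall Definition~\ref{def:gdeposit}: $\gdep({\cal CC})$ is defined to be the number of vertices $v\in {\cal CC}$ for which the pair $(u,v)$ has already been tested and found to satisfy $v \equiv^k u$. Therefore, if $\gdep({\cal CC})\geq k$, then the set $S = \{v\in {\cal CC} : u \equiv^k v\}$ has cardinality $|S|\geq \gdep({\cal CC})\geq k$, since the deposit counter only records witnesses of the $k$-local connectivity relation between $u$ and members of ${\cal CC}$.

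Next I would apply Lemma~\ref{lem:ccprune} with this set $S$: the hypothesis $|\{v\in {\cal CC}: u \equiv^k v\}| \geq k$ of that lemma is exactly what we have just established, and its conclusion is $u \equiv^k {\cal CC}$, meaning $u \equiv^k w$ for every $w \in {\cal CC}$. This yields the theorem.

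There is essentially no obstacle here: the only thing to be careful about is that $\gdep$ only increments on verified successes (local connectivity at least $k$) and never on failures or on untested vertices, so the bookkeeping truly gives a lower bound on $|S|$ rather than just tracking processed vertices. Once this is made explicit, the conclusion follows immediately from Lemma~\ref{lem:ccprune}, and no further combinatorial or flow-theoretic argument is required.
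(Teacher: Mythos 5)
Your proposal is correct and matches the paper's own derivation: the paper likewise obtains Theorem~\ref{thm:ccprune} directly from Lemma~\ref{lem:ccprune} by observing that $\gdep({\cal CC})\geq k$ certifies at least $k$ vertices $v\in{\cal CC}$ with $u\equiv^k v$. Your added remark that the deposit counter only increments on verified successes is a sensible explicit check, but it introduces nothing beyond the paper's argument.
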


Based on \refthm{ccprune}, we can derive our second rule for group sweep as follows.

\rtitle{(Group Sweep Rule 2)} \textit{Given a selected source vertex $u$, we can skip the local connectivity testings between $u$ and vertices in ${\cal CC}$ if $\gdep({\cal CC})\geq k$.}

Note that a group sweep operation can further trigger a neighbor sweep operation and vice versa, since both operations result in new local $k$-connected vertex pairs. We show an example below.

\begin{example}
\reffig{gsd} (b) presents an example of group sweep. Suppose $k = 3$ and the gray area is a detected side-group. Given a source vertex $r$, assume that $a, b, c$ are the tested vertices with $r \equiv^k a, r \equiv^k b$ and $r \equiv^k c$ respectively. According to \refthm{ccprune}, we can safely sweep all vertices in the same side-group. Also, we apply the vertex deposit strategy for neighbors outside the side-group. The increased value of deposit is shown on each vertex.
\end{example}

Next we show that the side-groups can also be used to prune the local connectivity testings in the second phase of $\fcbase$. Recall that in the second phase of $\fcbase$, given a source vertex $u$, we need to test the local connectivity of every pair $(v_a,v_b)$ of the neighbors of $u$. With side-groups, we can easily obtain the following group sweep rule.

\rtitle{(Group Sweep Rule 3)} \textit{Let $u$ be the source vertex, and $v_a$ and $v_b$ be two neighbors of $u$. If $v_a$ and $v_b$ belong to the same side-group, we have $v_a\equiv^k v_b$ and thus we do not need to test the local connectivity of $(v_a, v_b)$ in the second phase of $\fcbase$.}

The detailed implementation of the neighbor sweep and group sweep techniques is given in the following section.

\subsection{The Overall Algorithm}
\label{sec:opt:overall}

In this section, we combine our pruning strategies and give the implementation of optimized algorithm $\fcstar$. The pseudocode is presented in \refalg{findcutstar}. We can replace $\fcbase$ with $\fcstar$ in $\frk$ to obtain our final algorithm to compute all $k$-\vccs. 

The $\fcstar$ algorithm still follows the similar idea of $\fcbase$ that consider a source vertex $u$, and then compute the vertex cut  in two phases based on whether $u$ belongs to the vertex cut $\cut$. Given a source vertex $u$, phase 1 (line~8-15) considers the case that $u \notin \cut$. Phase 2 (line~16-21) considers the case that $u\in \cut$. If in both phase, the vertex cut $\cut$ is not found, there is no such a cut and we simply return $\emptyset$ in line~22.

We compute the side-groups ${\cal CS}$ while computing the sparse certificate (line~1). Note that here we only consider the side-group whose size is larger than $k$, since the group can be swept only if at least $k$ vertices in the group are swept according to \refthm{ccprune}. Then we compute all strong side-vertices, ${\cal SV}$ based on \refthm{necessaryside} (line~3). Here, the strong side-vertices are computed based on the method discussed in \refsec{opt:ns:sv}. If ${\cal SV}$ is not empty, we can select one inside vertex as source vertex $u$ and do not need to consider the phase 2, because $u$ cannot be in any cut $\cut$ with $|\cut|<k$ in this case. Otherwise, we still select the source vertex $u$ with the minimum degree (line~4-7).

In phase 1 (line~8-15), we initialize the group deposit for each side-group, which is number of swept vertices in the side-group, to $0$ (line~8). Also, we initialize the local deposit for each vertex to $0$ and $pru$ for each vertex to $\textit{false}$ (line~9). Here, $pru$ is used to mark whether a vertex can be swept. Since the source vertex $u$ is local $k$-connected with itself, we first apply the sweeping rules on the source vertex by invoking $\prufunc$ procedure (line~10). Intuitively, a vertex that is close to the source vertex $u$ tends to be in the same $k$-\vcc with $u$. In other words, a vertex $v$ that is far away from $u$ tends to be separated from $u$ by a vertex cut $\cut$. Therefore, we process vertices $v$ in $G$ according to the non-ascending order of $dist(u,v,G)$ (line~11). We aim to find the vertex cut by processing as few vertices as possible.  
For each vertex $v$ to be processed in phase 1, we skip it if $pru(v)$ is $\text{true}$ (line~12). Otherwise, we test the local connectivity of $u$ and $v$ using $\lcut$ (line~13). If there is a cut $\cut$ with size smaller than $k$, we simply return $\cut$ (line~15). Otherwise, we invoke $\prufunc$ procedure to sweep vertices using the sweep rules introduced in \refsec{opt}. We will introduce the $\prufunc$ procedure in detail later. 

In phase 2 (line~16-21), we first check whether the source vertex $u$ is a strong side-vertex. If so, we can skip phase 2 since a strong side-vertex is not contained in any vertex cut with size smaller than $k$. Otherwise, we perform pair-wise local connectivity testings for all vertices in $N(u)$. Here, we apply the \textit{group sweep rule 3} and skip testing those pairs of vertices that are in the same side-group (line~19).

\stitle{Procedure $\prufunc$.} The procedure $\prufunc$ is shown in \refalg{sweep}. To sweep a vertex $v$, we set $pru(v)$ to be true. This operation may result in neighbor sweep and group sweep of other vertices as follows.

\begin{itemize}
\item (Neighbor Sweep) In line~1-5, we consider the neighbor sweep. For all the neighbors $w$ of $v$ that have not been swept, we first increase $deposit(w)$ by $1$ based on \refdef{deposit}. Then we consider two cases. The first case is that $v$ is a strong side-vertex. According to \textit{neighbor sweep rule 1} in \refsec{opt:ns:sv}, $w$ can be swept since $w$ is a neighbor of $v$. The second case is $deposit(w)>k$. According to \textit{neighbor sweep rule 2} in \refsec{opt:ns:vd}, $w$ can be swept. In both cases, we invoke $\prufunc$ to sweep $w$ recursively. (line~4-5) 
\item (Group Sweep) In line~6-11, we consider the group sweep if $v$ is contained in a side-group ${\cal CC}_i$. We first increase $\gdep({\cal CC}_i)$ by $1$ based on \refdef{gdeposit}. Then we consider two cases. The first case is that $v$ is a strong side-vertex. According to \textit{group sweep rule 1} in \refsec{opt:gs}, we can sweep all vertices in ${\cal CC}_i$. The second case is that $\gdep\geq k$. According to \textit{group sweep rule 2} in \refsec{opt:gs}, we can sweep all vertices in ${\cal CC}_i$. In both cases we recursively invoke $\prufunc$ to sweep each unswept vertex in ${\cal CC}_i$ (line~8-11).
\end{itemize}

\begin{algorithm}[h]
\caption{$\fcstar(G,k)$}
\label{alg:findcutstar}
\small
\begin{algorithmic}[1]
\REQUIRE a graph $G$ and an integer $k$;
\ENSURE a vertex cut with size smaller than $k$;
\vspace{2ex}
\STATE compute a sparse certification ${\cal SC}$ of $G$ and collect all side-groups as ${\cal CS} = \{{\cal CC}_1,...,{\cal CC}_t\}$;
\STATE construct the directed flow graph $\overline{\cal SC}$ of ${\cal SC}$;
\STATE ${\cal SV} \leftarrow $ compute all strong side vertices in ${\cal SC}$;
\IF{${\cal SV} = \emptyset$}
	\STATE select a vertex $u$ with minimum degree;
\ELSE
	\STATE randomly select a vertex $u$ from ${\cal SV}$;
\ENDIF
\vspace{2ex}
\STATE \textbf{for all} ${\cal CC}_i$ in ${\cal CS}$: $\gdep({\cal CC}_i) \leftarrow 0$;
\STATE \textbf{for all} $v$ in $V$: $deposit(v) \leftarrow 0, pru(v) \leftarrow \text{false}$;
\STATE $\prufunc(u, pru, deposit, \gdep, {\cal CS})$;

\FORALL{$v\in V$ in non-ascending order of $dist(u,v,G)$}
	\STATE \textbf{if} $pru(v) = \text{true}$ \textbf{then continue}; 
	\STATE $\cut \leftarrow \lcut(u,v,\overline{\cal SC},{\cal SC})$;
	\STATE \textbf{if} $\cut \neq \emptyset$ \textbf{then} \textbf{return} $\cut$;
	\STATE $\prufunc(v, pru, deposit, \gdep, {\cal CS})$;

\ENDFOR

\vspace{2ex}
\IF{$u$ is not a strong side-vertex}
	\FORALL{$v_a \in N(u)$}
		\FORALL{$v_b \in N(u)$}
			\STATE \textbf{if} $v_a$ and $v_b$ are in the same ${\cal CC}_i$  \textbf{then continue};
			\STATE $\cut \leftarrow \lcut(u,v,\overline{\cal SC},{\cal SC})$;
			\STATE \textbf{if} $\cut \neq \emptyset$ \textbf{then} \textbf{return} $\cut$;
		\ENDFOR
	\ENDFOR
\ENDIF
\vspace{2ex}
\STATE \textbf{return} $\emptyset$;

\end{algorithmic}
\end{algorithm}

\begin{algorithm}[h]
\caption{$\prufunc(v, pru, deposit, \gdep, {\cal CS})$}
\label{alg:sweep}
\small
\begin{algorithmic}[1]

\STATE $pru(v) \leftarrow \text{true}$;

\FORALL{$w \in N(v)$ s.t. $pru(w)=\text{false}$}
	\STATE $deposit(w)$++;
	\IF{ $v$ is a strong side-vertex \textbf{or} $deposit(w)\geq k$ }
		\STATE $\prufunc(w, pru, deposit, \gdep, {\cal CS})$;
	\ENDIF
\ENDFOR
	
\IF{$v$ is contained in a ${\cal CC}_i$ and ${\cal CC}_i$ has not been processed}
	\STATE $\gdep({\cal CC}_i)$++;
	\IF{$v$ is a strong side-vertex  \textbf{or} $\gdep({\cal CC}_i)\geq k$}
		\STATE mark ${\cal CC}_i$ as processed;
		\FORALL{$w \in {\cal CC}_i$ s.t. $pru(w)=\text{false}$}
			\STATE $\prufunc(w, pru, deposit, \gdep, {\cal CS})$
		\ENDFOR
	\ENDIF
\ENDIF
\end{algorithmic}
\end{algorithm}

\section{Experiments}
\label{sec:exp}
In this section, we experimentally evaluate the performance of our proposed algorithms. 

All algorithms are implemented in C++ using gcc complier at -O3 optimization level. All the experiments are conducted under a Linux operating system running on a machine with an Intel Xeon 3.4GHz CPU, 32GB 1866MHz DDR3-RAM. The time cost of algorithms is measured as the amount of wall-clock time elapsed during program execution.

\stitle{Datasets.} We use 7 publicly available real-world networks to evaluate the algorithms. The network statistics is shown in \reftab{datasets}. 

\begin{table}[t]
\topcaption{\sc Network Statistics} \label{tab:datasets}
\begin{center}
{\small
\begin{tabular}{l|c|c|c|c}
\hline
{\bf Datasets} & $|V|$ & $|E|$ & Density & Max Degree \\\hline\hline
\sfd & 281,903 & 2,312,497 & 8.20 & 38,625 \\
\dblp & 317,080 & 1,049,866 & 3.31 & 343 \\
\cnr & 325,557 & 3,216,152 & 9.88 & 18,236 \\
\nd & 325,729 & 1,497,134 & 4.60 & 10,721 \\
\google & 875,713 & 5,105,039 & 5.83 & 6,332 \\
\cit & 3,774,768 & 16,518,948 & 4.38 & 793 \\\hline
\end{tabular}
}
\end{center}
\end{table}

\sfd is a web graph where vertices represent pages from Stanford University (\url{stanford.edu}) and edges represent hyperlinks between them. \dblp is a co-authorship network of DBLP. \cnr is a small crawl of the Italian CNR domain. \nd is a web graph where vertices represent pages from University of Notre Dame (\url{nd.edu}) and edges represent hyperlinks between them. \google is a web graph from Google Programming Contest. \ytb is a social network from the video-sharing web site Youtube. \cit is a citation network maintained by National Bureau of Economic Research. All datasets can be downloaded from SNAP\footnote{\small \url{http://snap.stanford.edu/index.html}}.

\subsection{Effectiveness Evaluation}
\label{subsec:exp:qe}

We adopt the following three quality measures for effectiveness evaluation:

\begin{itemize}
\item\textit{Diameter $diam$.} The diameter definition is shown in \refeq{diameter}.

\item\textit{Edge Density $\rho_e$.} Edge density is the ratio of the number of edges in a graph to the number of edges in a complete graph with the same set of vertices. Formal equation is given as follows:

\begin{equation}
\label{eq:edge_density}
\small
\rho_e(g)=\frac{2 |E(g)|}{|V(g)| \cdot (|V(g)|-1)}
\end{equation}

\item\textit{Clustering Coefficient ${\cal C}$.} The local clustering coefficient $c(u)$ for a vertex $u$ is the ratio of the number of triangles containing $u$ to the number of triples centered at $u$, which is defined as:

\begin{equation}
\label{eq:local_cc}
\small
c(u)=\frac{|\{(v,w) \in E| v \in N(u), w \in N(u)\}|}{|N(u)| \cdot (|N(u)|-1) / 2}
\end{equation}

The clustering coefficient of a graph is the average local clustering coefficient of all vertices:

\begin{equation}
\label{eq:cc}
\small
{\cal C}(G) = \frac{1}{|V|}\sum_{u \in V}c(u)
\end{equation}
\end{itemize}

In our effectiveness testings, given a graph $G$ and a parameter $k$, we calculate the diameter, edge density and clustering coefficient respectively for every $k$-\vcc of $G$. We show the average value of all $k$-\vccs for each parameter $k$. We compute the same statistics for all $k$-\eccs and $k$-\ccs of $G$ as comparisons.

\begin{figure}[t!]
\begin{center}
\begin{tabular}[h]{c}
\hspace{-1em}
\subfigure[\ytb]{
	\includegraphics[width=0.5\columnwidth]{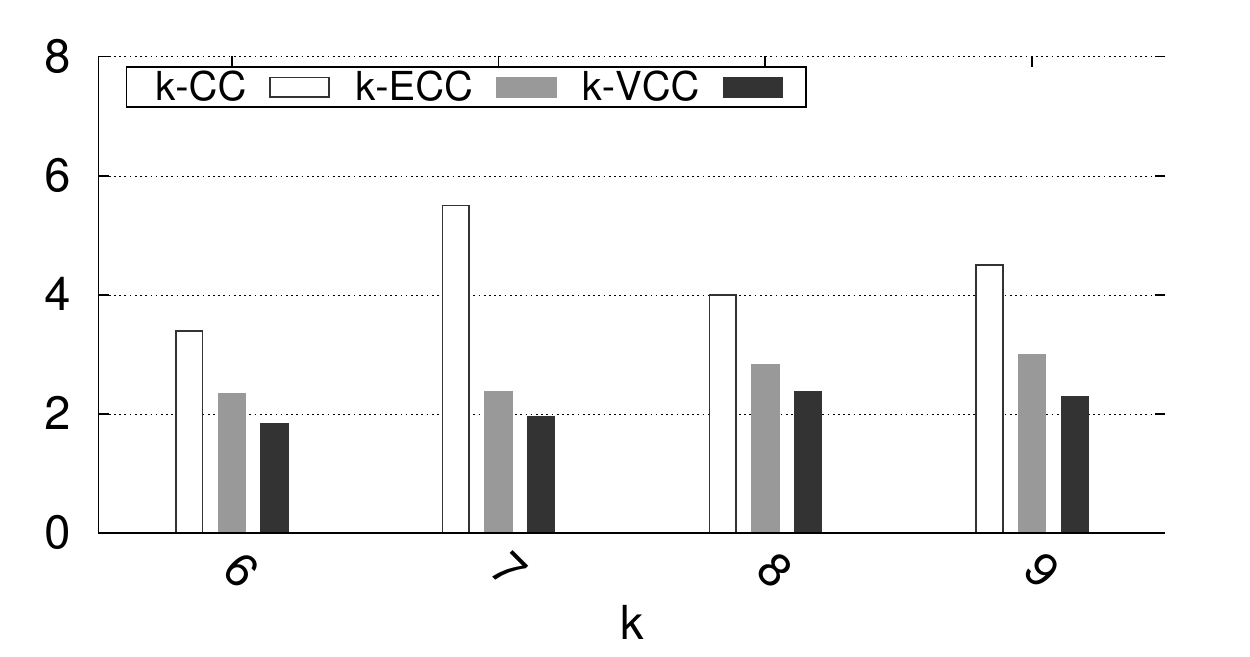}
}
\hspace{-1em}
\subfigure[\dblp]{
	\includegraphics[width=0.5\columnwidth]{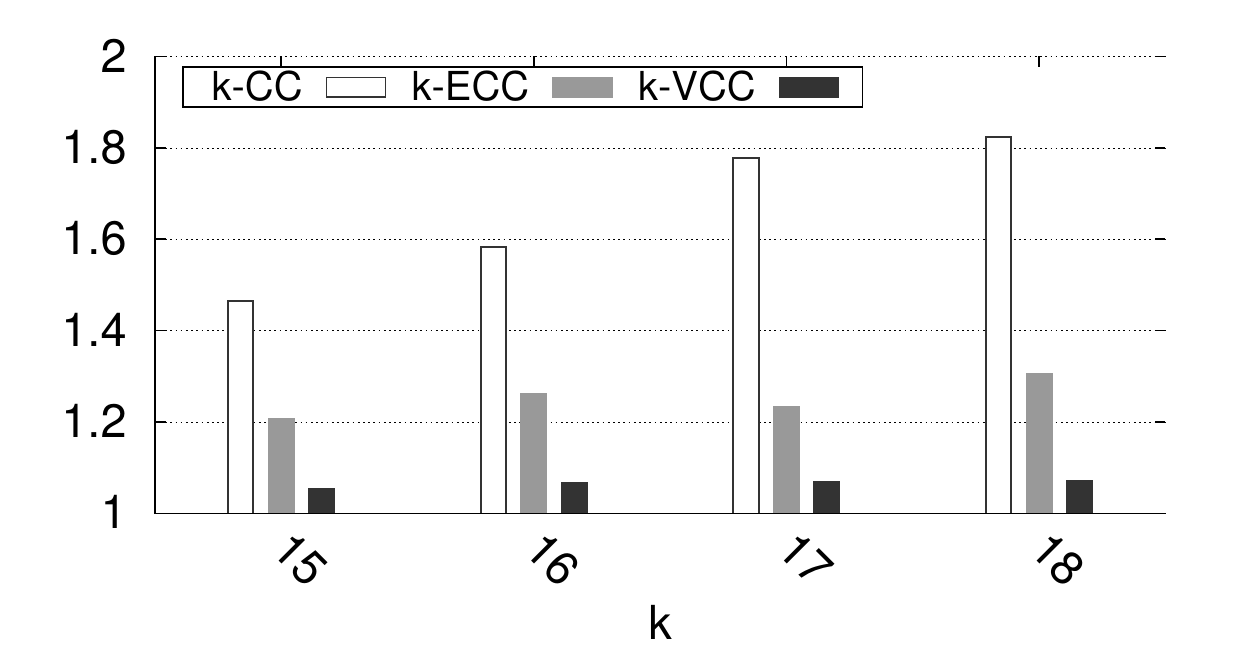}
}
\\
\hspace{-1em}
\subfigure[\google]{
	\includegraphics[width=0.5\columnwidth]{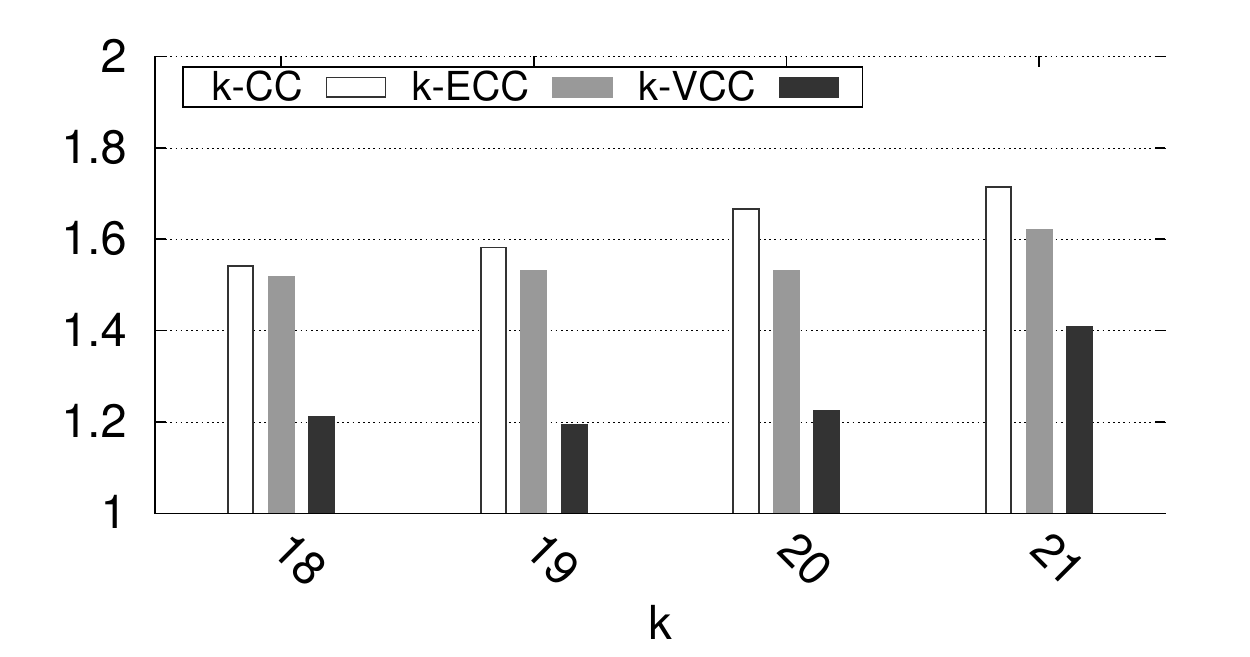}
}
\hspace{-1em}
\subfigure[\cnr]{
	\includegraphics[width=0.5\columnwidth]{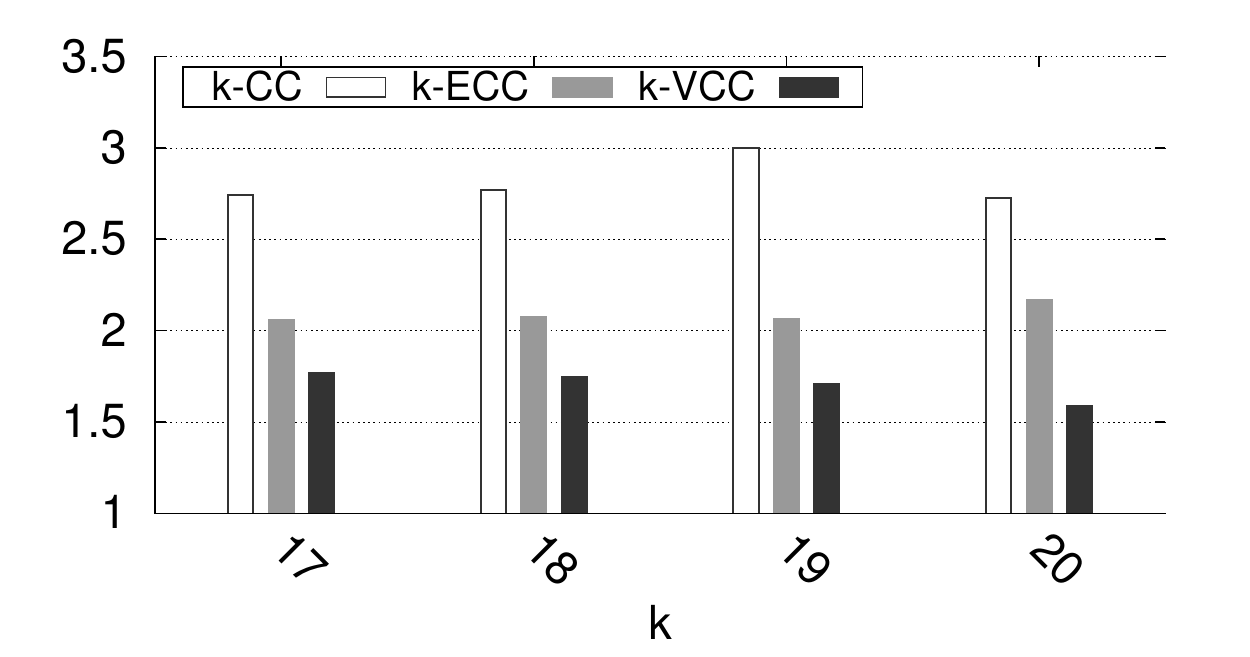}
}
\end{tabular}
\end{center}
\topcaption{Average Diameter}
\label{fig:exp:ad}
\end{figure}

\reffig{exp:ad} presents the average diameter of all $k$-\ccs, $k$-\eccs and $k$-\vccs under the different parameter $k$ in real datasets. Similarly, \reffig{exp:aed} and \reffig{exp:cco} give the statistics of edge density and clustering coefficient respectively. We choose four datasets \ytb, \dblp, \google, and \cnr as representatives in this experiment. In the experimental results, we can see that for the same parameter $k$ value, $k$-\vccs have the smallest average diameter, the largest average edge density and the largest clustering coefficient in all three tested metrics. The result shows that our $k$-\vccs are more cohesive than the $k$-\eccs and $k$-\ccs.

\begin{figure}[t!]
\begin{center}
\begin{tabular}[h]{c}
\hspace{-1em}
\subfigure[\ytb]{
	\includegraphics[width=0.5\columnwidth]{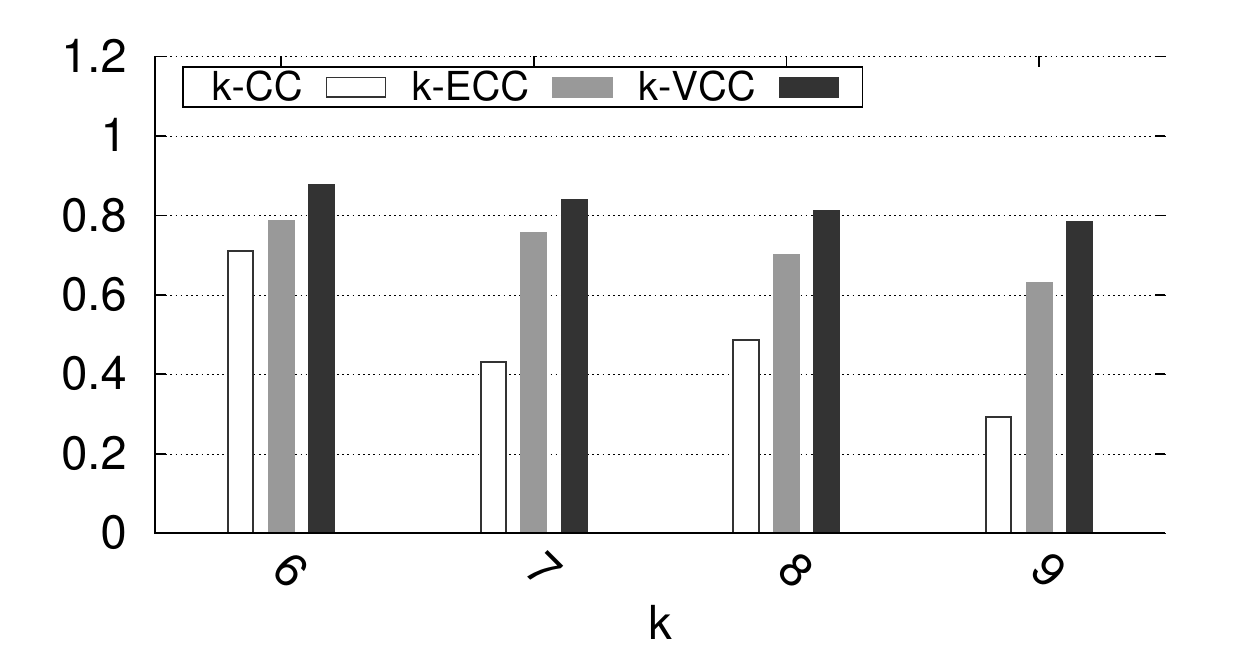}
}
\hspace{-1em}
\subfigure[\dblp]{
	\includegraphics[width=0.5\columnwidth]{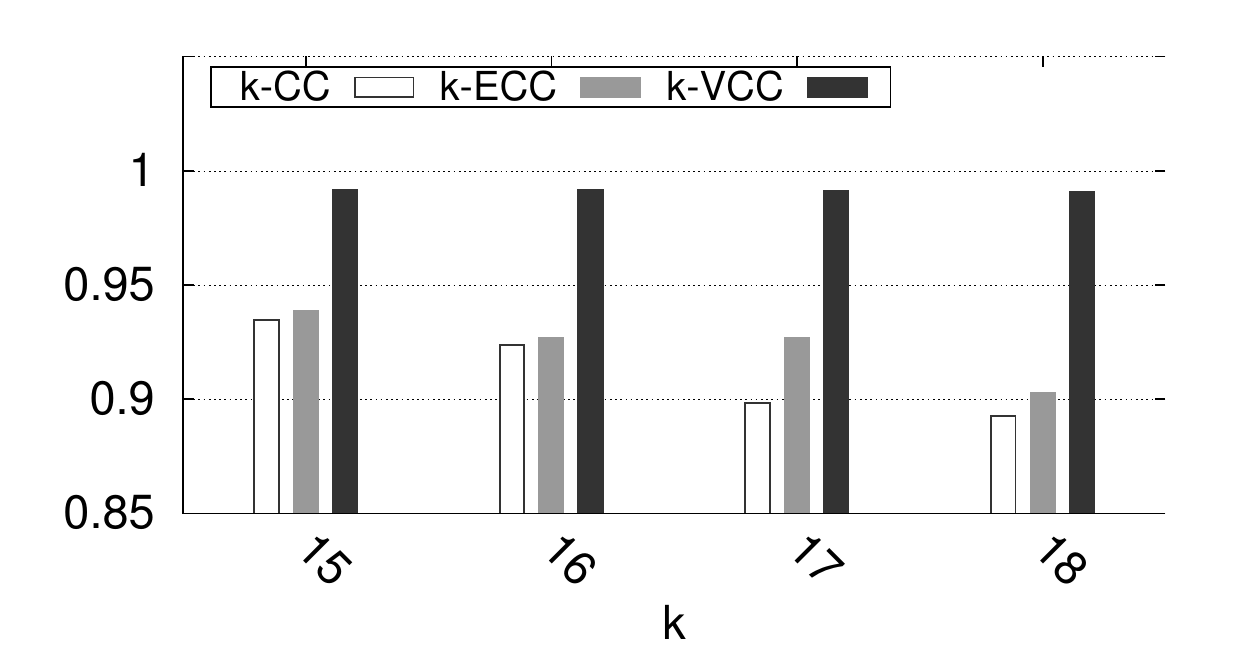}
}
\\
\hspace{-1em}
\subfigure[\google]{
	\includegraphics[width=0.5\columnwidth]{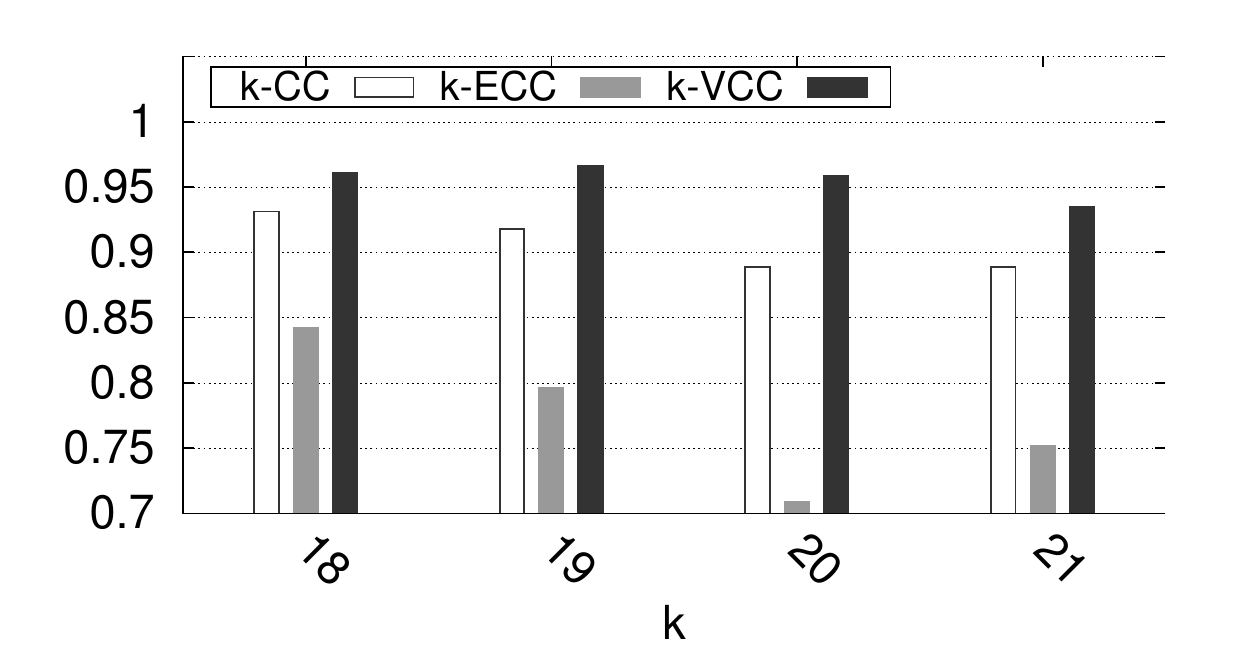}
}
\hspace{-1em}
\subfigure[\cnr]{
	\includegraphics[width=0.5\columnwidth]{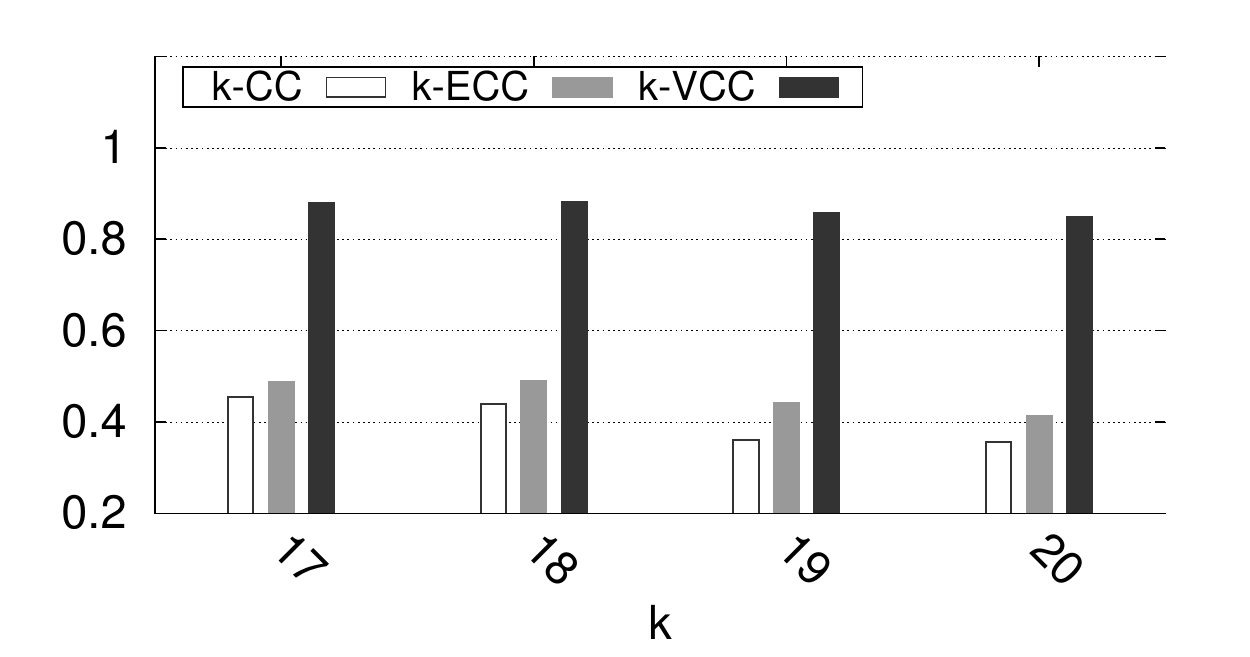}
}
\end{tabular}
\end{center}
\topcaption{Average Edge Density}
\label{fig:exp:aed}
\end{figure}

It is worth to mention that in \reffig{exp:ad}, when $k$ increases, the diameter of the obtained $k$-\ccs, $k$-\vccs, and $k$-\eccs can either increase or decrease. As an example, the average diameter of $k$-\vccs decreases slightly while increasing $k$ from $7$ to $8$ in the \ytb dataset. This is because when $k$ increases, the $k$-\vccs obtained are more cohesive, thus the diameter for the $k$-\vccs containing a certain vertex becomes smaller. Such reason also leads to the increase for edge density and clustering coefficient for some $k$ values in these datasets. As another example,  the average diameter of $k$-\vccs increases slightly while increasing $k$ from $20$ to $21$ in the \google dataset. The reason for this phenomenon is that there exist some small $20$-\vccs in which no vertex belongs to any $21$-\vcc. Here the small $k$-\vcc means there exist small number of vertices inside. These small $20$-\vccs have small diameter, which makes the average diameter small. Such reason also leads to the decrease for edge density and clustering coefficient for some $k$ values in these datasets. 

\begin{figure}[t!]
\begin{center}
\begin{tabular}[h]{c}
\hspace{-1em}
\subfigure[\ytb]{
	\includegraphics[width=0.5\columnwidth]{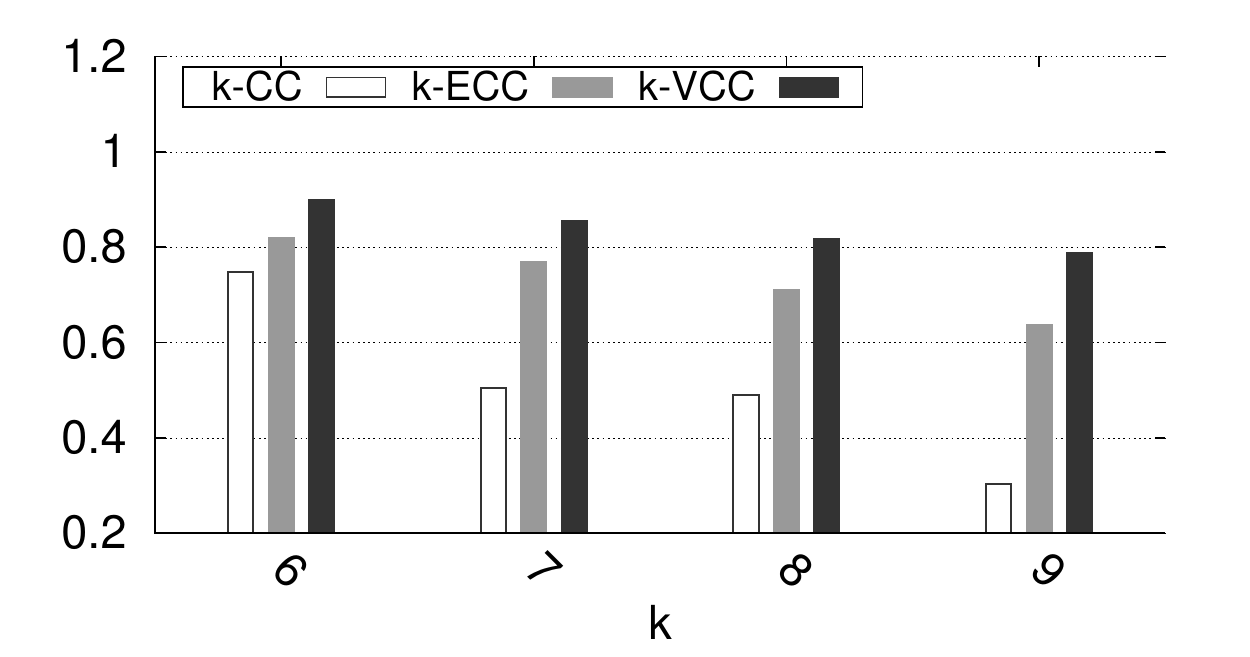}
}
\hspace{-1em}
\subfigure[\dblp]{
	\includegraphics[width=0.5\columnwidth]{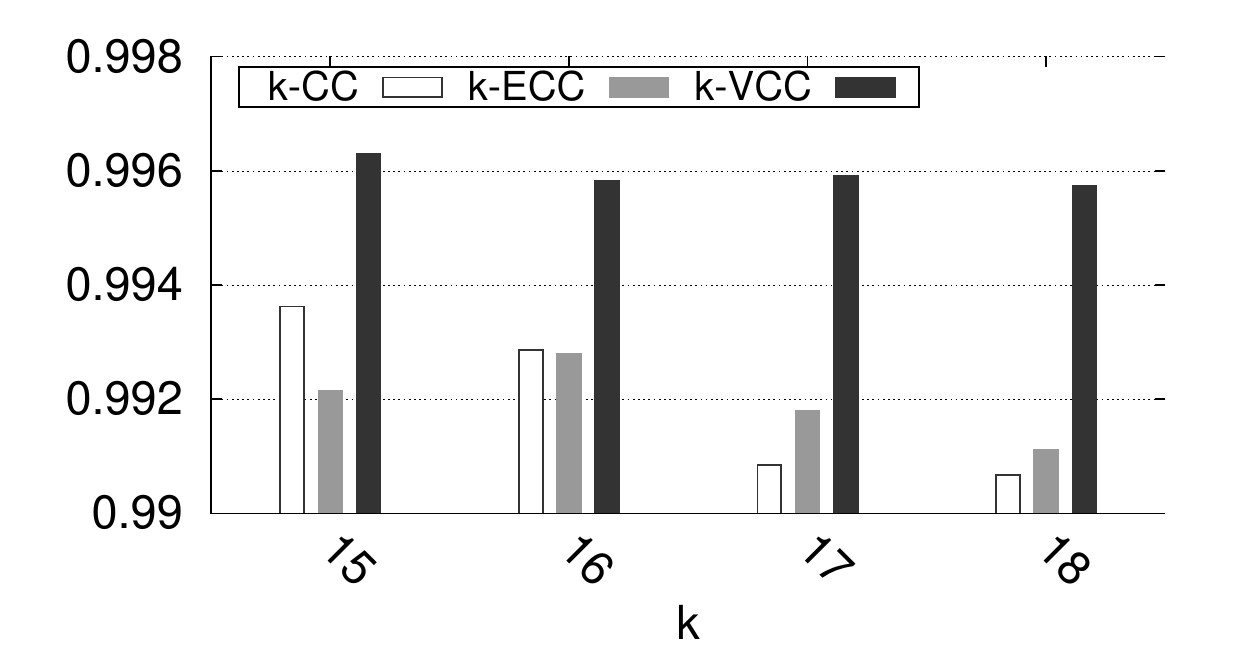}
}
\\
\hspace{-1em}
\subfigure[\google]{
	\includegraphics[width=0.5\columnwidth]{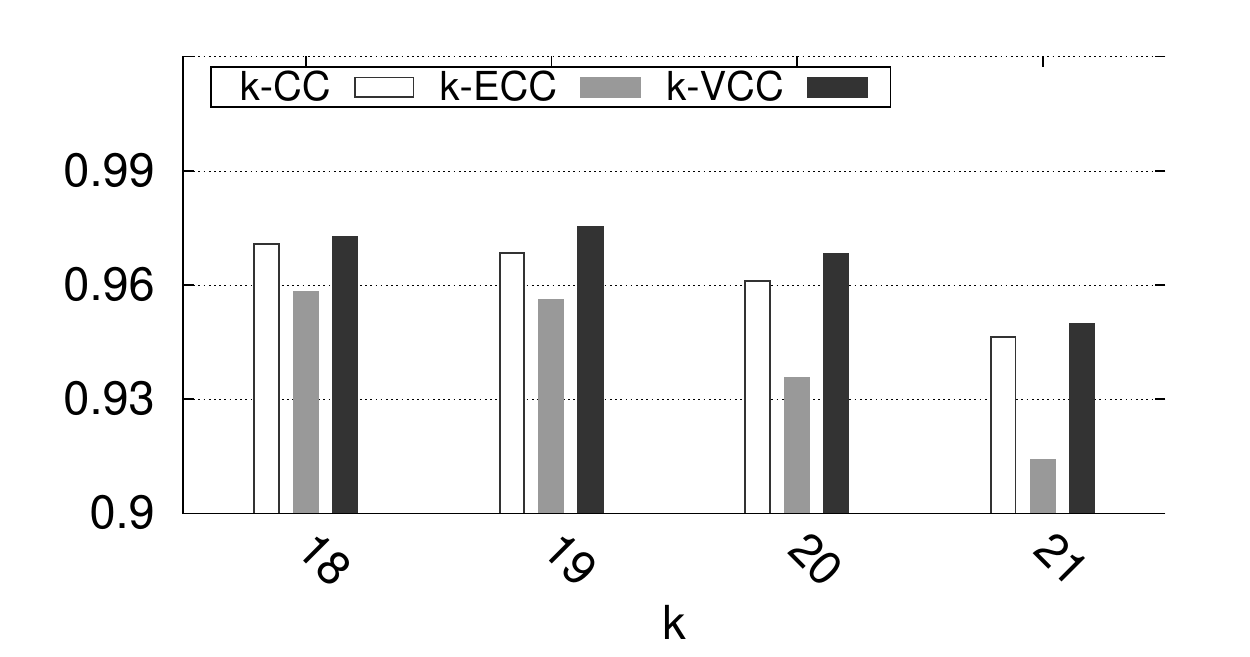}
}
\hspace{-1em}
\subfigure[\cnr]{
	\includegraphics[width=0.5\columnwidth]{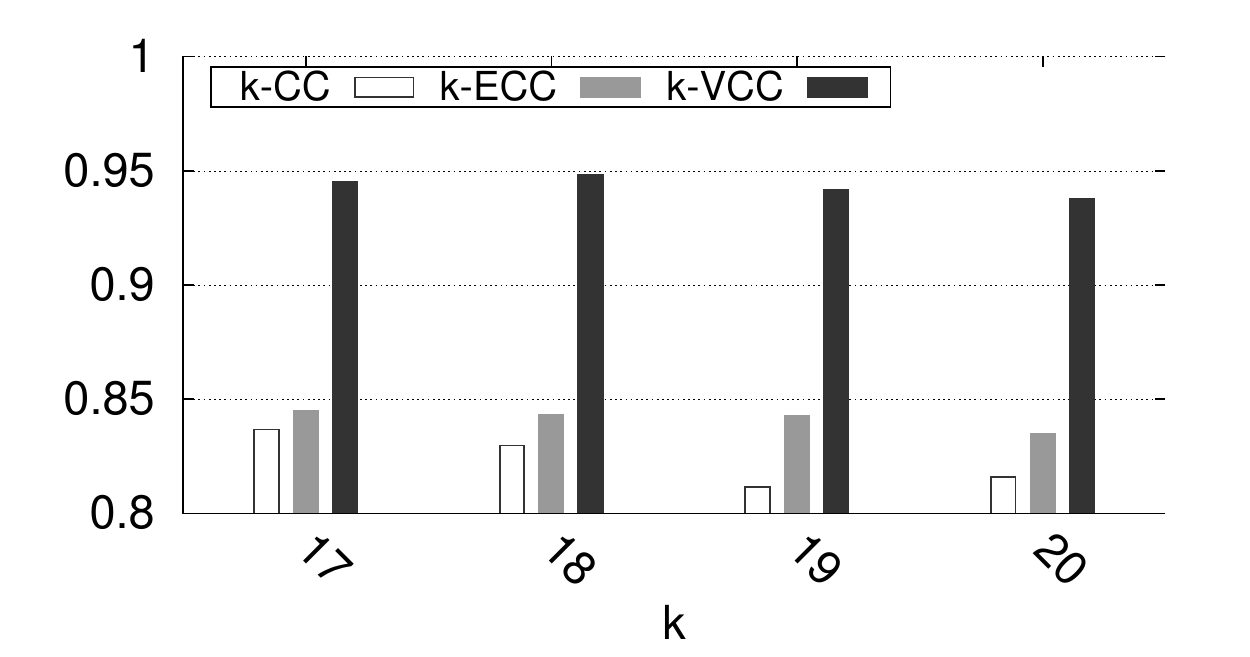}
}
\end{tabular}
\end{center}
\topcaption{Average Clustering Coefficient}
\label{fig:exp:cco}
\end{figure}

A case study is shown in \refsubsec{exp:cs} to further demonstrate the effectiveness of $k$-\vccs. 

\subsection{Efficiency Evaluation}
\label{subsec:exp:qp}

\begin{figure}[t!]
\begin{center}
\begin{tabular}[t]{c}
\hspace{-1em}
\subfigure[\sfd]{
	\includegraphics[width=0.5\columnwidth]{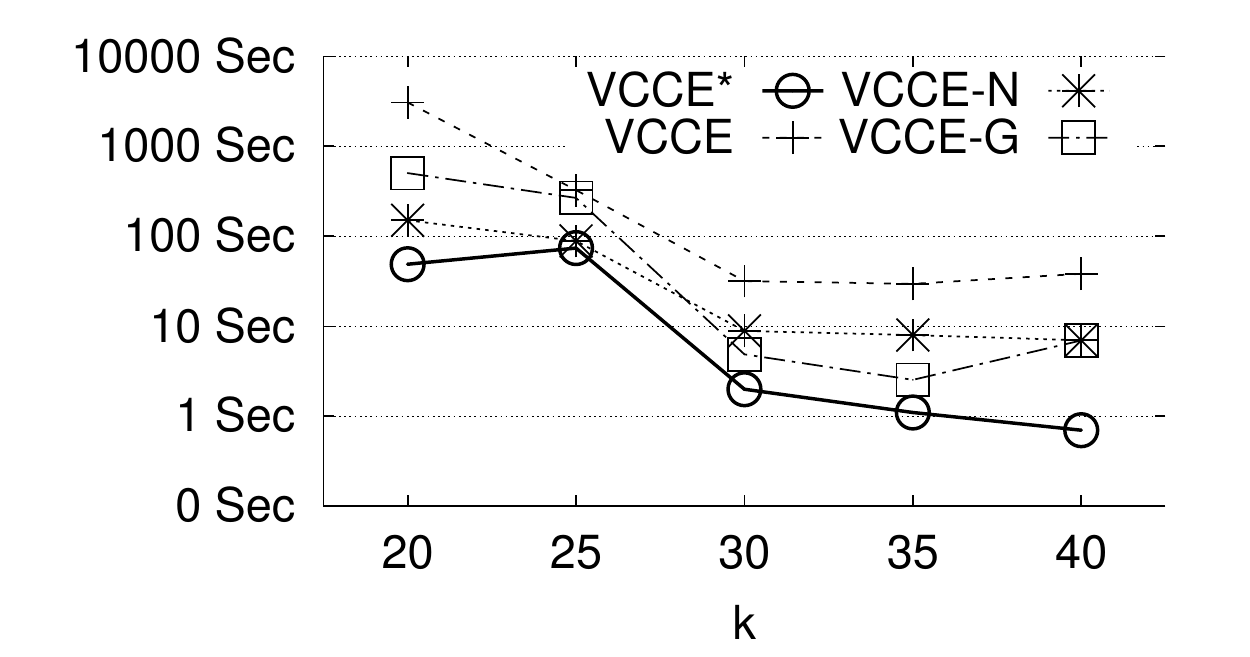}
}
\hspace{-1em}
\subfigure[\dblp]{
	\includegraphics[width=0.5\columnwidth]{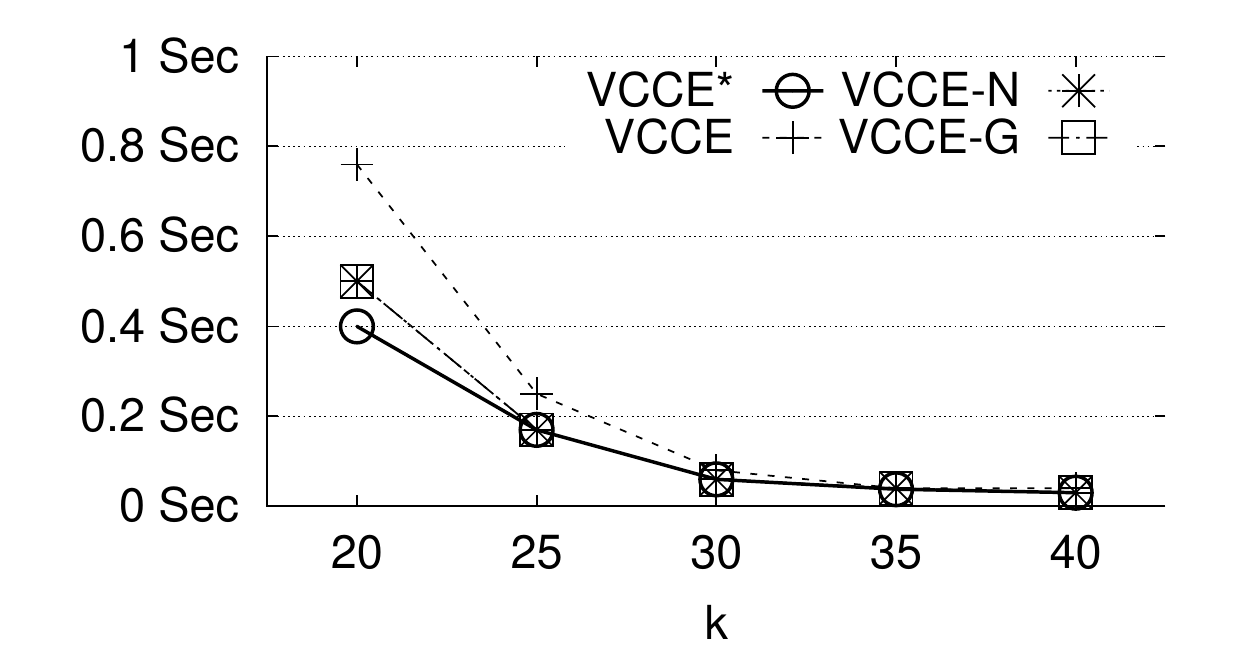}
}
\\
\hspace{-1em}
\subfigure[\nd]{
	\includegraphics[width=0.5\columnwidth]{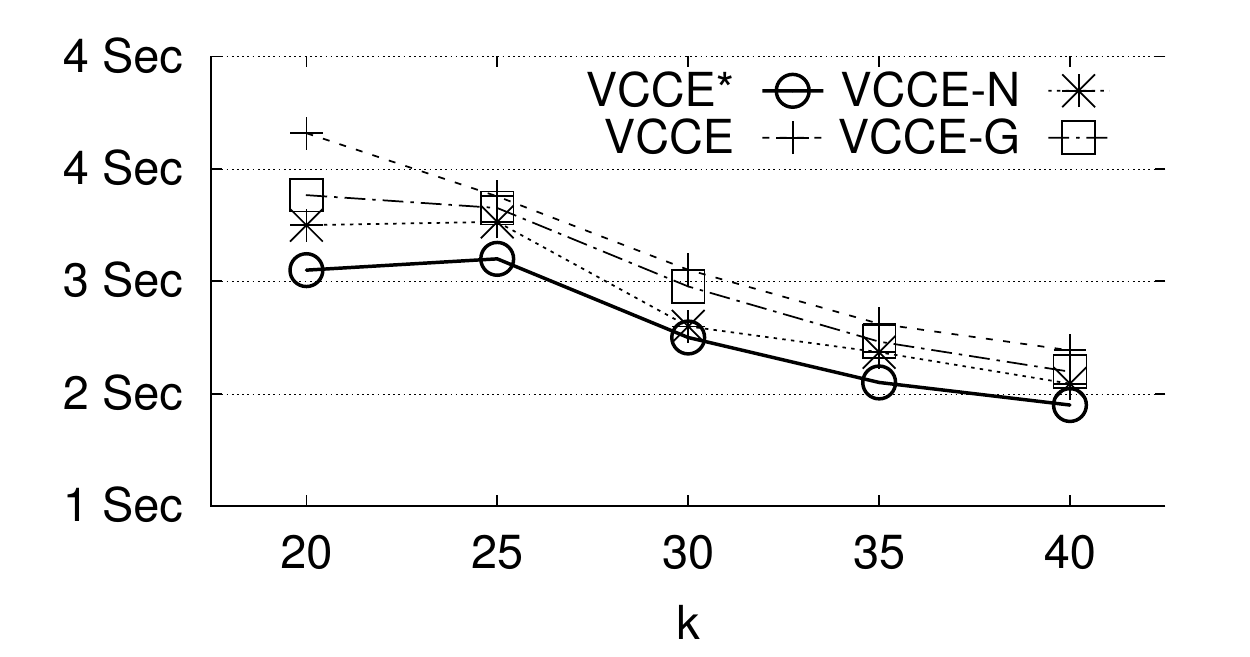}
}
\hspace{-1em}
\subfigure[\google]{
	\includegraphics[width=0.5\columnwidth]{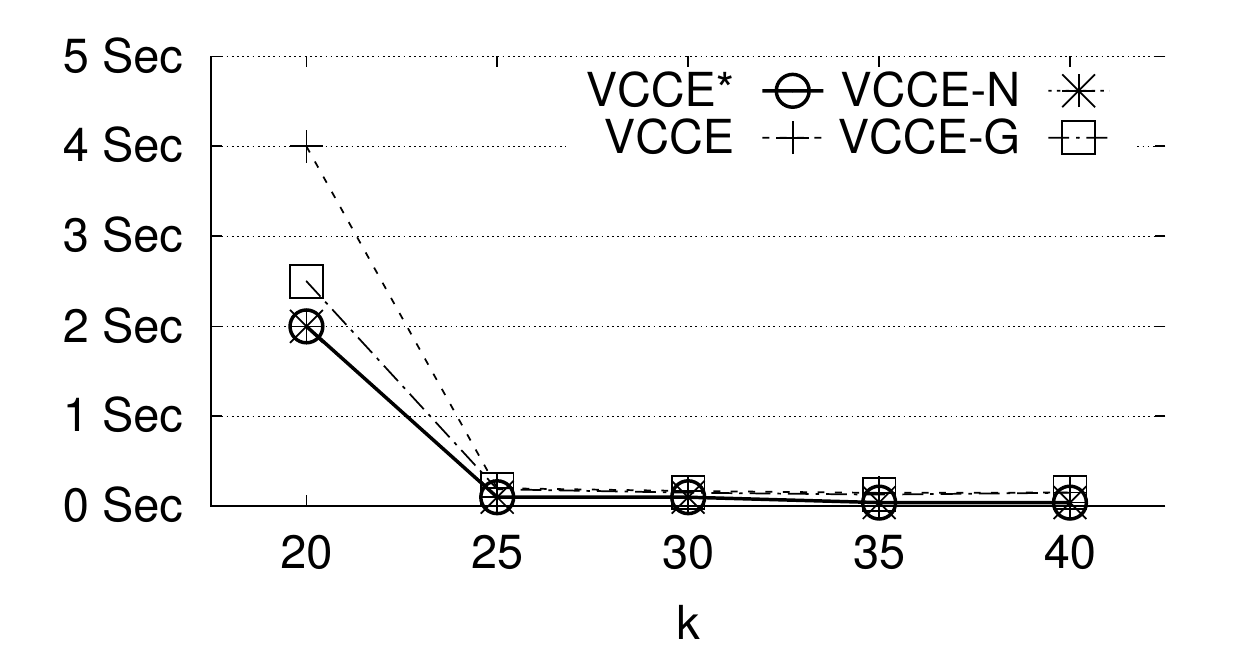}
}
\\
\hspace{-1em}
\subfigure[\cit]{
	\includegraphics[width=0.5\columnwidth]{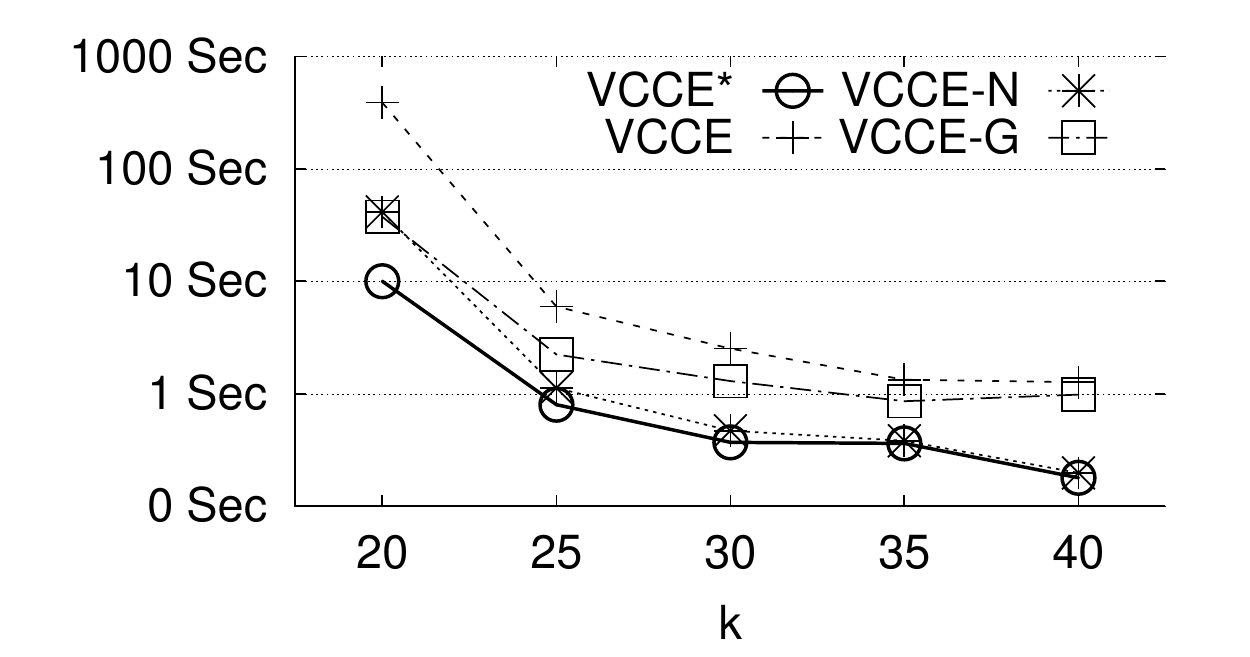}
}
\hspace{-1em}
\subfigure[\cnr]{
	\includegraphics[width=0.5\columnwidth]{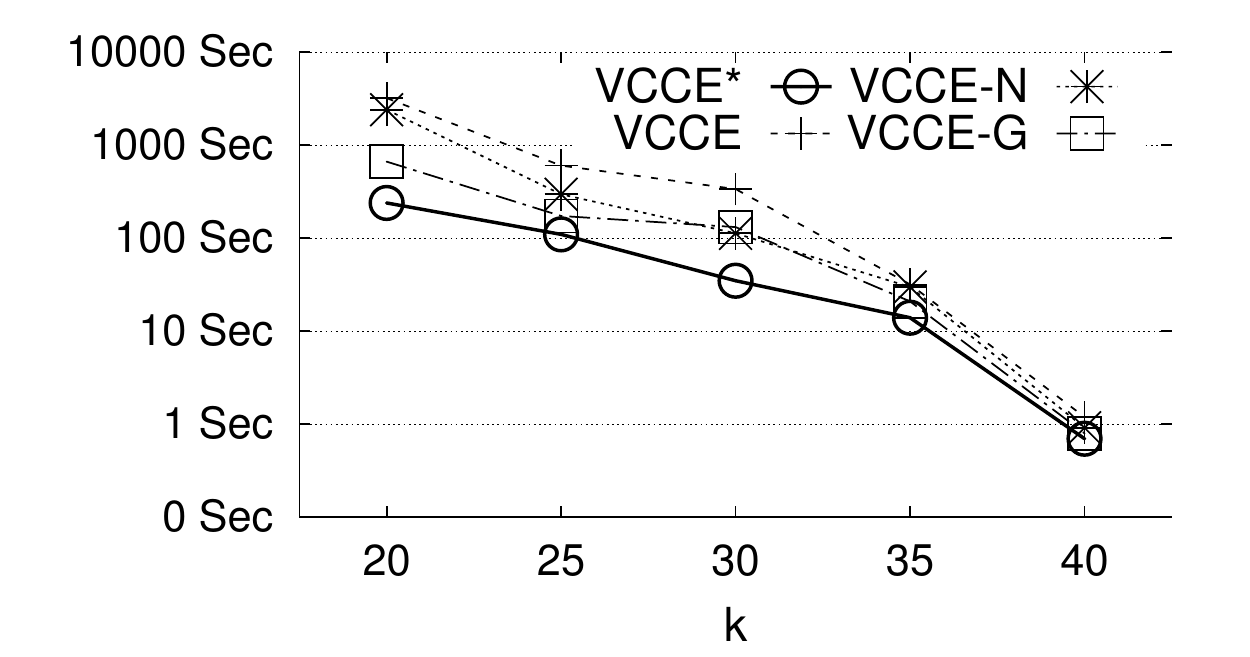}
}
\end{tabular}
\end{center}
\topcaption{Processing time}
\label{fig:exp:performance}
\end{figure}

To test the efficiency of our proposed techniques, we compare the following four algorithms to compute the $k$-\vccs. For each dataset, we show statistics of algorithms under different parameters $k$ varying from $20$ to $40$.

\begin{itemize}
\item \bsa: Our basic algorithm introduced in \refsec{base}.
\item \nsa: The basic algorithm with the neighbor sweep strategy introduced in \refsec{opt:ns}.
\item \gsa: The basic algorithm with the group sweep strategy introduced in \refsec{opt:gs}.
\item \mra: The algorithm with both neighbor sweep and group sweep strategies.
\end{itemize} 

\stitle{Testing the Time Cost.} As we can see from \reffig{exp:performance}, the time cost of each algorithm generally presents a decreasing trend while parameter $k$ increases. That is because a higher value of parameter $k$ leads to a smaller number of $k$-\vccs. Intuitively, the algorithm will test less local connectivity during the processing when $k$ increases. A special case here is that algorithm \mra spends a little more time under $k=25$ than under $k=20$ in the \sfd dataset. This phenomenon happens due to the structure of the \sfd graph in which $k=25$ leads to more partitions than $k=20$. We also find that both algorithms \nsa and \gsa are more efficient than the basic algorithm in all testing cases. Considering the specific structures of different datasets, we find that the group sweep strategy is more effective on graph \cnr, and the neighbor sweep strategy is more effective on other datasets. Our \mra algorithm outperforms all other algorithms in all test cases.

\begin{table}[t]

\topcaption{\sc Proportion for Different Rules}
\label{tab:ruleprop}
\begin{center}
{\small
\begin{tabular}{l|c|c|c|c|c|c}
\hline
{\bf Rules} & \sfd & \dblp & \nd & \google & \cit & \cnr \\\hline\hline
\textit{NS\_1} & 14\% & 67\% & 1\% & 29\% & 12\% & 11\% \\
\textit{NS\_2} & 40\% & 21\% & 42\% & 36\% & 68\% & 32\% \\
\textit{GS} & 13\% & 4\% & 1\% & 9\% & 12\% & 48\% \\
\textit{Non-Pru} & 33\% & 8\% & 56\% & 26\% & 8\% & 9\% \\\hline
\end{tabular}
}
\end{center}
\end{table}

\stitle{Testing the Effectiveness of Sweep Rules.} To further investigate the effectiveness of our sweep rules, we also track each processed vertex during the performance of \mra and record the number of vertices pruned by each strategy. Specifically, when performing sweep procedure, we separately mark the vertices pruned by \textit{neighbor sweep rule 1} (strong-side vertex), \textit{neighbor sweep rule 2} (neighbor deposit) and group sweep. Here, we divide neighbor sweep into two detailed sub-rules since the both of them perform well and the effectiveness of these two strategies is not very consistent in different datasets. For each vertex $v$ in line~11 of \fcstar, we increase the count for corresponding strategy if $v$ is pruned (line 13). We also record the number of vertices which are non-pruned and really tested (line 12). For each dataset, we record these data under different $k$ from $20$ to $40$ and obtain the average value. The result is shown in \reftab{ruleprop}. \textit{NS\_1} and \textit{NS\_2} represent \textit{neighbor sweep rule 1} and \textit{neighbor sweep rule 2} respectively. \textit{GS} is group sweep and \textit{Non-Pru} means the proportion of non-pruned vertices. Note that there is a large number of vertices which are pruned in advance by the $k$-core technique.

The result shows our pruning strategies are effective. Over $90\%$ vertices are pruned in \dblp, \cit and \cnr. The proportion of totally pruned vertices is smallest in \nd, which is about $45\%$. Among these pruning strategies, the effectiveness of \textit{neighbor sweep rule 1} and group sweep depends on the specific structure of datasets. \textit{neighbor sweep rule 1} performs much better than group sweep in \dblp. The pruned vertices due to such strategy accounts for $67\%$ of total (including really tested vertices). Group sweep is more effective than \textit{neighbor sweep rule 1} in \cnr. The percentage for group sweep is about $48\%$ while it is only $11\%$ for \textit{neighbor sweep rule 1}. These two strategies are of about the same effectiveness in other datasets. As comparison, the \textit{neighbor sweep rule 2} closely relies on the existing processed vertices. It becomes more and more effective with vertices tested or pruned constantly. Our result shows it is very powerful and stable. The percentage for such strategy reaches to $68\%$ in \cit and is over $20\%$ in all other datasets.


\begin{figure}[t!]
\begin{center}
\begin{tabular}[t]{c}
\hspace{-1em}
\subfigure[\sfd]{
	\includegraphics[width=0.5\columnwidth, height=2cm]{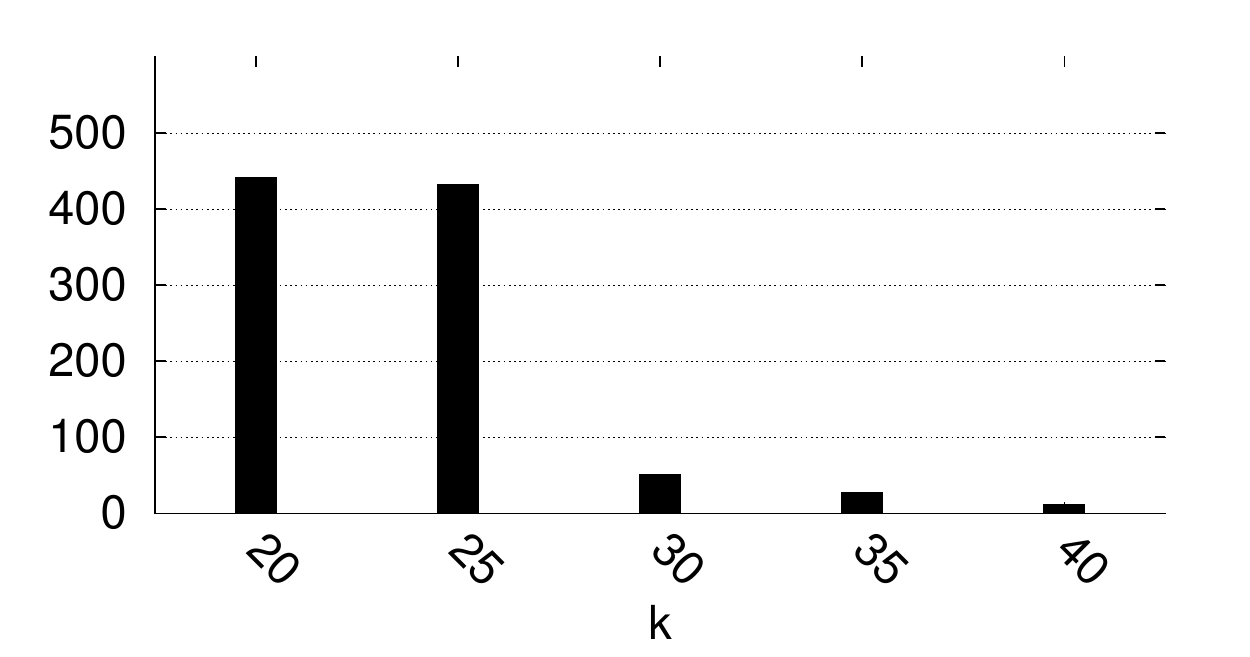}
}
\hspace{-1em}
\subfigure[\dblp]{
	\includegraphics[width=0.5\columnwidth, height=2cm]{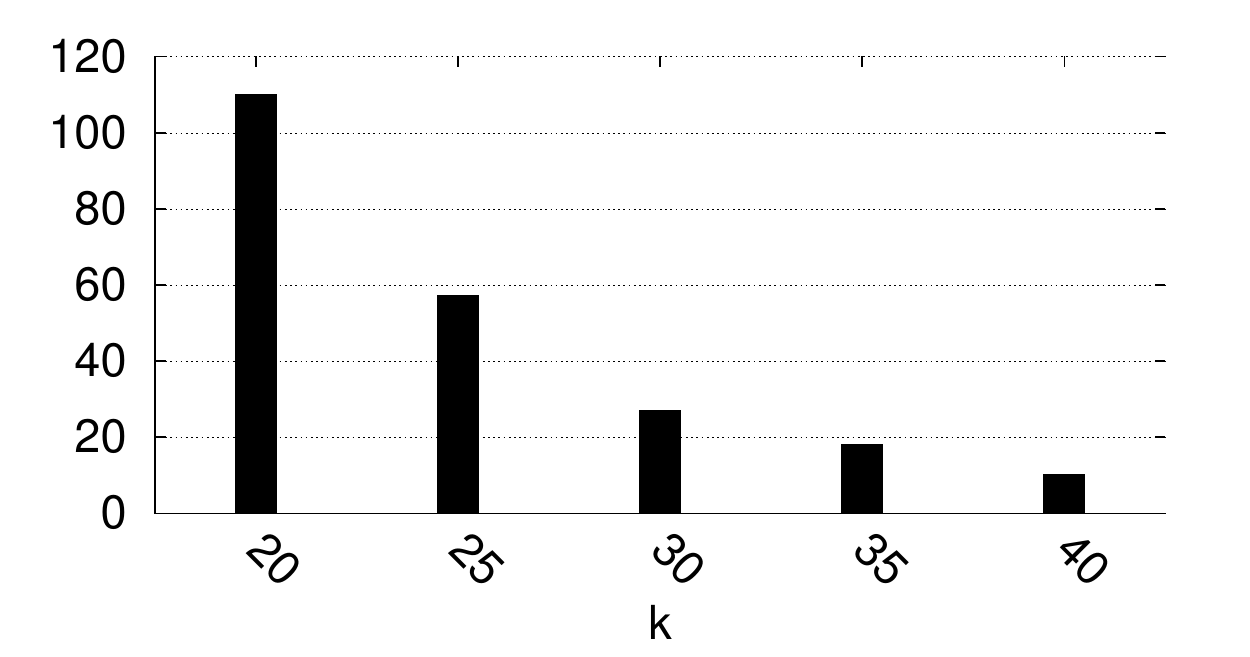}
}
\\
\hspace{-1em}
\subfigure[\nd]{
	\includegraphics[width=0.5\columnwidth, height=2cm]{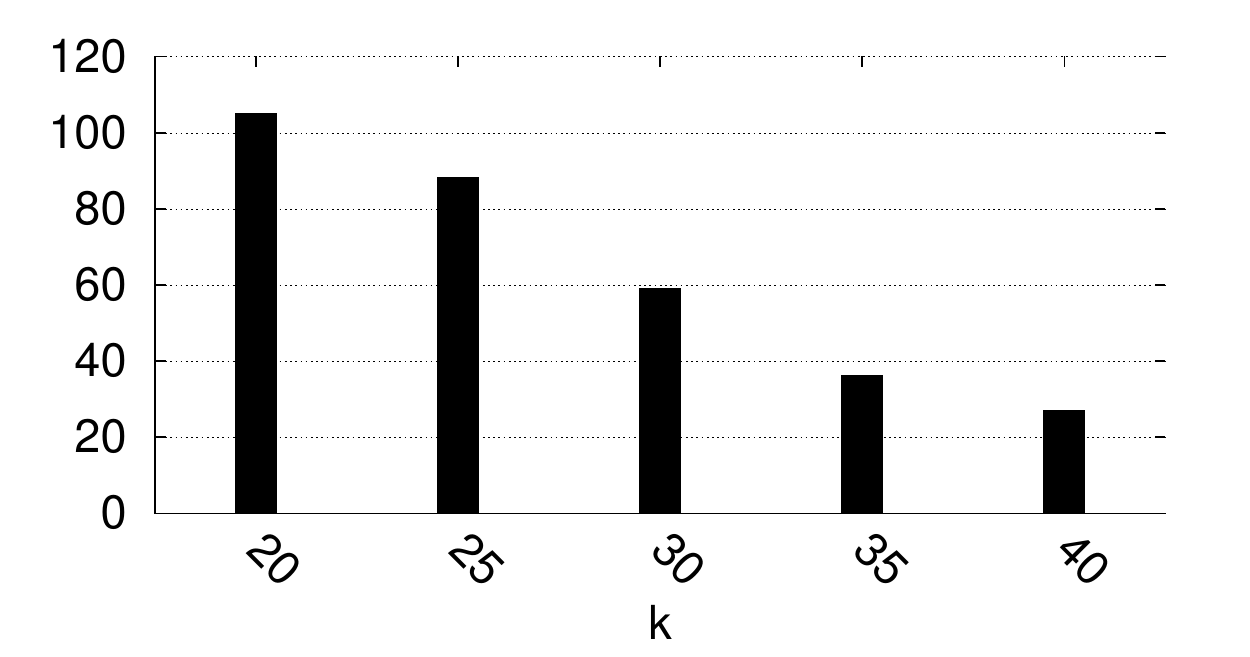}
}
\hspace{-1em}
\subfigure[\google]{
	\includegraphics[width=0.5\columnwidth, height=2cm]{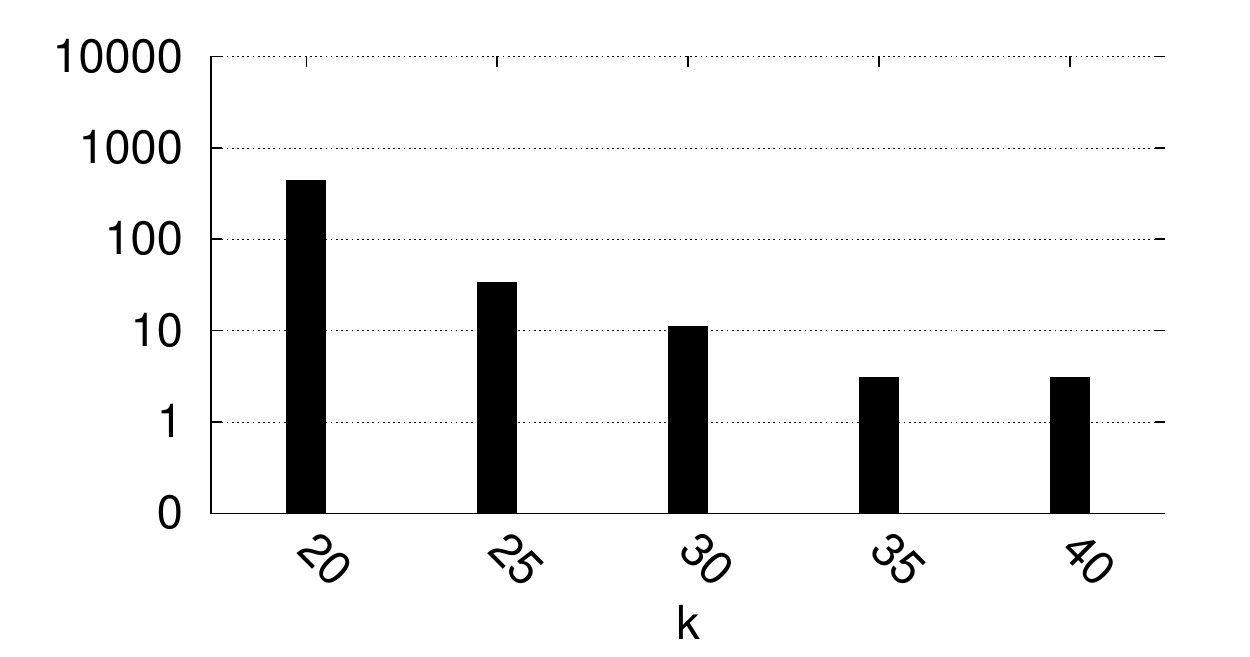}
}
\\
\hspace{-1em}
\subfigure[\cit]{
	\includegraphics[width=0.5\columnwidth, height=2cm]{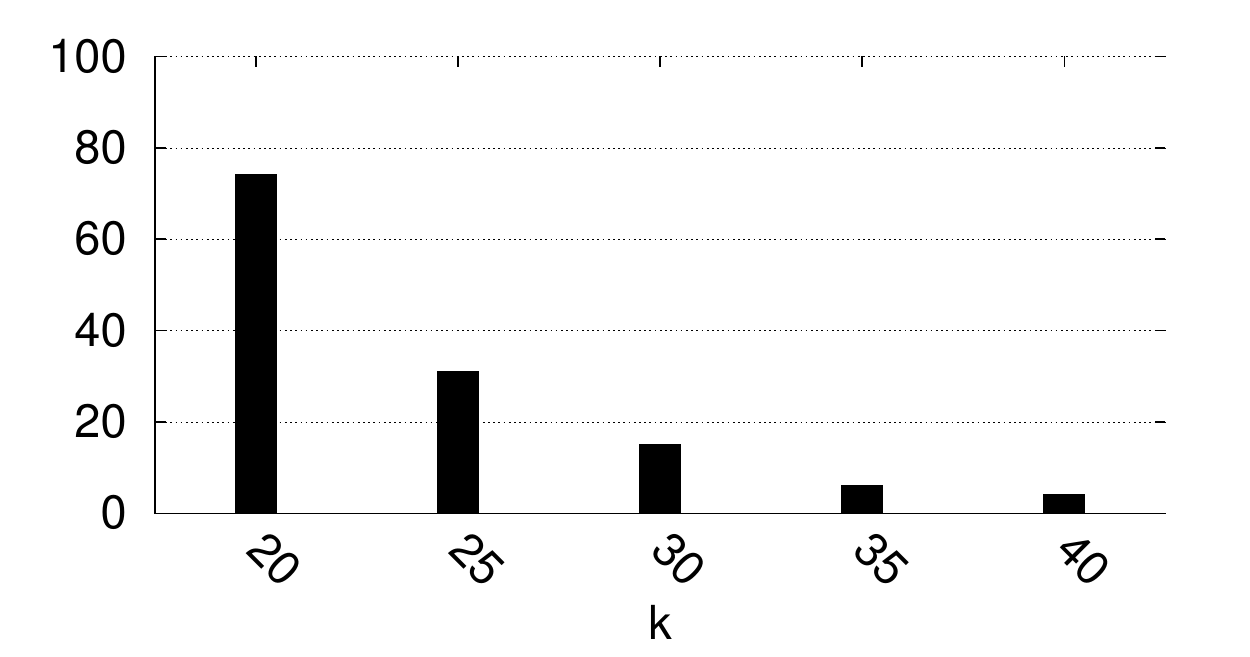}
}
\hspace{-1em}
\subfigure[\cnr]{
	\includegraphics[width=0.5\columnwidth, height=2cm]{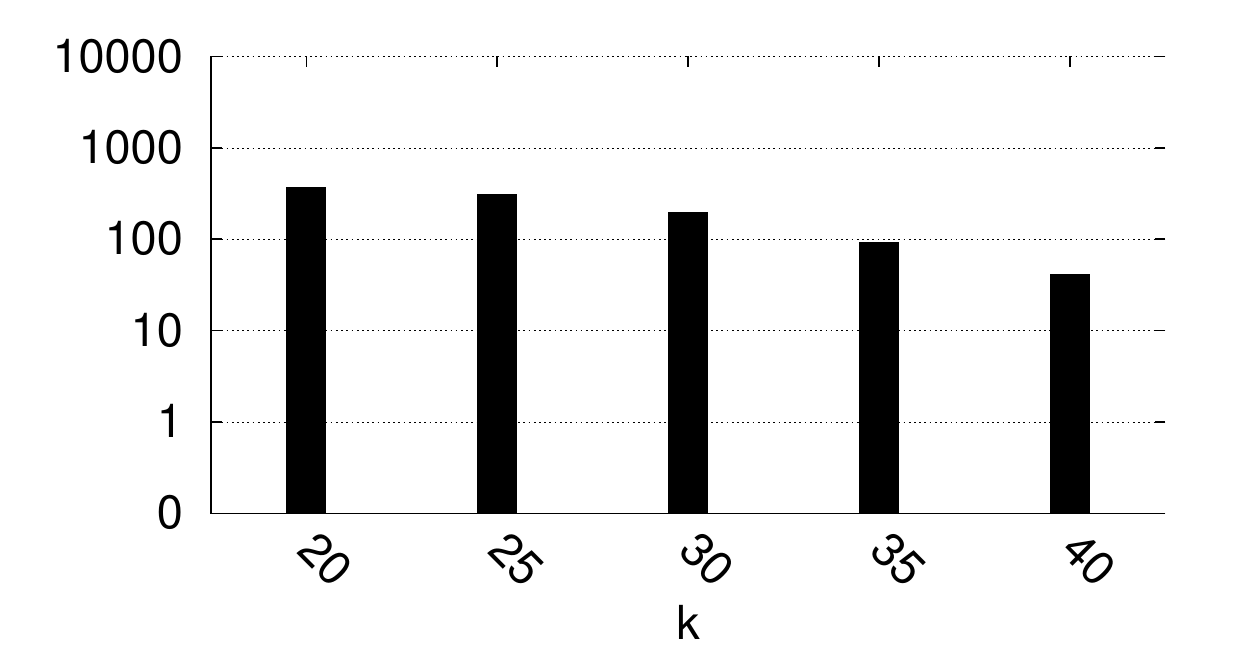}
}
\end{tabular}
\end{center}
\topcaption{Number of $k$-\vccs}
\label{fig:exp:cnts}
\end{figure}

\stitle{Testing the Number of $k$-\vccs.} The numbers of $k$-\vccs under different $k$ values for each dataset are given in \reffig{exp:cnts}. The numbers of $k$-\vccs on all tested datasets have a decreasing trend when varying $k$ from $20$ to $40$ in \reffig{exp:cnts}. The reason is that when increasing $k$, some $k$-\vccs cannot satisfy the requirement and thus will not appear in the result list. The trend of the number of $k$-\vccs explains why the processing time of our algorithms decreases when $k$ increases in \reffig{exp:performance}. Note that the number of $k$-\vccs may vary a lot in different datasets for the same $k$ value.  In the same dataset, when $k$ increases, the number of $k$-\vccs may drop sharply. For example, for when $k$ increases from $20$ to $25$, the number of $k$-\vccs in \google decreases by $10$ times. The number of $k$-\vccs depends on the graph structure of each specific graph. 

\begin{figure}[t!]
\begin{center}
\begin{tabular}[t]{c}
\hspace{-1em}
\subfigure[\sfd]{
	\includegraphics[width=0.5\columnwidth, height=2cm]{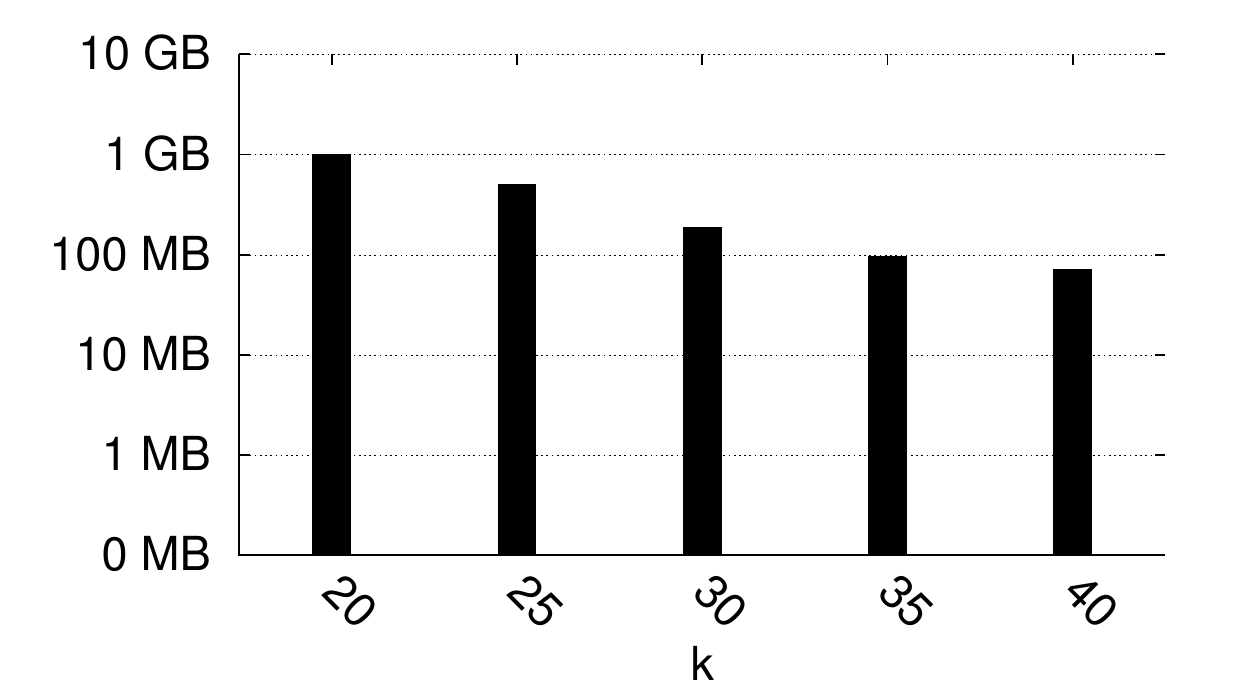}
}
\hspace{-1em}
\subfigure[\dblp]{
	\includegraphics[width=0.5\columnwidth, height=2cm]{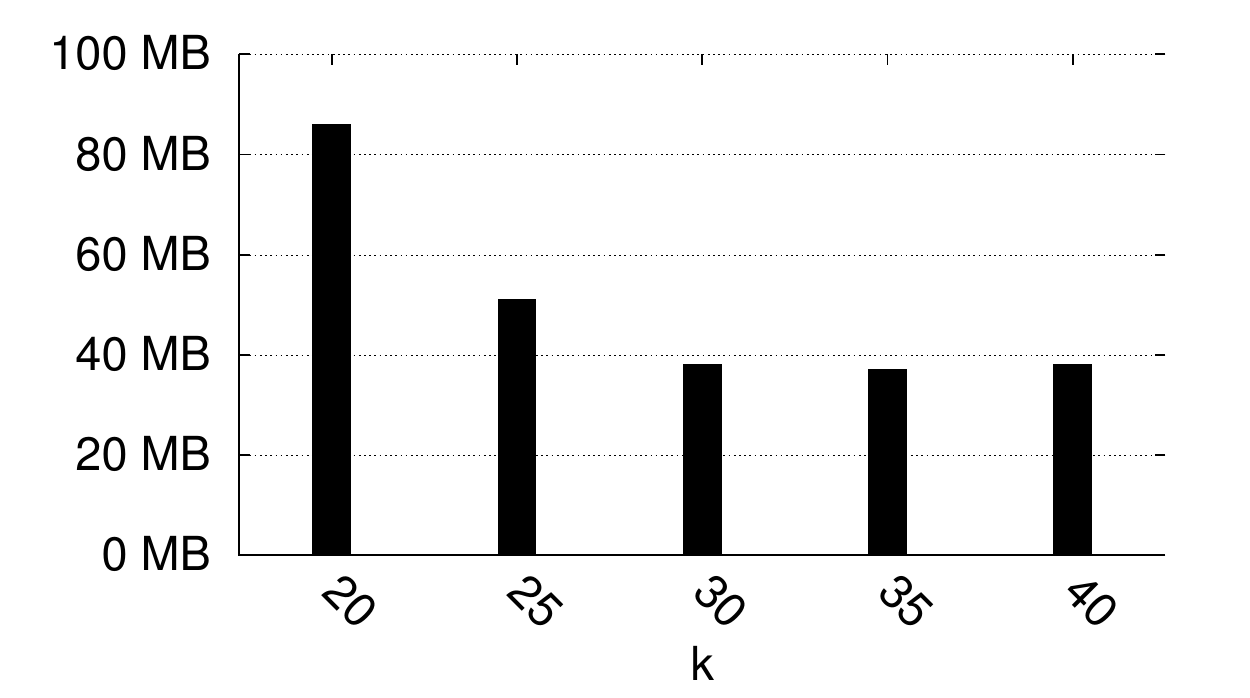}
}
\\
\hspace{-1em}
\subfigure[\nd]{
	\includegraphics[width=0.5\columnwidth, height=2cm]{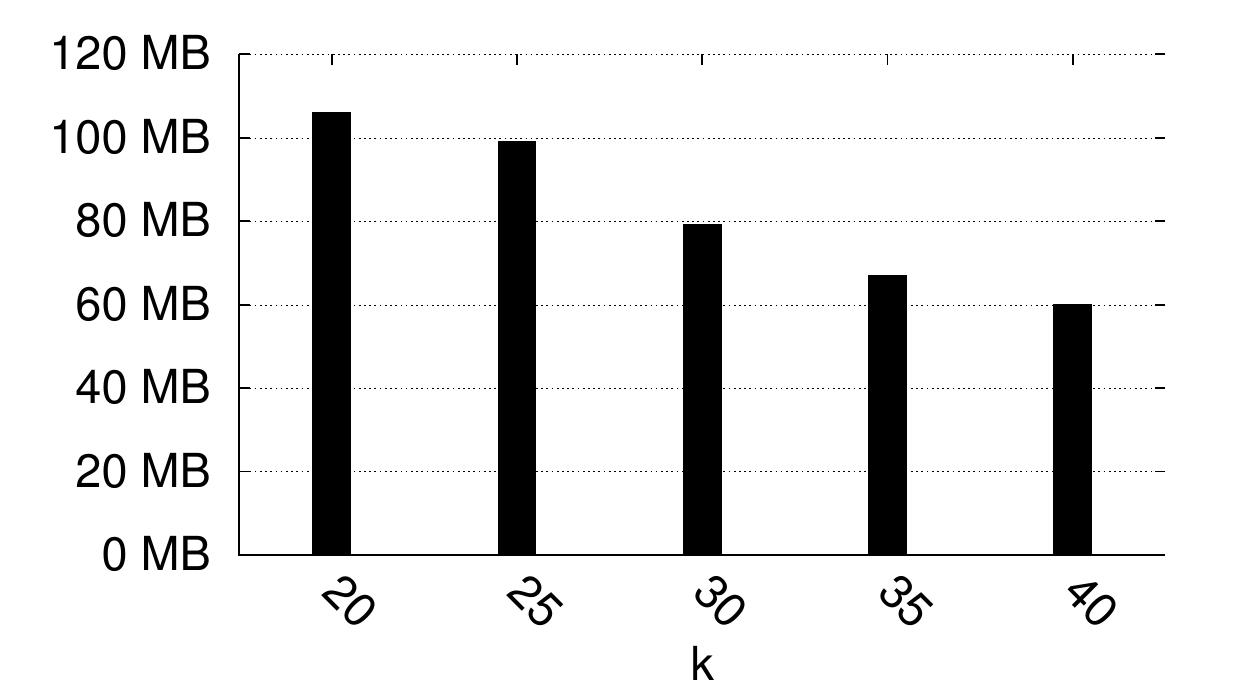}
}
\hspace{-1em}
\subfigure[\google]{
	\includegraphics[width=0.5\columnwidth, height=2cm]{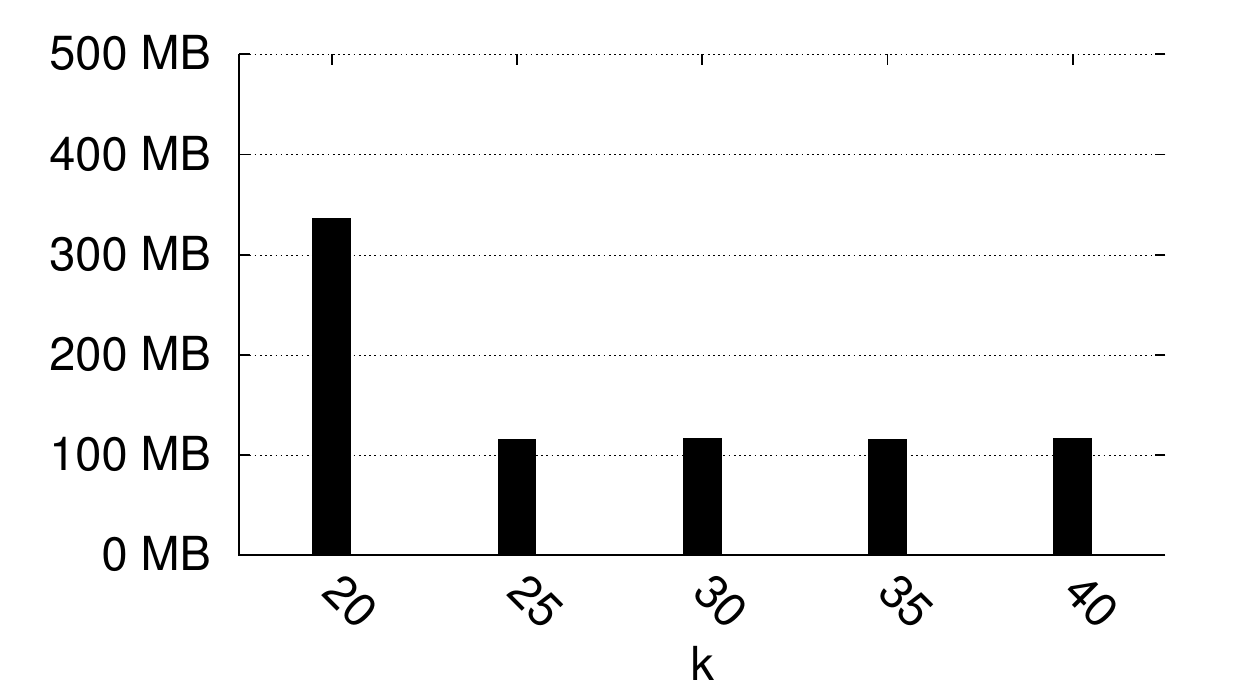}
}
\\
\hspace{-1em}
\subfigure[\cit]{
	\includegraphics[width=0.5\columnwidth, height=2cm]{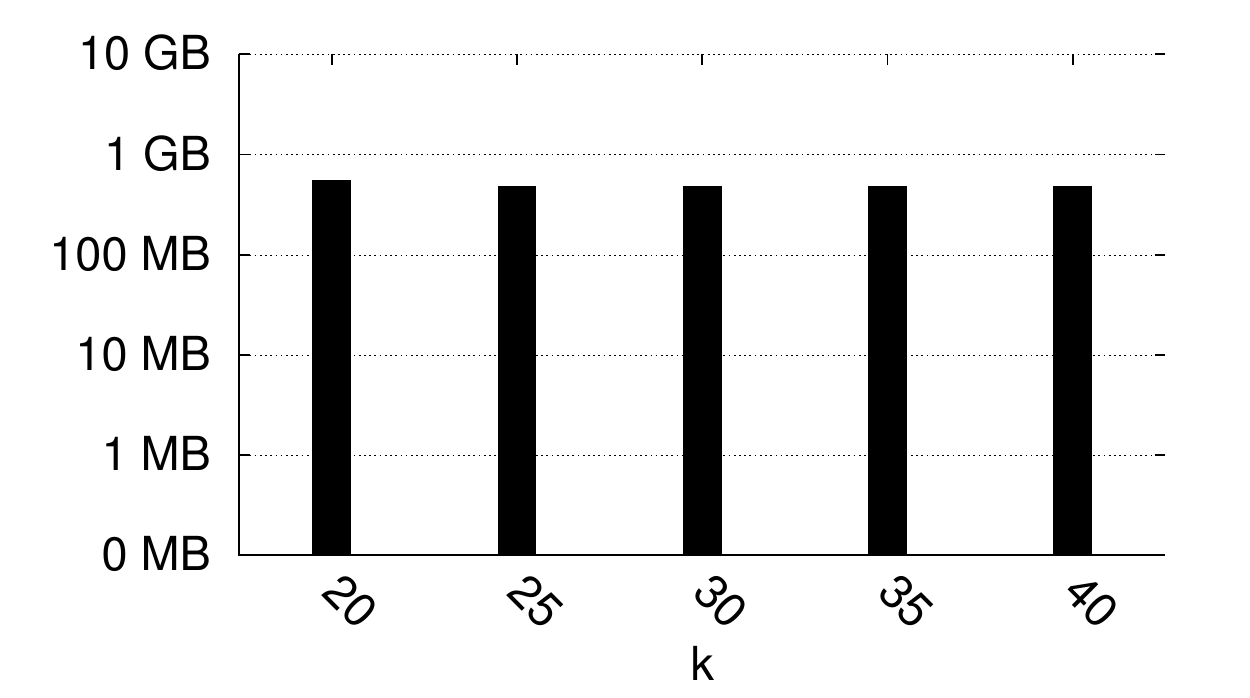}
}
\hspace{-1em}
\subfigure[\cnr]{
	\includegraphics[width=0.5\columnwidth, height=2cm]{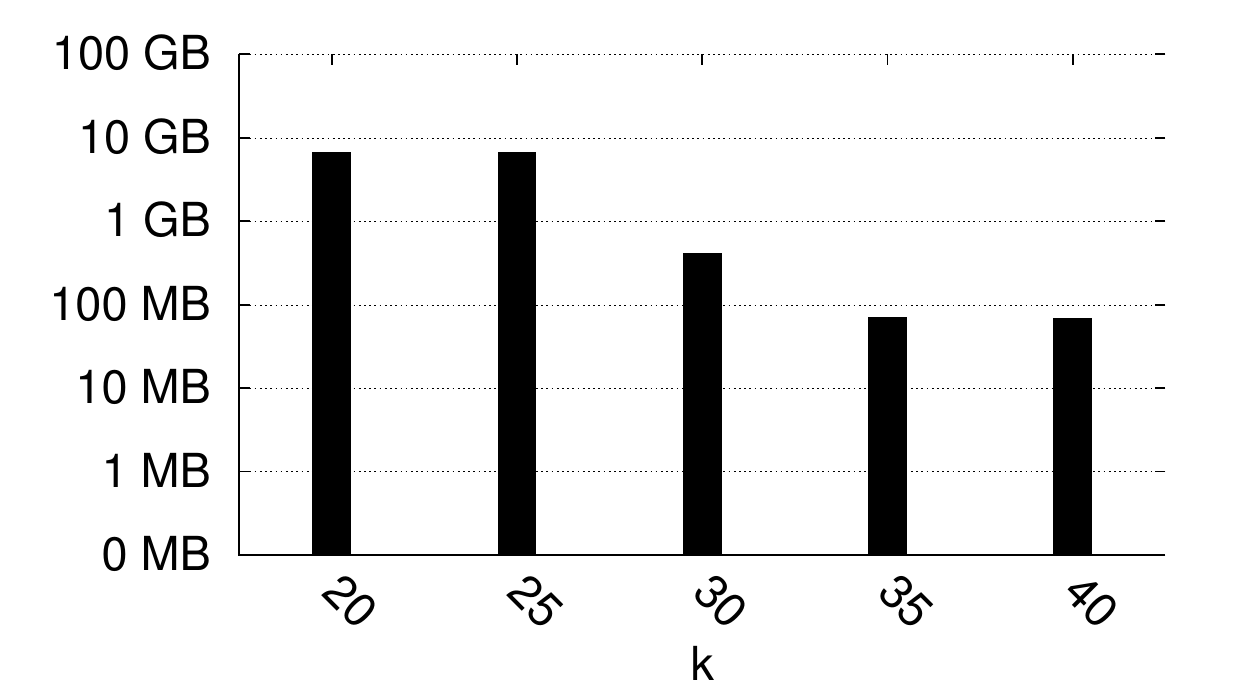}
}
\end{tabular}
\end{center}
\topcaption{Memory Usage of Algorithm \mra}
\label{fig:exp:ms}
\end{figure}

\stitle{Testing the Memory Usage.} \reffig{exp:ms} presents the memory usage of algorithm \mra on different datasets while verying parameter $k$. Note that the memory usage of all the four algorithms \bsa, \nsa, \gsa, and \mra are very close since they follow the same framework to cut the graph recursively, and the memory usage mainly depends on the size of graph and the number of partitioned graphs which are the same for all the four algorithms. Therefore, we only show the memory usage of \mra.  

As we can see from the figure, the memory usage on most of the datasets has a decreasing trend when $k$ increases. The reasons are twofold. First, recall that all vertices with degree less than $k$ are firstly removed for each subgraph during the algorithm. A higher $k$ must lead to more removed vertices, and therefore, makes the graph smaller. Second, when $k$ increases, the number of $k$-\vccs decreases and the number of partitioned graphs also decreases, which lead to a smaller memory usage. For some cases, the memory usage increases when $k$ increases, this is because when $k$ increases, the sparse certificate of the graph becomes denser, which requires more memory. Generally, the memory usage keeps in a reasonable range in all testing cases. 

\subsection{Scalability Evaluation}
\label{subsec:exp:scal}

\begin{figure}[t!]
\begin{center}
\begin{tabular}[h]{c}
\hspace{-1em}
\subfigure[Vary Size (\google)]{
	\includegraphics[width=0.5\columnwidth]{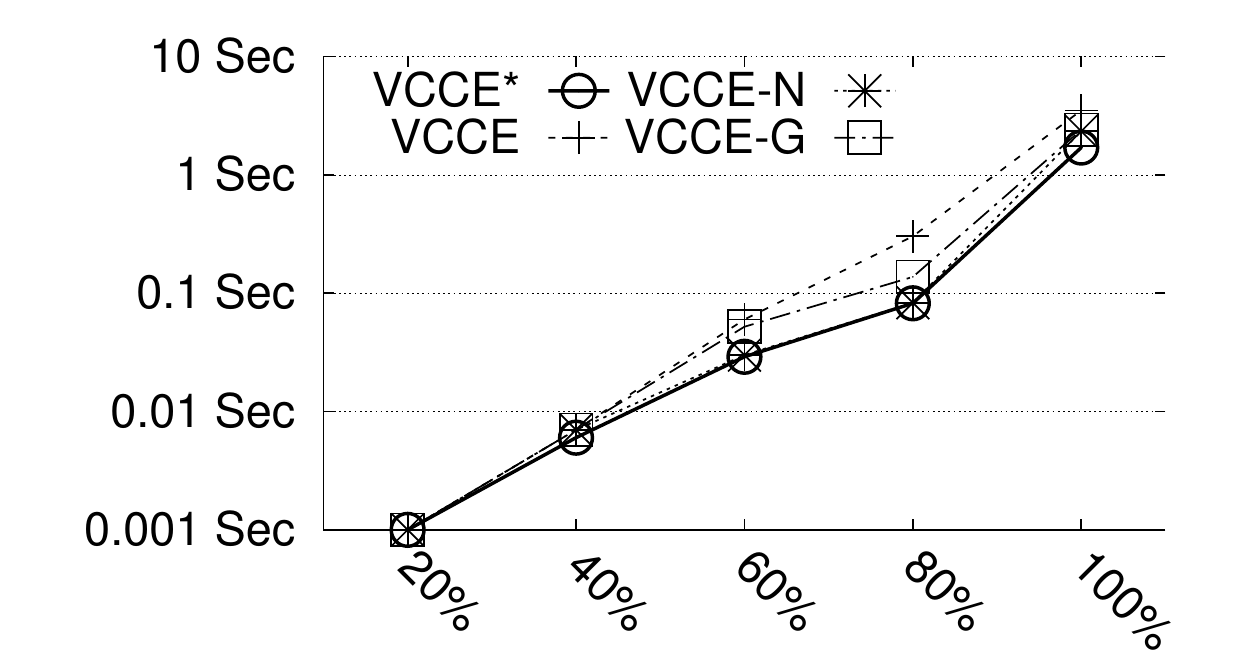}
}
\hspace{-1em}
\subfigure[Vary Density (\google)]{
	\includegraphics[width=0.5\columnwidth]{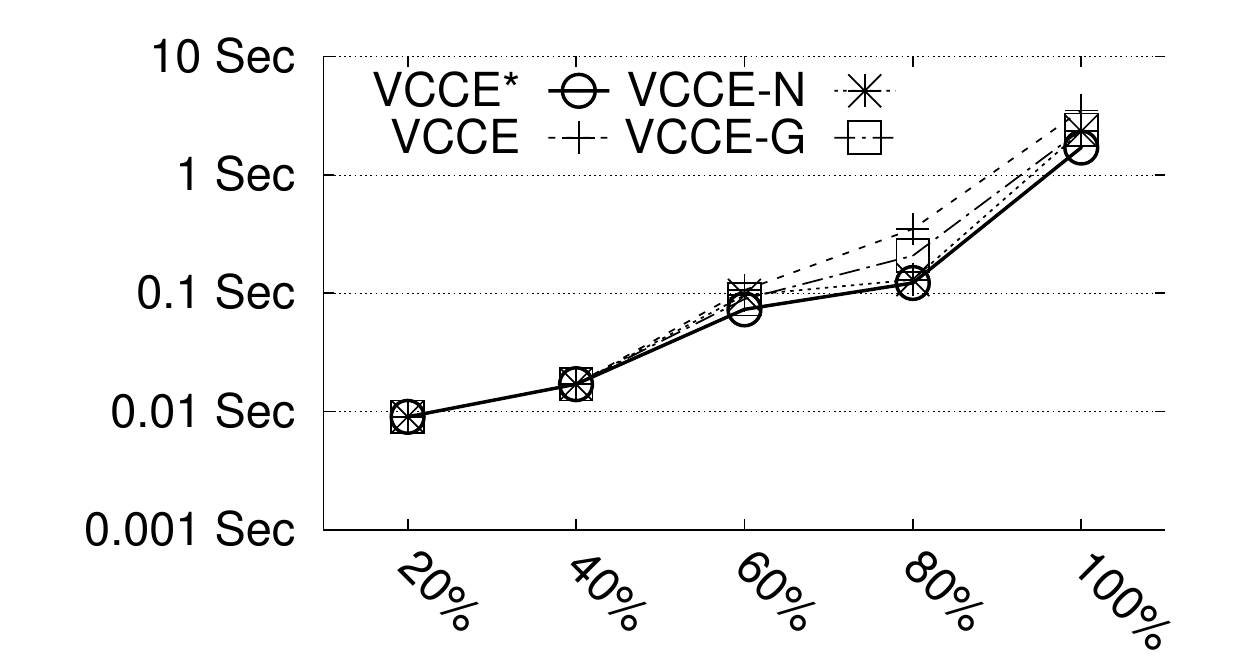}
}
\\
\hspace{-1em}
\subfigure[Vary Size (\cit)]{
	\includegraphics[width=0.5\columnwidth]{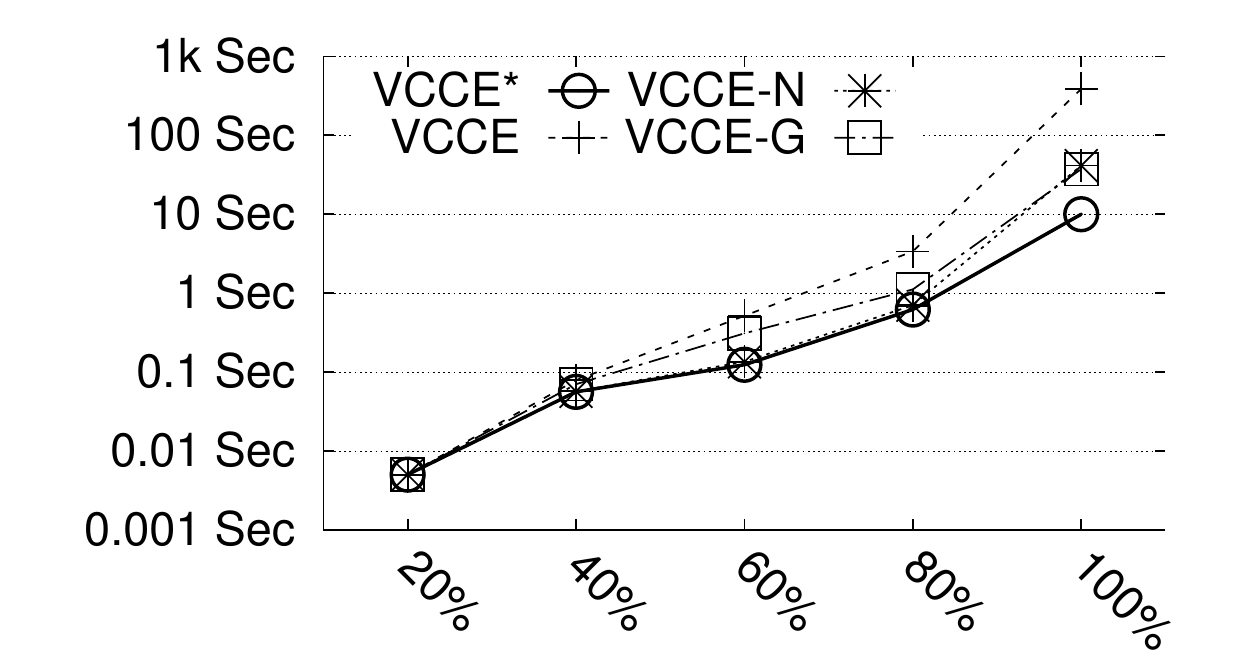}
}
\hspace{-1em}
\subfigure[Vary Density (\cit)]{
	\includegraphics[width=0.5\columnwidth]{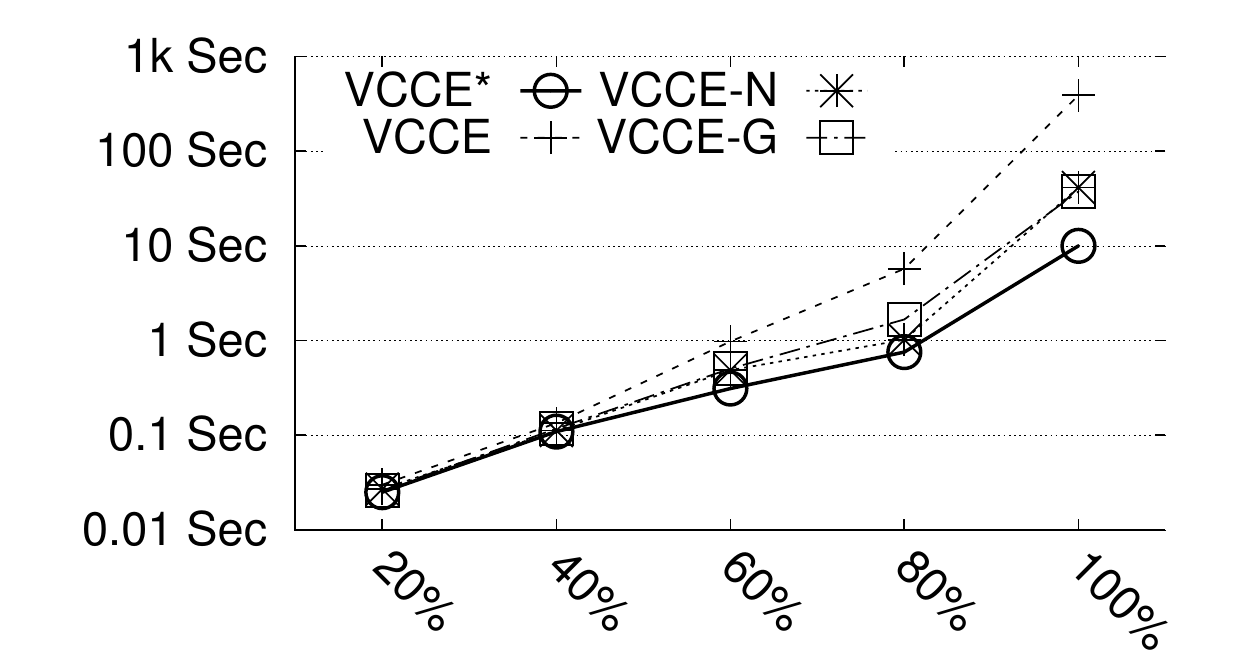}
}

\end{tabular}
\end{center}
\topcaption{Scalability evaluation}
\label{fig:exp:scal}
\end{figure}

In this section, we test the scalability of our proposed algorithms. We choose two real graph datasets \google and \cit as representatives. For each dataset, we vary the graph size and graph density by randomly sampling vertices and edges respectively from $20\%$ to $100\%$. When sampling vertices, we get the induced subgraph of the sampled vertices, and when sampling edges, we get the incident vertices of the edges as the vertex set. Here, we only report the processing time. The memory usage is linear to the number of vertices. We compare four algorithms in the experiments. The experimental results are shown in \reffig{exp:scal}.

\reffig{exp:scal} (a) and (c) report the processing time of our proposed algorithms when varying $|V|$ in \google and \cit respectively. When $|V|$ increases, the processing time for all algorithms increases. \mra performs best in all cases and \bsa is the worst one. The curves in \reffig{exp:scal} (b) and (d) report the processing time of our algorithms in \google and \cit respectively when varying $|E|$. Similarly, \mra is the fastest algorithm in all tested cases. In addition, the gap between \mra and \bsa increases when $|E|$ increases. For example, in \cit, the processing time of \mra is $20$ times faster than that of \bsa when $|E|$ reaches $100\%$. The result shows that our pruning strategy is effective and our optimized algorithm is more efficient and scalable than the basic algorithm.

\subsection{Case Study}
\label{subsec:exp:cs}

\begin{figure}[t!]
\begin{center}
\vspace*{-0.25cm}
\begin{tabular}[t]{c}
\hspace{-2em}
\subfigure[\small{4-\vccs}]{
	\includegraphics[width=0.99\columnwidth]{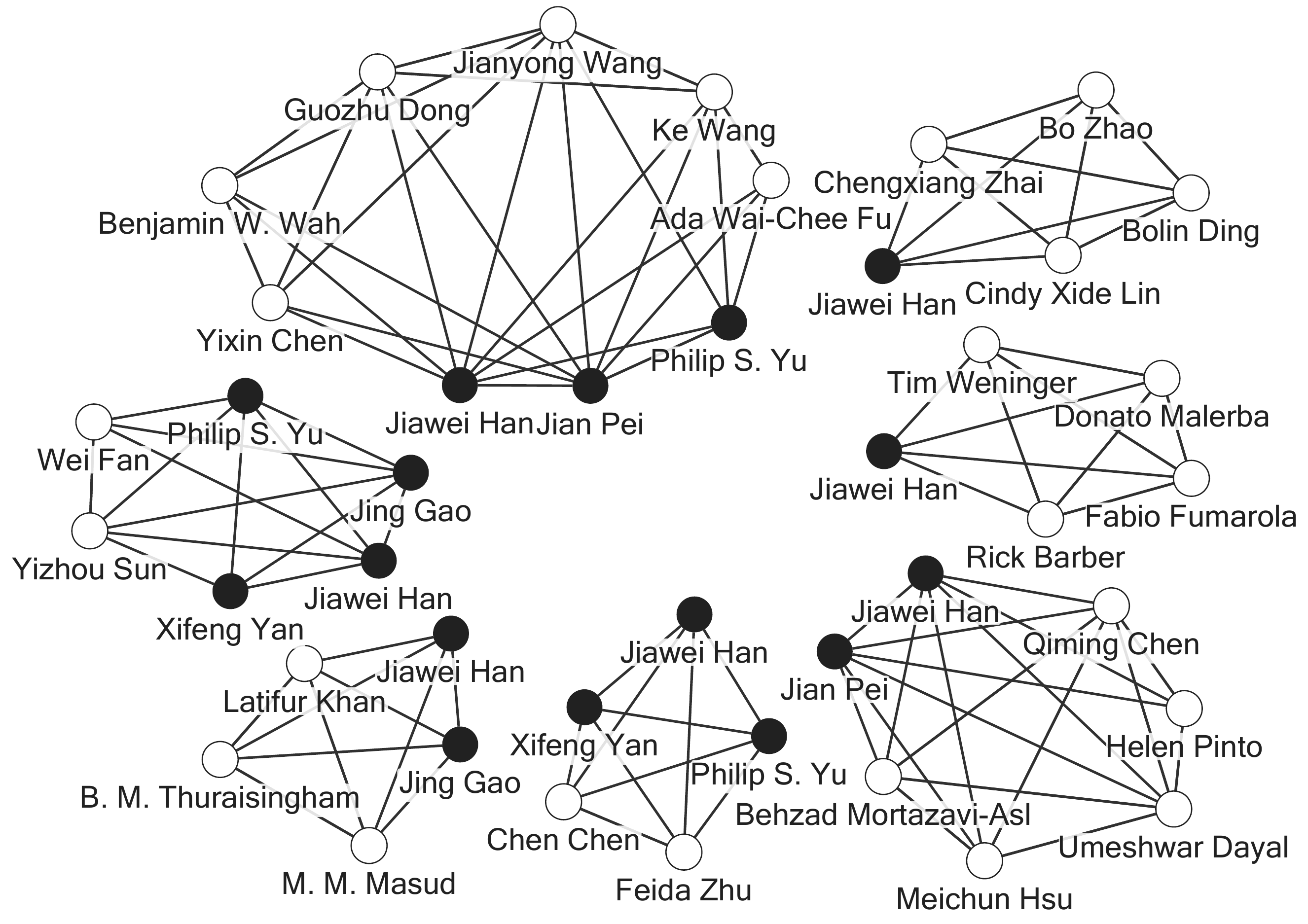}
}
\\
\vspace*{-0.3cm}
\hspace{-1em}
\subfigure[\small{4-\eccs and 4-core}]{
	\includegraphics[width=0.92\columnwidth]{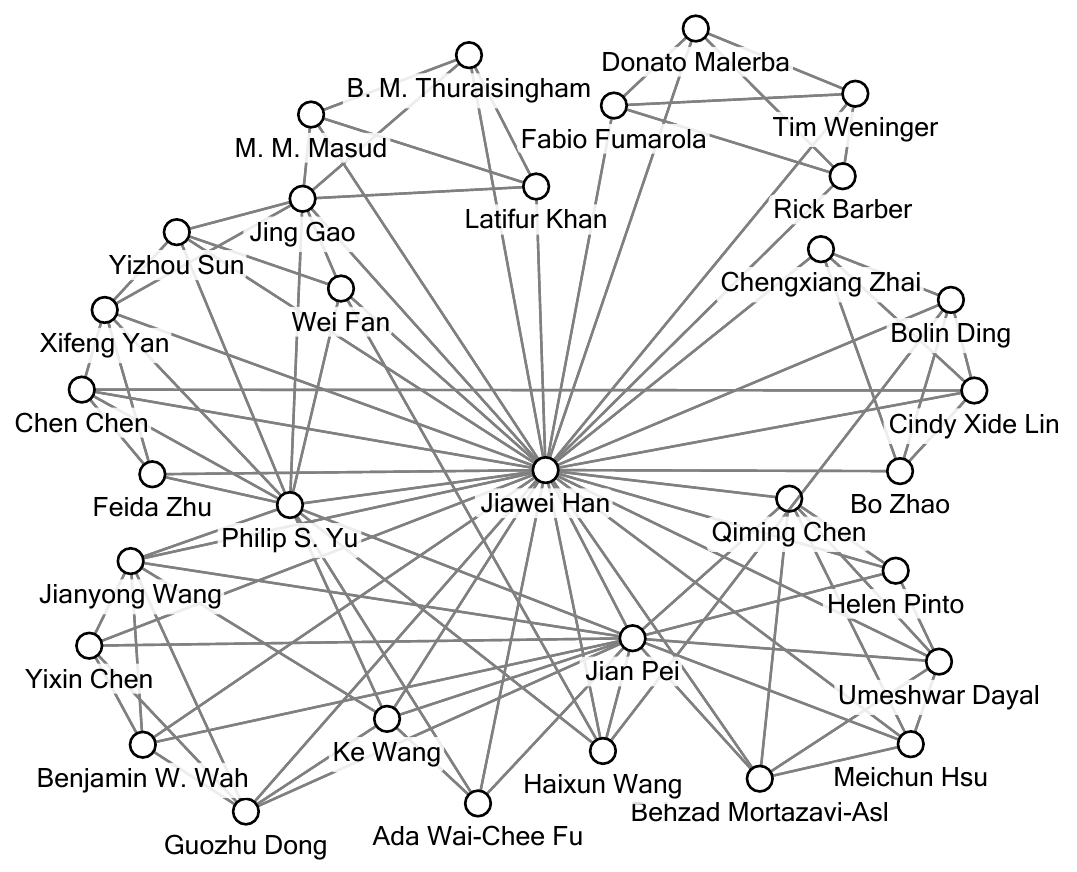}
}
\end{tabular}
\end{center}
\vspace*{-0.1cm}
\topcaption{Case Study on DBLP}
\vspace*{-0.1cm}
\label{fig:exp:caseStudy}
\end{figure}

In this experiment, we conduct a case study to visually reveal the quality of $k$-\vccs. We construct a collaboration graph from the \dblp (http://dblp.uni-trier.de/). Each vertex of the graph represents an author and an edge exists between two authors if they have $3$ or more common publications. Since the $k$-\vccs in the original graph are too large to show, we pick up the author `Jiawei Han' and his neighbors. We use the induced subgraph of these vertices to conduct this case study.

We query all $4$-\vccs containing `Jiawei Han' and the result is shown in \reffig{exp:caseStudy} (a). We obtain seven $4$-\vccs. Each of them is dense. A vertex is marked black if it appears in more than one $4$-\vccs. The result clearly reveals different research groups related to `Jiawei Han'. Some core authors appear in multiple groups, such as `Philip S. Yu' and 'Jian Pei'. As a comparison, we get only one $4$-\ecc, which contains the authors in all $4$-\vccs. The result of $4$-core is the same as $4$-\ecc in this experiment. Note that author `Haixun Wang' appears in $4$-\eccs and $4$-cores but not in any $4$-\vcc. That means he has cooperations with some authors in the research groups of `Jiawei Han', but those authors are from different identified groups and he does not belong to any of these groups. 


\section{Related Work}
\label{sec:relatedwork}
\vspace*{-0.1cm}
\stitle{Cohesive Subgraph.} Efficiently computing cohesive subgraphs, based on a designated metric, has drawn a large number of attentions recently. \cite{Chang2013F, Cheng2011} propose algorithms for maximal clique problem. However, the definition of clique is too strict. For relaxation, some clique-like metrics are proposed. These metrics can be roughly classified into three categories, 1) global cohesiveness, 2) local degree and triangulation, and 3) connectivity cohesiveness.

\sstitle{1. Global Cohesiveness.} \cite{luce1950connectivity} defines an $s$-clique model to relax the clique model by allowing the distance between two vertices to be at most $s$, i.e., there are at most $s-1$ intermediate vertices in the shortest path. However, it does not require that all intermediate vertices are in the $s$-clique itself. To handle this problem, \cite{Mokken1979} proposes an $s$-club model requiring that all intermediate vertices are in the same $s$-club. In addition, $k$-plex allows each vertex in such subgraph can miss at most $k$ neighbors \cite{Berlowitz2015, Stephen1978}. Quasi-clique is a subgraph with $n$ vertices and at least $\gamma*{n \choose 2}$ edges \cite{Zeng2006}. These kinds of metrics globally require the graph to satisfy a designated density or other certain criterions. They do not carefully consider the situation of each vertex and thus cannot effectively reduce the free rider effect \cite{Wu2015}.

\sstitle{2. Local Degree and Triangulation.} $k$-core is maximal subgraph in which each vertex has a degree at least $k$ \cite{Batagelj2003}. It only requires the minimum number of neighbors for each vertex in the graph to be no smaller than $k$. Therefore the number of non-neighbors for each vertex can be large. It is difficult to retain the familiarity when the size of a $k$-core is large. $k$-truss has also been investigated in \cite{cohen2008trusses, Wang2012, Shao2014}. It requires that each edge in a $k$-truss is contained in at least $k-2$ triangles. $k$-truss has similar problem as $k$-core. It is easy to see that two cohesive subgraphs can be simply identified as one $k$-truss if they share mere one edge. In addition, $k$-truss is invalid in some popular graphs such as bipartite graphs. This model is also independently defined as $k$-mutual-friend subgraph and studied in \cite{zhao2012large}. Based on triangles, \textit{DN}-graph \cite{Wang2010} with parameter $k$ is a connected subgraph $G'(V',E')$ satisfying following two conditions: 1) Every connected pair of vertices in $G'$ shares at least $\lambda$ common neighbors. 2) For any $v \in V \backslash V', \lambda (V' \cup \{v\}) < \lambda$; and for any $v \in V', \lambda (V' \backslash \{v\}) \le \lambda$. Such metric seems a little strict and generates many redundant results. Also, detecting all \textit{DN}-graphs is NP-Complete. Approximate solutions are given and the time complexity is still high \cite{Wang2010}.

\sstitle{3. Connectivity Cohesiveness.} In this category, most of existing works only consider the edge connectivity of a graph. The edge connectivity of a graph is the minimum number of edges whose removal disconnect the graph. \cite{Yan2005} first proposes algorithm to efficiently compute frequent closed k-edge connected subgraphs from a set of data graphs. However, a frequent closed subgraph may not be an induced subgraph. To conquer this problem, \cite{Zhou2012} gives a cut-based method to compute all $k$-edge connected components in a graph. To further improve efficiency, \cite{Chang2013} proposes a decomposition framework for the same problem and achieves a high speedup in the algorithm. 

\stitle{Vertex Connectivity.} \cite{even1975network} proves that the time complexity of computing maximum flow reaches $O(n^{0.5}m)$ in an unweighted directed graph while each vertex inside has either a single edge emanating from it or a single edge entering it. This result is used to test the vertex connectivity of a graph with given $k$ in $O(n^{0.5}m^2)$ time. \cite{SEven1975} further reduces the time complexity of such problem to $O(k^3m+knm)$. There are also other solutions for finding the vertex connectivity of a graph \cite{zvi1980, esfahanian1984}. To speed up the computation of vertex connectivity, \cite{CheriyanKT93} finds a sparse certificate of $k$-vertex connectivity which can be obtained by performing scan-first search $k$ times.


\section{Conclusions}
\label{sec:conclusion}
Cohesive graph detection is an important graph problem with a wide spectrum of applications. Most of existing models will cause the free rider effect that combines irrelevant subgraphs into one subgraph. In this paper, we first study the problem of detecting all $k$-vertex connected components in a given graph where the vertex connectivity has been proved as a useful formal definition and measure of cohesion in social groups. This model effectively reduces the free rider effect while retaining many good structural properties such as bounded diameter, high cohesiveness, bounded graph overlapping, and bounded subgraph number. We propose a polynomial time algorithm to enumerate all $k$-\vccs via a overlapped graph partition framework. We propose several optimization strategies to significantly improve the efficiency of our algorithm. We conduct extensive experiments using seven real datasets to demonstrate the effectiveness and the efficiency of our approach. 
\balance

{\small
\bibliographystyle{abbrv}
\bibliography{scRef}
}

\end{document}